\documentclass[12pt]{article}
\usepackage{amsmath,amsthm,amssymb,amsfonts}
\usepackage{graphicx, float}
\usepackage{enumerate, enumitem, bigints}
\usepackage{tikz}
\usetikzlibrary{patterns}
\usepackage{pgfplots}
\usepackage{color} 

\usepackage{chngcntr}
\usepackage{apptools}

\newcommand{\blind}{1}

\addtolength{\oddsidemargin}{-.5in}%
\addtolength{\evensidemargin}{-.5in}%
\addtolength{\textwidth}{1in}%
\addtolength{\textheight}{-.3in}%
\addtolength{\topmargin}{-.8in}%

\newtheorem{proposition}{Proposition}[section]
\newtheorem{theorem}{Theorem}[section]
\newtheorem{lemma}{Lemma}[section]
\newtheorem{corollary}{Corollary}[section]
\newtheorem{remark}{Remark}[section]

\newcommand{\tabby}{\hspace{10pt}}

\newcommand{\vmat}[2]{\begin{pmatrix} #1 \\ #2 \end{pmatrix}}
\newcommand{\hmat}[2]{\begin{pmatrix} #1 & #2 \end{pmatrix}}

\newcommand{\barr}{\operatorname{Barr}}
\newcommand{\argmin}{\operatorname{argmin}}
\newcommand{\logdet}{\log\det}
\newcommand{\mappy}[1]{\overset{#1}{\longmapsto}}
\newcommand{\pdev}[2]{\frac{\partial #1}{\partial #2}}

\newcommand{\tp}{\intercal}
\newcommand*{\Scale}[2][4]{\scalebox{#1}{$#2$}}
\newcommand{\pp}{\mathrm{p}}
\usepackage{natbib}

\newcommand\numberthis{\addtocounter{equation}{1}\tag{\theequation}}

\usepackage[caption = false]{subfig} 

\usepackage{algorithm}
\captionsetup[algorithm]{labelformat=empty}
\usepackage{algorithmic}

\newlength\myindent
\setlength\myindent{2em}

\usepackage{setspace}
\usepackage{hyperref}
\hypersetup{colorlinks,
		citecolor=blue,
		linkcolor=blue,
		urlcolor=black}

\usepackage{xr}

\usepackage{titling}
\predate{}
\postdate{}

\begin{document}

\def\spacingset#1{\renewcommand{\baselinestretch}%
{#1}\small\normalsize} \spacingset{1}


\if1\blind
{
  \title{\bf Approximate Post-Selective Inference for Regression with the Group LASSO}
  \author{
    Snigdha Panigrahi\\
    Department of Statistics, University of Michigan\\
    and  \\
    Peter W. MacDonald \\
    Department of Statistics, University of Michigan\\
    and \\
    Daniel Kessler\\
    Departments of Statistics and Psychiatry, University of Michigan
      }
    \date{}  
  \maketitle
} \fi

\if0\blind
{
  \bigskip
  \bigskip
  \bigskip
  \begin{center}
    {\LARGE\bf Approximate Post-Selective Inference for Regression with the Group LASSO}
\end{center}
  \medskip
} \fi

\bigskip
\begin{abstract}
After selection with the Group LASSO (or generalized variants such as the overlapping, sparse, or standardized Group LASSO), inference for the selected parameters is unreliable in the absence of adjustments for selection bias.
In the penalized Gaussian regression setup, existing approaches provide adjustments for selection events that can be expressed as linear inequalities in the data variables.
Such a representation, however, fails to hold for selection with the Group LASSO and substantially obstructs the scope of subsequent post-selective inference.
Key questions of inferential interest---for example, inference for the effects of selected variables on the outcome---remain unanswered.
In the present paper, we develop a consistent, post-selective, Bayesian method to address the existing gaps by deriving a likelihood adjustment factor and an approximation thereof that eliminates bias from the selection of groups.
Experiments on simulated data and data from the Human Connectome Project demonstrate that our method recovers the effects of parameters within the selected groups while paying only a small price for bias adjustment.
\end{abstract}


\newpage
\spacingset{1.3} 

\section{Introduction}
\label{sec:intro}

Modern statistical analysis of complex data does not always fit into the classical inferential framework.
Instead, analysis splits into two distinct stages: a {\em selection} stage, in which we formulate a model and hypotheses of interest; and an {\em inference} stage, in which we estimate parameters, quantify uncertainties, and test hypotheses under our selected model.
However, classical coverage guarantees for credible and confidence intervals fail dramatically when data used for selection is naively re-used for inference; see \cite{berk2013valid, exact_lasso, benjamini2020selective} and references therein.
Simple procedures like data splitting preserve validity of post-selective inference if two subsets of independent data are used for the selection and inference stages.
However, discarding all the data used in the selection stage is inefficient, and there is potential for methodology which can safely reuse a portion of the information from selection for valid inference.
By adopting a conditional approach, recent tools in selective inference reduce this wastefulness when selection algorithms are applied to data prior to statistical modeling and inference.
As examples, conditional methods by \cite{suzumura2017selective, zhao2019selective, gao2020selective, tanizaki2020computing} provide adjustments for selection bias in different post-selective inference tasks.

To briefly outline the essence of the conditional approach, consider a variable selection algorithm applied to data $Y$ with $p$ (fixed) predictors $X$.
Suppose the algorithm returns as output $\widehat{E}(Y)$, a subset of $\{1,2,\ldots,p\}$ such that each index represents a variable (column of $X$), and therefore $\widehat{E}(Y)$ is associated with a model selected from $2^p$ possibilities.
After selecting a given (nonempty) subset of variables $E$, our interest lies in inference for a set of post-selective parameters
$$\Theta_E = \left\{\theta^{(j)}_{E} \in \mathbb{R},\  j\in E\right\}$$
using the observed data $\left\{ Y = y \right\}$.
Post-selective inference for $\Theta_E$ proceeds by conditioning on the selection event 
$$\left\{ \widehat{E}(Y)=E \right\},$$
which is motivated by the fact that conditional coverage implies unconditional coverage under selection.
That is, for a set $C\left(\widehat{E}\left(Y\right),Y\right) \subseteq \mathbb{R}$ that depends on both the output of selection and the data, \cite{exact_lasso} note
\[
  \mathbb{P}\left(\theta^{(j)}_{\widehat{E}\left(Y\right)} \in C\left(\widehat{E}\left(Y\right),Y\right) \;\lvert \; \widehat{E}\left(Y\right)=E \right) \geq 1 - \alpha \ \Rightarrow  \ \mathbb{P}\left(\theta^{(j)}_{\widehat{E}\left(Y\right)} \in C\left(\widehat{E}\left(Y\right),Y\right) \right)\geq 1-\alpha.
\]
As it turns out, in many problems, it may be more convenient to instead condition on $\mathcal{A}_E$, where $\mathcal{A}_E \subseteq \left\{ \widehat{E}(Y) = E \right\}$.
Inference remains valid by the argument above even when conditioning on a proper subset of the selection event.

After conditioning on $\mathcal{A}_E$, post-selective inference may be carried out either via a frequentist or Bayesian framework.
A Bayesian framework in \cite{yekutieli2012adjusted, selective_bayesian} relies on a conditional, {\em selection-informed} likelihood to facilitate posterior sampling.
The Bayesian approach is especially useful for inferring about vector-valued parameters or functions thereof, and permits flexible inference in different models informed by selection, for instance, models with unknown noise variance.
In the remainder of this paper, we develop an approximate Bayesian method for post-selective inference with the Group LASSO and several of its variants.
The setup for our problem is the following: (i) the covariates act naturally in groups known a priori in the analysis; (ii) only a few of these groups of covariates affect the outcome, captured effectively by a parsimonious model.
A well-developed class of algorithms in \cite{group_lasso, jacobGroupLassoOverlap2009, simon2013sparse} among others exploits this knowledge about the covariate space in order to select regression models with grouped covariates.
Post-selective inference in the resulting selection-informed models is a natural next step that is addressed by our method.

We structure our paper as follows.
We begin by situating the contributions of our method in the post-selective literature in Section \ref{sec:relatedwork}.
In Section \ref{sec:select-corr-framework}, we present a selection-informed posterior that serves as the methodological centerpiece of our Bayesian framework.
In Section \ref{sec:methodology}, we obtain an exact value for a likelihood adjustment factor in our selection-informed posterior to eliminate bias from the selection of groups.
We then apply a generalized version of Laplace-type approximations to obtain feasible sampling updates from an approximate version of the posterior.
In Section \ref{sec:generalizations}, we generalize our method to models informed by different forms of grouped covariates.
We establish large-sample theory for our approximate Bayesian methods in Section \ref{sec:large-sample-theory}.
We demonstrate the potential of our methods in numerical experiments and in a human neuroimaging application in Section \ref{sec:empirical-analysis}.
Proofs for our technical results and further supporting information are included in the appendices.

\section{Related Work and Contributions}
\label{sec:relatedwork}
Below, we identify the challenges that preclude the use of existing methods and their immediate modifications for the Group LASSO.
Fixing some notation, suppose we observe $n$ independent instances of a scalar response variable $Y_i$ and a $p$-dimensional vector of covariates $X_i$ for $i=1, \ldots, n$.
We denote the response vector by $y = \begin{bmatrix} Y_1 & \ldots & Y_n \end{bmatrix}^{\intercal} \in \mathbb{R}^n$ and the corresponding covariate matrix by $X= \begin{bmatrix} X_1 & \ldots & X_n \end{bmatrix}^{\intercal} \in \mathbb{R}^{n \times p}$.
Let $\mathcal{G}$ be a prespecified partition of our $p$ covariates into $G$ groups.
We refer to a group in $\mathcal{G}$ by lowercase $g$, and use $\lvert g \rvert \in \mathbb{N}$ to denote the number of covariates within group $g$.

For now, we consider non-overlapping groups defined by the partition $\mathcal{G}$.
Suppose, we solve the familiar Group LASSO objective in \cite{group_lasso}:
\begin{equation}
\label{glasso:wor}
	\widehat{\beta}^{(\mathcal{G})} \in \operatorname*{argmin}_{\beta} \frac{1}{2} \lVert y - X\beta \rVert_2^2 + \sum_{g \in \mathcal{G}} \lambda_g \lVert \beta_g \rVert_2.
\end{equation}
For each group $g \in \mathcal{G}$, $\beta_g \in \mathbb{R}^{\lvert g \rvert}$ is a vector with entries corresponding to the covariates in group $g$, and $\lambda_g \geq 0$ is a tuning parameter for this group.
The solution of \eqref{glasso:wor} returns a subset of the covariates
\begin{equation*}
	\widehat{E}(y)= \operatorname{supp}(\widehat{\beta}^{(\mathcal{G})}),
\end{equation*}
where the support of the Group LASSO estimator respects the prespecified groups.
Specifically, the selected set of covariates can be written as a union of selected groups in $\mathcal{G}$, which we denote by $\mathcal{G}_{\widehat{E}}$ in the paper.

\subsection{From Atoms to Groups}
\label{sec:atoms-to-groups}

Consider the special case when each covariate forms an atomic group of size $1$, simply called an atom.
In this case, the objective in \eqref{glasso:wor} agrees with the widely studied LASSO.
Established in \cite{exact_lasso}, the selection of atoms is a polyhedral event, which means that the event is expressible as a union of linear inequalities in the response vector $y$.
Existing methods for post-selective inference in \cite{exact_lasso, suzumura2017selective, liu2018more} readily adjust for bias from selection by reducing the polyhedral conditioning event to univariate truncations.
However, when we transition from atoms to nontrivial groups, the selection of promising groups no longer results in polyhedral events.
We visualize this fact through Figure \ref{fig:select-viz} in a simple example, when the sample size and the number of predictors are both equal to $2$.
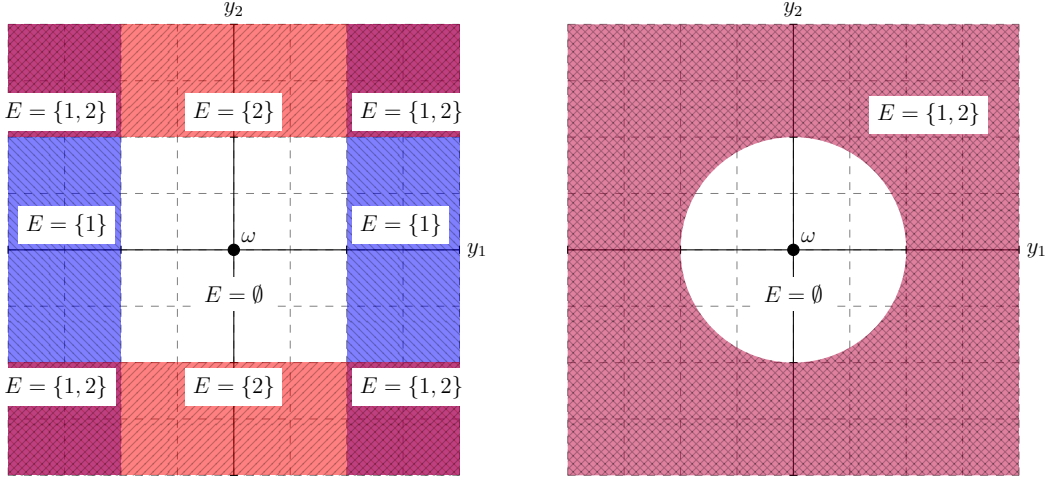
\begin{figure}[h]
  \centering
  \begin{tikzpicture}[scale=1.5, every node/.style={scale=0.72}]
    \draw[step=0.5, gray, dashed, help lines]  (-2, -2) grid (2, 2) ; 
    \foreach \x in {-2, ..., 2}{
      \draw (\x, -1pt) -- (\x, 1pt);
    }
    \foreach \y in {-2, ..., 2}{
      \draw (-1pt, \y) -- (1pt, \y);
    }

    \draw  (-2, 0) -- (2, 0) node[right]{$y_1$} ;
    \draw  (0, -2) -- (0, 2) node[above]{$y_2$} ;

    \filldraw (0, 0) circle[radius=0.05,color=black,fill=black] node [above right] {$\omega$};

    \begin{scope}[shift={(0, 0)}]

      \begin{scope}[opacity=0.5]
        \fill[preaction={fill=blue},pattern=north west lines] (1, -2) rectangle (2, 2) ;
        \fill[preaction={fill=blue},pattern=north west lines] (-2, -2) rectangle (-1, 2) ;
        \fill[preaction={fill=red},pattern=north east lines] (-2, -2) rectangle (2, -1) ;
        \fill[preaction={fill=red},pattern=north east lines] (-2, 1) rectangle (2, 2) ;
      \end{scope}

      \node[below=0.5cm, fill=white] at (0,0) {$E = \emptyset$} ;

      \node[above right=0.1cm, fill=white] at (1,0) {$E = \left\{ 1 \right\}$} ;
      \node[above left=0.1cm, fill=white] at (-1, 0) {$E = \left\{ 1 \right\}$} ;

      \node[above=0.1cm, fill=white] at (0,1) {$E = \left\{ 2 \right\}$} ;
      \node[below=0.1cm, fill=white] at (0, -1) {$E = \left\{ 2 \right\}$} ;

      \node[below left=0.1cm, fill=white] at (-1,-1) {$E = \left\{ 1, 2 \right\}$} ;
      \node[above left=0.1cm, fill=white] at (-1, 1) {$E = \left\{ 1, 2 \right\}$} ;
      \node[above right=0.1cm, fill=white] at (1,1) {$E = \left\{ 1, 2 \right\}$} ;
      \node[below right=0.1cm, fill=white] at (1,-1) {$E = \left\{ 1, 2 \right\}$} ;
    \end{scope}
  \end{tikzpicture} \qquad
  \begin{tikzpicture}[scale=1.5,every node/.style={scale=0.72}]
    \draw[step=0.5, gray, dashed, help lines]  (-2, -2) grid (2, 2) ; 
    \foreach \x in {-2, ..., 2}{
      \draw (\x, -1pt) -- (\x, 1pt);
    }
    \foreach \y in {-2, ..., 2}{
      \draw (-1pt, \y) -- (1pt, \y);
    }

    \draw  (-2, 0) -- (2, 0) node[right]{$y_1$} ;
    \draw  (0, -2) -- (0, 2) node[above]{$y_2$} ;

    \filldraw (0, 0) circle[radius=0.05,color=black,fill=black] node [above right] {$\omega$};

    \begin{scope}[shift={(0, 0)}]

      \begin{scope}[opacity=0.5]
        \fill[preaction={fill=purple},pattern=crosshatch,even odd rule] (-2, -2) rectangle (2, 2) (0, 0) circle [radius=1] ;
      \end{scope}

      \node[fill=white, below=0.5cm] at (0,0) {$E = \emptyset$} ;
      \node[fill=white] at (1.2,1.2) {$E = \left\{ 1, 2 \right\}$} ;
    \end{scope}
  \end{tikzpicture}
  \caption{
    Geometry of selection events for the LASSO (left) and the Group LASSO (right) as a function of $(y_1, y_2)$.
    In the case of no randomization, the origin $\omega$ is the point $\left( 0, 0 \right)$, but see text surrounding \eqref{glasso} for discussion of how randomization affects the origin.
    The LASSO can select any of $\emptyset, \left\{ 1 \right\}, \left\{ 2 \right\}$, or $\left\{ 1, 2 \right\}$ as predictors; the Group LASSO can select $\emptyset$ or $\left\{ 1, 2 \right\}$.}
  \label{fig:select-viz}
\end{figure}

In Figure \ref{fig:select-viz}, we contrast the geometry of the selection event for the LASSO and the Group LASSO.
Our covariates are the columns of an identity matrix and the tuning parameters are set to be $1$.
Under the grouped scenario, the two orthogonal covariates comprise a single group, whereas in the LASSO each covariate is an atom.
For the LASSO, the event leading to the selection of the active set $E$ is a union of rectangular regions in the plane that are highlighted by the same color.
A proper subset of this event is obtained by further restricting the signs of selected covariates to match the observed signs.
This proper subset leads to one of the rectangular regions in the plane; see left panel.
In contrast, the selection of an active group for the Group LASSO is depicted as the complement of a ball in the right panel, which can no longer be characterized as a union of polyhedral events.

\subsection{Post-selective Inference for Overall Group Effects} \label{sec:overall-effects}

We now turn to recent results by \cite{loftus2015selective,yang2016selective} which provide post-selective inference for overall effects of groups after solving the Group LASSO.
Introducing some more notation, let $\mathcal{U}$ represent an operator that maps the vector $v$ to the unit vector $(\|v\|_2)^{-1} \cdot v$.
For a linear subspace $\mathbb{S} \subseteq \mathbb{R}^n$ and its orthogonal complement $\mathbb{S}^{\perp}$, let $\mathcal{P}_{\mathbb{S}}$ and $\mathcal{P}_{\mathbb{S}^\perp}$ denote the projection operators onto the subspaces $\mathbb{S}$ and $\mathbb{S}^{\perp}$ respectively.

Consider solving the Group LASSO in \eqref{glasso:wor}.
Let $E$ be the realized value of $\widehat{E}$.
Suppose we assume the simple model: $y \sim \mathcal{N}_n(\mu, \sigma^2 I_n)$ for inference.
For $g\in \mathcal{G}_{E}$ and the subspace $\mathbb{S}_{g, E} = \text{span}\left(\mathcal{P}_{X^{\perp}_{\mathcal{G}_{E}\setminus g}}(X_g)\right)$, consider the post-selective parameter
$$\mu_g= \|\mathcal{P}_{\mathbb{S}_{g,E}}(\mu)\|_2 \in \mathbb{R}$$
after selection with the Group LASSO.
A significant $p$-value under the null hypothesis $H_{0,g}: \mu_g=0$ confirms the presence of the selected group $g$ in the estimated support; confidence bounds for $\mu_g$ measure the overall effect of the selected group $g$.
The main result by \cite{yang2016selective} allows post-selective inference for $\mu_g$ through a conditional distribution for $\|\mathcal{P}_{\mathbb{S}_{g,\widehat{E}}}(y)\|_2$, which we revisit in the following lemma.
\begin{lemma}\emph{\cite{yang2016selective}.}
\label{yang}
Conditional upon the event
\begin{equation}
\label{cond:event:yang}
\left\{y: \widehat{E}(y)=E, \; \mathcal{U}\left(\mathcal{P}_{\mathbb{S}_{g,\widehat{E}}}(y)\right) = U_g, \; \mathcal{P}_{\mathbb{S}^\perp_{g,\widehat{E}}}(y) = W_g\right\},
\end{equation}
the density for $\left\lVert {\mathcal{P}_{\mathbb{S}_{g,\widehat{E}}}(y)} \right\rVert_2$ at $\gamma_g$ is proportional to:
\begin{equation*}
\gamma_g^{|g|-1}\cdot \exp\left(-\frac{1}{2\sigma^2}\left(\gamma_g^2 - 2\gamma_g \cdot U_g^{\intercal}\mu\right)\right)\cdot 1_{\mathcal{R}_{E}}(\gamma_g),
\end{equation*}
where $\mathcal{R}_{E}= \left\{\gamma_g \in \mathbb{R}^{+} : \widehat{E}\left(U_g \gamma_g+ W_g\right)=E\right\}$.
\end{lemma}
Applying a probability integral transform to the conditional law in Lemma \ref{yang} produces a pivot for
$$\widetilde\mu_g =U_g^{\intercal}\mu$$
conditional upon \eqref{cond:event:yang}.
In particular, $\widetilde\mu_g$ agrees with $\mu_g$ under the null $H_{0,g}$.
As a result, a pivot for the former parameter yields a valid $p$-value for testing $H_{0,g}$ and coincides with the $p$-value in \cite{loftus2015selective}.
The two parameters, however, do not coincide in general.
Instead, the following relation holds by Cauchy-Schwarz:
$$\mu_g \geq \widetilde\mu_g,$$
and inverting the pivot thus provides a conservative, lower confidence bound for $\mu_g$.

As emphasized in the preceding discussion in Section \ref{sec:relatedwork}, the selection of groups is no longer a polyhedral event.
Indeed, the difficulties posed by the non-polyhedral geometry for the Group LASSO continue to persist; we note that the truncating region $\mathcal{R}_{E}$ in Lemma~\ref{yang} lacks a closed-form description.
The outlined approach overcomes this barrier to some extent by narrowing down the scope of inferential targets to conducting inference on overall group effects, in which case one only needs to explore a positive half-line to approximately compute $\mathcal{R}_{E}$.
Besides lacking an upper confidence bound for overall group effects, the existing approach does not yield interval estimates for the effects of the individual variables in the selected groups, nor does it identify a joint distribution for the individual effects.

\subsection{Our method} \label{sec:our-method}

Closing existing gaps, we develop a Bayesian method for post-selective inference after conducting a randomized selection of groups.
Our method accounts for the non-polyhedral selection of groups via a likelihood adjustment factor and characterizes a {\em selection-informed} posterior distribution based on the likelihood adjustment.
Working with a selection-informed posterior grants us the flexibility to estimate the individual effects within selected groups and functions thereof through credible regions and general posterior expectations.
At the same time, a randomized selection of groups permits us a very simple and exact characterization for the truncating region in the conditional likelihood that makes subsequent inference easily feasible.

The randomizing variable, or {\em randomization}, used for the selection of groups is a Gaussian variable throughout the remainder of the paper and is hereafter termed Gaussian randomization.
Our methods based on Gaussian randomization are closely related to data carving proposals in \cite{optimal_inference, selective_bayesian, panigrahi2019carving, schultheiss2021multicarving}, wherein selection operates only on a subset of the samples, but subsequent post-selective estimation uses the full data.
The variance of the Gaussian randomization is a tuning parameter analogous to the split proportion in data splitting, providing us control of the relative amount of information used in selecting a group-sparse model and estimating the post-selective parameters.
The information borrowed by our approach from selection yields credible intervals which are shorter than the corresponding interval estimates for data splitting with roughly the same information split.

In the following sections, we develop our method in two steps.
First, we account for the selection of groups, a non-polyhedral event, via an exact likelihood adjustment factor.
Rather than characterizing the non-polyhedral event in the space of the data and randomization variables, we develop a change of variables in the polar coordinate system that is motivated by ideas in \cite{harris2016selective}.
Our choice of conditioning event is characterized by simple sign constraints in the new variables which yields us a selection-informed posterior distribution.
In the next step, we propose a computationally feasible surrogate for this posterior distribution with a (generalized) Laplace approximation.
Our Bayesian method delivers statistically consistent estimates using a selection-informed posterior distribution for the group-sparse parameter vector.
Continuing with our simple grouped example introduced earlier and depicted in the right panel of Figure \ref{fig:select-viz}, Figure \ref{fig:posterior-concentration} serves to preview the distribution of samples from our surrogate selection-informed posterior by varying the number of observations $n=25,50,100,250,500,1000$.
Assuredly, as $n$ increases, the support of the posterior concentrates around the true bivariate parameter, suggesting the statistical consistency of our method which we justify theoretically in Section~\ref{sec:large-sample-theory}.

\begin{figure}[h]
  \centering
  \includegraphics[width=\textwidth]{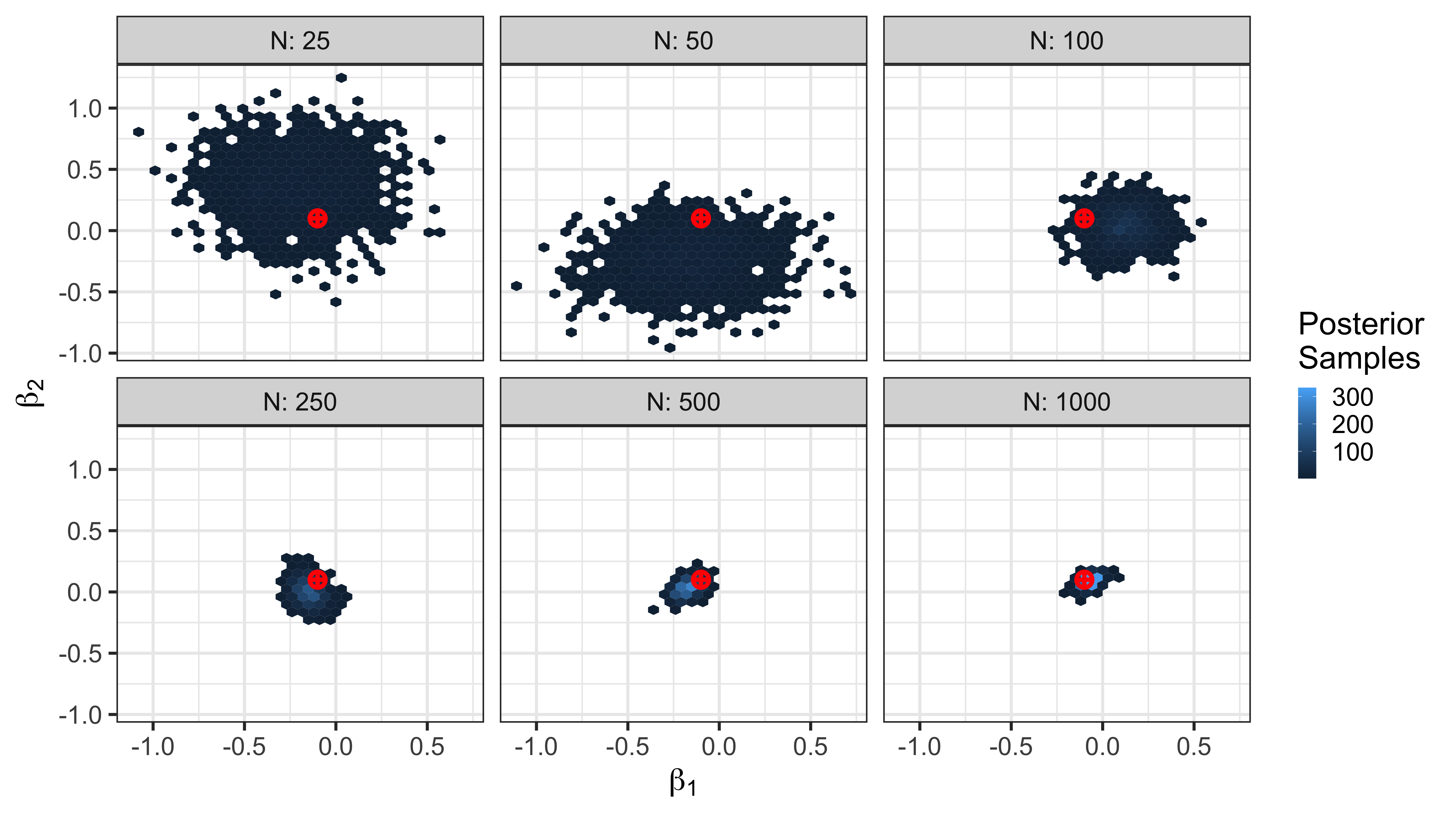}
  \caption{
    Distribution of samples from a surrogate of our selection-informed posterior based on a bivariate model with a single group of orthogonal covariates for varying $n$.
    The color of each hex depicts the number of posterior samples drawn in that region, while a red crosshair indicates the location of the true value of the parameter $\beta =
    \protect\begin{bmatrix}
      -0.1 & 0.1
    \protect\end{bmatrix}
    ^{\intercal}$.
    }
  \label{fig:posterior-concentration}
\end{figure}

\section{Framework for Selection-informed Inference}
\label{sec:select-corr-framework}

\subsection{Basic Setup}
\label{sec:basic-setup}

Consistent with a post-selective setting, under a fixed $X$ regression, our problem proceeds in two stages: first, we select promising groups by optimizing an objective inducing grouped sparsity; then, we specify a group-sparse linear model informed by the groups of covariates learned from the previous stage.
We begin describing our methods for non-overlapping groups, based on a prespecified partition $\mathcal{G}$ of $p$ covariates into $G$ groups.
Later in Section~\ref{sec:generalizations}, we present a larger category of grouped sparsities that our methods successfully encompass.

Using notation defined in Section~\ref{sec:relatedwork}, we consider the Group LASSO objective in \eqref{glasso:wor} with an added randomization term:
\begin{equation}
\label{glasso}
	\widehat{\beta}^{(\mathcal{G})} \in \operatorname*{argmin}_{\beta} \left\{ \frac{1}{2} \lVert y - X\beta \rVert_2^2 + \sum_{g \in \mathcal{G}} \lambda_g \lVert \beta_g \rVert_2 - \omega^{\intercal}\beta \right\}.
\end{equation}
In the final term of this objective, $\omega \sim \mathcal{N}_p(0,\Omega)$ is a Gaussian randomization variable independent of the data.
As indicated previously, perturbing the optimization problem with a Gaussian randomization variable $\omega$ introduces a tradeoff between selection and inference, giving the user the ability to reserve some information from the selection stage to perform inference.
Additional discussion of the role of randomization in \eqref{glasso} and the relation of randomization variance with data splitting is given in Section~\ref{sec:empirical-analysis}.
Hereafter, focusing on the solution of \eqref{glasso}, we let
$$\widehat{E}= \operatorname{supp}(\widehat{\beta}^{(\mathcal{G})})$$
be the support of the randomized Group LASSO estimator and let $\mathcal{G}_{\widehat{E}}$ be the selected groups of covariates according to the estimated support.

Revisiting the example in Section \ref{sec:relatedwork} and the related Figure \ref{fig:select-viz}, we note that the selection regions have a similar geometry with the added randomization: the randomization instance $\omega$ merely shifts the origin in both panels of the figure.
Elaborating on the example, suppose that $n=p=2$, $X= I_n$, the identity matrix and $\lambda_g=\lambda=1$.
Let $\omega\sim \mathcal{N}_2(0,I_2)$.
The stationary mapping for the optimization in (\ref{glasso}) is given by:
\begin{equation} \label{kkt_cond}
	y + \omega = \widehat{\beta}^{(\mathcal{G})} +  z,
\end{equation}
where the final term is the subgradient of the Group LASSO penalty evaluated at the solution.
In the case that the single group of two covariates is not selected, $\widehat{\beta}^{(\mathcal{G})}=0$ and $\|z\|_2 <1$.
For any fixed $\omega$, the collection of $y$ that leads to no selection is equivalent to a ball centered at $\omega$.
The complement of this region characterizes the selection of the group of size $2$.
Instead, when we have two atoms (groups with size $1$ each), i.e., we solve a randomized version of the LASSO in \cite{randomized_response}, the selection of a subset of covariates with fixed signs is equivalent to linear inequalities in $y$ and $\omega$.
Once again, shifting the origin in the left panel of Figure \ref{fig:select-viz} to $\omega$ depicts the polyhedral selection event for the randomized LASSO.
Recent work by \cite{selective_bayesian, panigrahi2019approximate} provide a likelihood after the randomized LASSO; but, these methods are not applicable to the present problem, because the selection of groups with size greater than $1$ does not admit a polyhedral form.

In the next stage, we specify a model after selection.
Letting $E$ be the realized value of $\widehat{E}$, we model our response as
\begin{equation} \label{sel_model}
	y \sim \mathcal{N}_n(X_E \beta_E, \sigma^2 I_n).
\end{equation}
The selected model in \eqref{sel_model}, using the solution of \eqref{glasso}, may indeed be misspecified.
Suppose, the true distribution for our response is
\begin{equation*}
  y \sim \mathcal{N}_n(X\bar{\beta},\bar\sigma^2 I_n),
\end{equation*}
for some $\bar{\beta} \in \mathbb{R}^{p}$, $\bar\sigma^2 \in \mathbb{R}^+$.
Under the true distribution, our method delivers inference for the best linear representation of the response mean using the selected covariates $X_E$, regardless of model misspecification.
We elaborate further on this point when we turn to a selection-informed likelihood based on the model in \eqref{sel_model}.

By analogy with \cite{yekutieli2012adjusted, panigrahi2018scalable}, we pose a selection-informed prior for our post-selective parameter
\begin{equation} \label{prior}
	\beta_E \sim \pi_E
\end{equation}
to invoke a Bayesian framework after selection.
Both the selected model and the selection-informed prior depend on the observed data.
However, they do so only through the selection event accounted for by conditioning.

Two additional comments are in order here to highlight the flexibilities our framework offers in terms of defining models post selection.
One, without loss of generality, we are able to assume that the variance parameter $\sigma$ is known.
Following the lines of \cite{selective_bayesian}, the Bayesian approach we take easily accommodates the case of unknown variance by treating it as a parameter and posing a joint selection-informed prior on $\beta_E$ and $\sigma$.
Two, the model in \eqref{sel_model} can be more general.
For instance, our model may be parameterized by a realization for $\widehat{E}'$ specified through an arbitrary function of $\widehat{E}$, as is pursued in \citet{panigrahi2020integrative}.
The adjustment for selection in any case must account for the non-polyhedral selection of promising groups.
We proceed with the selected model \eqref{sel_model} to simplify the development.

\subsection{Selection-informed Posterior} \label{sec:selinf-posterior}

In this section we define a selection-informed posterior using the model for $y$ in \eqref{sel_model} and the prior in \eqref{prior}.
Through the remaining paper, we use the notation $\pp(\mu, \Sigma; b)$ for a normal density function with mean $\mu$ and covariance $\Sigma$ evaluated at $b$.
To lay out the selection-informed posterior, we define the data variables involved in selection: (i) the randomization variable $\omega$;
(ii) the least squares estimate based on $(X_E, y)$
\begin{equation*}
	\widehat{\beta}_E = \left( X_E^{\intercal}X_E \right)^{-1} X_E^{\intercal} y;
\end{equation*}
(iii) the orthogonal projection $N_E = X^{\intercal} \left( I_n - X_E \left(X_E^{\intercal}X_E \right)^{-1}X_E^{\intercal} \right) y$, assuming $X_E$ is full rank.
Under the selected model, $\widehat{\beta}_E$ has mean $\beta_E$, and $N_E$ has mean $0$.
Denote the covariance of $\widehat{\beta}_E$ by $\Sigma_E = \sigma^2 \left( X_E^{\intercal}X_E \right)^{-1}$ and let $\Psi_E$ be the covariance of $N_E$.
Ignoring selection, the usual joint likelihood for these three variables is given by:
\begin{equation} \label{like_unadjust}
	\pp(\beta_E, \Sigma_E ; \widehat{\beta}_E)\cdot  \pp(0, \Psi_E; N_E) \cdot \pp(0, \Omega; \omega).
\end{equation}
The factorization follows directly from the independence between $\widehat{\beta}_E$ and $N_E$ and their independence with the randomization variable $\omega$.

Accounting for the selection-informed nature of our model, the likelihood we work with conditions upon an event:
\begin{equation} \label{sel:event:gen}
\mathcal{A}_E \subseteq \{(\widehat{\beta}_E, N_E, \omega): \widehat{E} = E\}.
\end{equation}
The conditioning event $\mathcal{A}_E$ for the group-sparse problem is a proper subset of the selection event $\{\widehat{E} = E\}$ based on the KKT conditions for \eqref{glasso}, which we define precisely in Theorem~\ref{thm1}.
After truncating realizations to the event $\mathcal{A}_E$, the corresponding conditional likelihood is proportional to
\begin{equation*}
	 \pp(\beta_E, \Sigma_E ; \widehat{\beta}_E)\cdot  \pp(0, \Psi_E; N_E) \cdot \pp(0, \Omega; \omega)  \cdot \mathbf{1}_{\mathcal{A}_E}(\widehat{\beta}_E, N_E,\omega).
\end{equation*}

Now we state our selection-informed likelihood, derived after conditioning further upon the ancillary statistic $N_E$ and integrating out the randomization variable $\omega$.
Up to proportionality in $\beta_E$, the expression for this likelihood agrees with
\begin{equation}\label{sel:lik}
         \left\{ \mathbb{P}(\mathcal{A}_{E; N_E} \  | \ \beta_E) \right\}^{-1} \cdot \pp(\beta_E, \Sigma_E ; \widehat{\beta}_E)\cdot \int \pp(0, \Omega; \omega)  \cdot \mathbf{1}_{\mathcal{A}_{E; N_E}}(\widehat{\beta}_E,\omega) d\omega;
\end{equation}
$ \mathcal{A}_{E;N_E}$ is the set of $\widehat{\beta}_E$, $\omega$ that result in the event $\mathcal{A}_E$ for the fixed instance $N_E$ and
$$\mathbb{P}(\mathcal{A}_{E; N_E} \  | \ \beta_E)=  \int  \pp(\beta_E, \Sigma_E ; \widehat{\beta}_E) \cdot \pp(0, \Omega; \omega)  \cdot \mathbf{1}_{\mathcal{A}_{E; N_E}}(\widehat{\beta}_E,\omega) d\omega d\widehat{\beta}_E,$$
where $\mathbb{P}(\cdot \  | \ \beta_E)$ highlights the dependence of the probability for the event $ \mathcal{A}_{E;N_E}$ on the post-selective parameters $\beta_E$.

More generally, the selection-informed likelihood in \eqref{sel:lik} yields us inference for the best linear representation of the response mean in terms of the selected covariates.
To note this generality, say, our response is generated from the linear model: $y \sim \mathcal{N}_n(X\bar\beta,\bar\sigma^2 I_n)$.
For any fixed set $E$ with size $\left\lvert E \right\rvert$, we have $\widehat{\beta}_E \sim \mathcal{N}_{|E|} (\beta_E, \Sigma_E)$ and $N_E \sim \mathcal{N}_{p}(\xi_E, \Psi_E)$ where
$$\beta_E = \underset{b\in \mathbb{R}^{|E|}}{\text{argmin}} \|X\bar\beta - X_E b\|_2^2,$$
and $\xi_E =X^{\intercal}  \mathcal{P}_{\mathbb{S}_E^{\perp}}(X\bar\beta)$ for $\mathbb{S}_E = \text{span}(X_E)$.
The likelihood for $\widehat{\beta}_E$, $N_E$ and $\omega$ factorizes as
\begin{equation} \label{like_unadjust_full}
  \pp(\beta_E, \Sigma_E ; \widehat{\beta}_E)\cdot  \pp(\xi_E, \Psi_E; N_E) \cdot \pp(0, \Omega; \omega).
\end{equation}
Treating $\xi_E$ as nuisance parameters post selection, we condition on $N_E$ to obtain a likelihood function of $\beta_E$, free from nuisance parameters.
It is easy to see that our selection-informed likelihood assumes the expression in \eqref{sel:lik} and inference proceeds identically, regardless of model misspecification.

Using \eqref{sel:lik} in conjunction with our selection-informed prior \eqref{prior} ultimately yields us our selection-informed posterior distribution for $\beta_E$:
\begin{equation} \label{select_posterior}
	   \pi_E(\beta_E\  | \ \widehat{\beta}_E, N_E)\propto (\mathbb{P}(\mathcal{A}_{E; N_E} \  | \ \beta_E))^{-1}\cdot\pi_E(\beta_E)\cdot  \pp(\beta_E, \Sigma_E ; \widehat{\beta}_E).
\end{equation}
In contrast to frequentist approaches that rely on a truncated law (for example, the approach taken by \cite{harris2016selective}), the selection-informed posterior is fully supported on $\mathbb{R}^d$ where $d$ is the number of parameters within our selection-informed model.
For this reason, Bayesian post-selective inference successfully avoids some of the difficulties faced when sampling from truncated laws with complicated support sets.

Evaluating $\mathbb{P}(\mathcal{A}_{E; N_E}\  | \ \beta_E)$, called the likelihood {\em adjustment factor} in \cite{selective_bayesian}, is rightly recognized as the prime technical hurdle in carrying out selection-informed Bayesian inference.
Through a careful choice for the conditioning event $\mathcal{A}_{E; N_E}$ after applying a change of variables, we develop mathematical expressions for the adjustment factor and update estimates in a feasible analytic form for the group-sparse problem.
We take this up in the next section.

\section{Selection-informed Bayesian Methods}
\label{sec:methodology}

\subsection{An Exact Adjustment Factor}
\label{subsec:adjustmentfactor}

We begin by identifying an exact theoretical value for the likelihood adjustment factor in our selection-informed posterior \eqref{select_posterior}.
With a slight abuse of notation, hereon, we denote the event $\mathcal{A}_{E; N_E}$ by $\mathcal{A}_E$ and the associated adjustment factor by $\mathbb{P}(\mathcal{A}_{E} \  | \ \beta_E)$.

Back to our primary case study of the (non-overlapping) Group LASSO, we express the non-zero solution for a selected group, $g \in \mathcal{G}_E$, in the polar coordinate system as
$$\widehat{\beta}^{(\mathcal{G})}_g= \|\widehat{\beta}^{(\mathcal{G})}_g\|_2\cdot  \frac{\widehat{\beta}^{(\mathcal{G})}_g}{\|\widehat{\beta}^{(\mathcal{G})}_g\|_2}= \gamma_g u_g,
$$
where $\gamma_g= \|\widehat{\beta}^{(\mathcal{G})}_g\|_2 > 0$ is a scalar representing the size of the selected group and $u_g= \frac{\widehat{\beta}^{(\mathcal{G})}_g}{\|\widehat{\beta}^{(\mathcal{G})}_g\|_2}$ is a vector in $\mathbb{R}^{\lvert g \rvert}$ satisfying $\lVert u_g \rVert_2 = 1$.
The stationary mapping for \eqref{glasso} at the solution is given by:
$$
\omega = X^{\intercal} X
    \begin{pmatrix}
      \left( \gamma_g u_g \right)_{g \in \mathcal{G}_E} \\
      0
    \end{pmatrix}
    - X^{\intercal} X_E \widehat{\beta}_E - N_E +
    \begin{pmatrix}
      \left( \lambda_g u_g \right)_{g \in \mathcal{G}_E} \\
      \left( \lambda_g z_g \right)_{g \in -\mathcal{G}_E}
    \end{pmatrix}.
$$
$\begin{pmatrix}\left( \lambda_g u_g \right)^{\intercal}_{g \in \mathcal{G}_E} & \left( \lambda_g z_g \right)^{\intercal}_{g \in -\mathcal{G}_E}\end{pmatrix}^{\intercal}$ is the subgradient of the $\ell_2$-norm Group LASSO penalty at the solution,
where $z_g$ is a vector in $\mathbb{R}^{\lvert g \rvert}$ satisfying $\lVert z_g \rVert_2 < 1$ for each non-selected group $g \in -\mathcal{G}_E$.
We collect the following optimization variables:
\begin{equation*}
 \begin{aligned}
	\widehat{\gamma} &= (\gamma_g : g \in \mathcal{G}_E)^{\intercal} \in \mathbb{R}^{\lvert \mathcal{G}_E \rvert}, \
	\widehat{\mathcal{U}} = \{ u_g : g \in \mathcal{G}_E \}, \
	\widehat{\mathcal{Z}} = \{ z_g : g \in -\mathcal{G}_E \},
 \end{aligned}
\end{equation*}
calling their respective realizations $\gamma$, $\mathcal{U}$ and $\mathcal{Z}$.
Letting $\operatorname{diag}(\cdot)$ operate on an ordered collection of matrices and return the corresponding block diagonal matrix, we fix $U = \operatorname{diag}\left( \left(u_g\right)_{g \in \mathcal{G}_E}\right)$.
Then, based on the stationary mapping from the Group LASSO, define
\begin{equation} \label{KKT_map}
  \phi_{\widehat{\beta}_E}(\widehat{\gamma}, \widehat{\mathcal{U}}, \widehat{\mathcal{Z}})
  = A \widehat{\beta}_E + B(\widehat{\mathcal{U}})\widehat{\gamma}  + c(\widehat{\mathcal{U}}, \widehat{\mathcal{Z}}),
\end{equation}
where
$$
A = -X^{\intercal}X_E,\;  B= X^{\intercal}X_E U, \; c = - N_E +
	    \begin{pmatrix}
	      \left( \lambda_g u_g \right)^{\intercal}_{g \in \mathcal{G}_E} &
	      \left( \lambda_g z_g \right)^{\intercal}_{g \in -\mathcal{G}_E}
	    \end{pmatrix}^{\intercal}.
$$

Theorem \ref{thm1} gives us the expression for the adjustment factor after applying the change of variables:
\begin{equation}
\label{CoV}
\omega \to (\widehat{\gamma}, \widehat{\mathcal{U}}, \widehat{\mathcal{Z}}), \ \text{ where } \ (\widehat{\gamma}, \widehat{\mathcal{U}}, \widehat{\mathcal{Z}})= \phi_{\widehat{\beta}_E}^{-1}(\omega).
\end{equation}
For each $g \in \mathcal{G}_E$, we construct the orthonormal basis completion for $u_g$ that we denote by $\bar{U}_g\in \mathbb{R}^{\lvert g \rvert \times \lvert g \rvert-1}$.
Further, $x>t$ for $x\in \mathbb{R}^k$ and $t\in \mathbb{R}$ simply means that the inequality holds in a coordinate-wise sense.

\begin{theorem} \label{thm1}
Consider the conditioning event
\begin{equation*}
\label{cond:event:GL}
\mathcal{A}_E = \{\widehat{E}=E, \ \widehat{\mathcal{U}}= \mathcal{U}, \ \widehat{\mathcal{Z}}= \mathcal{Z}\}.
\end{equation*}
Define the following matrices
	\begin{equation*}
	       \begin{aligned}
			&\;\;\;\;\;\;\;\;\;\;\bar{U} = \operatorname{diag}\left( \left(\bar{U}_g\right)_{g \in \mathcal{G}_E}\right), \Gamma = \operatorname{diag}\left( \left( \widehat{\gamma}_g I_{\lvert g \rvert - 1} \right)_{g \in \mathcal{G}_E} \right),  \Lambda = \operatorname{diag}\left( \left(\lambda_g I_{|g|}\right)_{g \in \mathcal{G}_E}\right).
	     \end{aligned}
     \end{equation*}
Then, we have
	\begin{equation*}
	        \begin{aligned}
		\mathbb{P}(\mathcal{A}_E \  | \  \beta_E) &\propto\int \int \pp(\beta_E, \Sigma_E ; \widehat{\beta}_E) \cdot \exp \left\{ -\frac{1}{2}( A \widehat{\beta}_E + B\widehat{\gamma}  + c)^{\intercal} \Omega^{-1} (A \widehat{\beta}_E + B\widehat{\gamma}  + c) \right\} \\
		&\;\;\;\;\;\;\;\;\;\;\;\;\;\;\;\;\;\;\;\;\;\;\;\;\;\;\;\;\;\;\;\;\;\;\;\;\;\;\;\;\;\;\;\;\;\;\;\;\;\;\;\;\;\;\;\;\;\;\;\;\;\;\;\;\; \times J_{\phi_{\widehat{\beta}_E}}(\widehat{\gamma};\mathcal{U}, \mathcal{Z}) \cdot \mathbf{1}(\widehat{\gamma} > 0) d\widehat{\beta}_E d\widehat{\gamma},
		\end{aligned}
	\end{equation*}
	where
	\begin{equation} \label{jacobian_final}
		J_{\phi_{\widehat{\beta}_E}}(\widehat{\gamma};\mathcal{U}, \mathcal{Z}) = \operatorname{det} \left( \Gamma + \bar{U}^{\intercal}(X_E^{\intercal}X_E)^{-1}\Lambda \bar{U} \right).
	\end{equation}
\end{theorem}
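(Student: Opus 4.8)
The plan is to compute $\mathbb{P}(\mathcal{A}_E \mid \beta_E)$ by starting from the joint density of $(\widehat\beta_E, \omega)$ given in \eqref{like_unadjust} (with $N_E$ fixed), restricting to the event $\mathcal{A}_E$, and then performing the change of variables \eqref{CoV}. Since $\mathcal{A}_E$ fixes $\widehat{\mathcal{U}} = \mathcal{U}$ and $\widehat{\mathcal{Z}} = \mathcal{Z}$, the only \emph{free} optimization variable left in the integral is $\widehat\gamma$, constrained to $\widehat\gamma > 0$ (component-wise), which is exactly the simple sign description of $\mathcal{A}_E$ promised in the text. First I would write $\omega = \phi_{\widehat\beta_E}(\widehat\gamma, \mathcal{U}, \mathcal{Z}) = A\widehat\beta_E + B\widehat\gamma + c$ from \eqref{KKT_map}, substitute into $\pp(\mathbf{0},\Omega;\omega)$, and collect the Gaussian exponent $-\tfrac12(A\widehat\beta_E + B\widehat\gamma + c)^\intercal \Omega^{-1}(A\widehat\beta_E + B\widehat\gamma + c)$. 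The only nontrivial piece is then the Jacobian of the map $\omega \mapsto \widehat\gamma$ for fixed $\mathcal{U}, \mathcal{Z}$; keeping $\widehat\beta_E$ as an explicit integration variable (rather than substituting its own stationarity relation) keeps this Jacobian clean.

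The heart of the argument — and the main obstacle — is computing $J_{\phi_{\widehat\beta_E}}(\widehat\gamma; \mathcal{U}, \mathcal{Z}) = \bigl|\det \partial\omega / \partial \widehat\gamma\bigr|$ \emph{correctly}, because $\widehat\gamma \in \mathbb{R}^{|\mathcal{G}_E|}$ has far fewer coordinates than $\omega \in \mathbb{R}^p$, so this is a Jacobian of a lower-dimensional submanifold map, not a square determinant of $\partial\omega/\partial\widehat\gamma$ directly. The resolution is that within $\mathcal{A}_E$ the variables that actually move are $(\widehat\gamma, \widehat{\mathcal{U}})$ jointly: for each selected group $g$, $u_g$ ranges over the unit sphere $S^{|g|-1}$, which has dimension $|g|-1$, and the total dimension of $(\widehat\gamma, \widehat{\mathcal{U}})$ matches the dimension of the selected block of $\omega$. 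So the right object is the Jacobian of $(\widehat\gamma, \widehat{\mathcal{U}}) \mapsto \omega_{\mathcal{G}_E}$ (the selected-coordinate block), and one factors out the spherical part. Concretely, I would parametrize $u_g$ locally by the tangent directions spanned by the columns of $\bar U_g$ (the orthonormal completion), so that $\partial u_g$ lives in $\operatorname{span}(\bar U_g)$. Then I would differentiate $\omega = X^\intercal X_E U \widehat\gamma - X^\intercal X_E \widehat\beta_E - N_E + (\lambda_g u_g)_{g\in\mathcal{G}_E}$ type expression: the $\partial/\partial\gamma_g$ direction contributes $X^\intercal X_E$ applied to the vector that is $u_g$ in block $g$, and the $\partial/\partial u_g$ directions (along $\bar U_g$) contribute $\gamma_g X^\intercal X_E \bar U_g$ plus $\lambda_g \bar U_g$ in block $g$. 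Assembling these columns into one matrix and taking the determinant of the relevant square block, the $X^\intercal X_E = (X_E^\intercal X_E)$-type factor can be pulled out and absorbed into the proportionality constant (it does not depend on $\beta_E$), leaving the reduced determinant $\det\bigl(\Gamma + \bar U^\intercal (X_E^\intercal X_E)^{-1}\Lambda \bar U\bigr)$ after rearrangement, using the block-diagonal structure of $U$, $\bar U$, $\Gamma$, $\Lambda$.

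Thus I would organize the proof as: (1) write the truncated joint density of $(\widehat\beta_E, \omega)$ under $\mathcal{A}_E$ with $N_E$ fixed; (2) apply the change of variables $\omega \leftrightarrow (\widehat\gamma, \widehat{\mathcal{U}}, \widehat{\mathcal{Z}})$, noting that conditioning on $\widehat{\mathcal{U}} = \mathcal{U}$, $\widehat{\mathcal{Z}} = \mathcal{Z}$ collapses the integral to one over $\widehat\gamma > 0$; (3) identify the transformed event as the orthant $\{\widehat\gamma > 0\}$; (4) carefully derive the Jacobian factor by the spherical-coordinate / tangent-space argument above, checking that the $\widehat\beta_E$-independent pieces drop into the proportionality constant; and (5) combine to match the stated double integral. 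Steps (1)–(3) are essentially bookkeeping with the KKT conditions \eqref{kkt_cond}–\eqref{KKT_map}; step (4) is where care is needed, in particular verifying that the determinant simplifies to \eqref{jacobian_final} rather than a larger expression involving $X^\intercal X$, and confirming the positivity/orientation so that the absolute value can be dropped (each $\gamma_g > 0$ and each $\lambda_g > 0$ guarantee the matrix $\Gamma + \bar U^\intercal (X_E^\intercal X_E)^{-1}\Lambda \bar U$ is positive definite, hence its determinant is genuinely positive). I would double-check the result against the $n=p=2$, $X = I_2$, single-group-of-size-2 toy example from Section~\ref{sec:relatedwork}, where $|g|-1 = 1$, $\Gamma = \widehat\gamma$, $\Lambda = I_2$, $(X_E^\intercal X_E)^{-1} = I_2$, and the Jacobian should reduce to $\widehat\gamma + 1$, matching the radial Jacobian of 2D polar coordinates shifted by the penalty.
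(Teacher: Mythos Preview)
Your approach is essentially the paper's: change variables $\omega \to (\widehat\gamma, \widehat{\mathcal{U}}, \widehat{\mathcal{Z}})$ via the KKT map, parametrize each $u_g$ locally by its tangent space using the columns of $\bar U_g$, assemble the derivative columns, pull out $Q = X_E^\intercal X_E$, and then exploit the block-orthogonality of $[\bar U \;\; U]$ to collapse the determinant to $\det(\Gamma + \bar U^\intercal Q^{-1}\Lambda \bar U)$. Your sanity check against the two-dimensional single-group example (Jacobian $\widehat\gamma + 1$) is precisely Corollary~\ref{cor1}.

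There is, however, one real gap in your justification. You write that the resolution to the dimension mismatch is that ``within $\mathcal{A}_E$ the variables that actually move are $(\widehat\gamma, \widehat{\mathcal{U}})$ jointly'' and that the right object is the Jacobian of $(\widehat\gamma, \widehat{\mathcal{U}}) \mapsto \omega_{\mathcal{G}_E}$. But in $\mathcal{A}_E$ both $\widehat{\mathcal{U}}$ and $\widehat{\mathcal{Z}}$ are \emph{fixed}; only $\widehat\gamma$ varies. The object that appears when you push $\pp(\mathbf{0},\Omega;\omega)$ through the change of variables and then condition is the \emph{full} $p\times p$ Jacobian of $\phi_{\widehat\beta_E}$, evaluated at the observed $(\mathcal{U},\mathcal{Z})$, not an $|E|\times|E|$ block chosen by dimension counting. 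The paper makes this reduction explicit: writing $\phi=(\phi_1,\phi_2)$ for the selected and non-selected coordinate blocks, one has $\partial\phi_1/\partial\mathcal{Z}=0$, so the full derivative matrix is block-triangular, and $\det(\partial\phi_2/\partial\mathcal{Z})=\prod_{g\notin\mathcal{G}_E}\lambda_g^{|g|}$ is constant in $(\widehat\gamma,\beta_E)$ and absorbed into the proportionality. Only \emph{after} this block-triangular step does the full Jacobian reduce to $\det\bigl(\partial\phi_1/\partial(\widehat{\mathcal{U}},\widehat\gamma)\bigr)$, which is the determinant you go on to compute. Without this argument your proposal does not explain why the non-selected block $\omega_{-\mathcal{G}_E}$ and the $\widehat{\mathcal{Z}}$ variables contribute nothing $\widehat\gamma$-dependent. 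Add that one paragraph and your proof matches the paper's.

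A minor aside: in your derivative computation the factor you pull out is $Q=X_E^\intercal X_E$ (square, from the $\phi_1$ block), not $X^\intercal X_E$ as you wrote; the latter is $p\times|E|$ and appears only in $\phi_2$.
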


The non-polyhedral event we set out to analyze is characterized exactly through the adjustment factor in Theorem \ref{cond:event:GL}.
This exact characterization is possible due to the choice of conditioning event as well as the specific form of randomization.
Drawing an analogy to the conditioning event for the LASSO in \cite{exact_lasso}, conditioning on $\widehat{\mathcal{U}}= \mathcal{U}$ is similar to their required conditioning on the sign of each selected coefficient, where we interpret the sign as the univariate special case of vector direction in multiple dimensions.
By conditioning further upon $\widehat{\mathcal{Z}}= \mathcal{Z}$, we avoid an integration over $p-|E|$ variables.
Furthermore, the specific form of randomization in \eqref{glasso} allows us a characterization of our conditioning event $\mathcal{A}_E$ in terms of simple sign constraints on $\widehat{\gamma}$, the sizes ($\ell_2$ norms) of the selected groups.
We note that randomization in other forms will enable a tradeoff in the relative amount of information between selection and inference, but, may not yield a computationally feasible likelihood as obtained above with a linear, additive randomization term in \eqref{glasso}.

In our likelihood adjustment, $J_{\phi_{\widehat{\beta}_E}}(\widehat{\gamma}; \mathcal{U}, \mathcal{Z})$ represents the Jacobian associated with the change of variables $\phi_{\widehat{\beta}_E}(\cdot)$.
Noticing the dependence of this function on simply $\widehat{\gamma}$ and the observed $\mathcal{U}$, we call this function
$J_{\phi}(\cdot, \mathcal{U})$.
In the special case where the design matrix of the selected model is orthogonal, the Jacobian takes a much simpler form which is given in Corollary \ref{cor1}.
\begin{corollary} \label{cor1}
	Suppose $X_E^{\intercal}X_E = I_{|E|}$. Then
	\begin{equation*}
		J_{\phi}(\gamma, \mathcal{U}) = \textstyle\prod_{g \in \mathcal{G}_E} (\gamma_g + \lambda_g)^{(|g|-1)}.
	\end{equation*}
\end{corollary}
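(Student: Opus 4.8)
The plan is to substitute the orthogonality hypothesis into the closed form for the Jacobian supplied by Theorem~\ref{thm1} and then exploit the block structure of all three matrices appearing there. Concretely, setting $X_E^{\intercal}X_E = I_{|E|}$ in \eqref{jacobian_final} immediately reduces the Jacobian to
\[
J_{\phi}(\gamma, \mathcal{U}) = \operatorname{det}\left( \Gamma + \bar{U}^{\intercal}\Lambda \bar{U}\right).
\]
I would then recall that $\Gamma$, $\Lambda$, and $\bar U$ are all block diagonal, indexed by the selected groups $g \in \mathcal{G}_E$, with blocks $\gamma_g I_{|g|-1}$, $\lambda_g I_{|g|}$, and $\bar U_g \in \mathbb{R}^{|g| \times (|g|-1)}$ respectively. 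Since a product of block-diagonal matrices (with conformable blocks) is block diagonal, $\bar U^{\intercal}\Lambda \bar U$ is block diagonal with $g$-th block equal to $\bar U_g^{\intercal}(\lambda_g I_{|g|})\bar U_g = \lambda_g\, \bar U_g^{\intercal}\bar U_g$.

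The one point worth stating carefully is that $\bar U_g$ is an \emph{orthonormal basis completion} of the unit vector $u_g$, so its $|g|-1$ columns are orthonormal; hence $\bar U_g^{\intercal}\bar U_g = I_{|g|-1}$ (note this is the ``thin'' Gram identity, not $\bar U_g \bar U_g^{\intercal} = I_{|g|}$, which fails). Consequently the $g$-th block of $\bar U^{\intercal}\Lambda \bar U$ is $\lambda_g I_{|g|-1}$, and the $g$-th block of $\Gamma + \bar U^{\intercal}\Lambda \bar U$ is $(\gamma_g + \lambda_g) I_{|g|-1}$.

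Finally I would use that the determinant of a block-diagonal matrix is the product of the determinants of its blocks, together with $\operatorname{det}\big((\gamma_g + \lambda_g) I_{|g|-1}\big) = (\gamma_g + \lambda_g)^{|g|-1}$, to conclude
\[
J_{\phi}(\gamma, \mathcal{U}) = \prod_{g \in \mathcal{G}_E} (\gamma_g + \lambda_g)^{(|g|-1)}.
\]
There is no real obstacle here beyond bookkeeping; the only step requiring a moment's care is invoking the correct orthonormality identity $\bar U_g^{\intercal}\bar U_g = I_{|g|-1}$ for the rectangular completion matrix, after which the computation is a direct consequence of Theorem~\ref{thm1}.
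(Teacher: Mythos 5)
Your proposal is correct and is precisely the direct calculation that the paper alludes to when it states that the proof of Corollary~\ref{cor1} ``is a direct calculation based on (\ref{jacobian_final}) and is omitted'': substitute the orthogonality hypothesis, use the block-diagonal structure of $\Gamma$, $\Lambda$, and $\bar{U}$ together with $\bar{U}_g^{\intercal}\bar{U}_g = I_{|g|-1}$, and take the determinant blockwise. Your explicit flagging of the thin Gram identity is the right point to be careful about, and nothing further is needed.
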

\noindent The proof of Corollary~\ref{cor1} is a direct calculation based on \eqref{jacobian_final} and is omitted.

In comparison with the adjustment for polyhedral selection events in \cite{selective_bayesian}, the selection of groups leads to a nontrivial Jacobian function $J_{\phi}(\cdot, \mathcal{U})$ in our likelihood adjustment.
It is easy to note that the Jacobian dissolves as a constant when all the selected groups are atoms with sizes are exactly equal to $1$.
Based on the event $\mathcal{A}_E$, our likelihood adjustment in this special case (with a constant Jacobian) gives an adjustment for the randomized LASSO.

\subsection{Surrogate Selection-informed Posterior}
\label{sec:approx-posterior}

Plugging in the adjustment factor from Theorem \ref{thm1} into \eqref{select_posterior} gives us the selection-informed posterior.
Proposition \ref{prop:1} simplifies the expression for this posterior further expressing it in terms of Gaussian densities.
We defer the details for matrices $\bar{A}$, $\bar{R}$, $\bar{b}$, $\bar{s}$, $\bar{\Theta}$ and $\bar{\Omega}$, which do not depend on $\widehat{\beta}_E$ or $\widehat{\gamma}$, to the appendices.
\begin{proposition}
\label{prop:1}
Conditioning upon the event $\mathcal{A}_E$ in Theorem \ref{thm1}, the selection-informed posterior in \eqref{select_posterior} agrees with
\begin{equation*}
\begin{aligned}
& \left(\int J_{\phi}(\widehat{\gamma}, \mathcal{U})  \cdot \pp(\bar{R}\beta_E+ \bar{s}, \bar{\Theta} ; \widehat{\beta}_E)\cdot  \pp(\bar{A}\widehat{\beta}_E+ \bar{b}, \bar{\Omega} ; \widehat{\gamma})\cdot \mathbf{1}(\widehat{\gamma} > 0) d\widehat{\gamma} d\widehat{\beta}_E\right)^{-1} \\
&\;\;\;\;\;\;\;\;\;\;\;\;\;\;\;\;\;\;\;\;\;\;\;\;\;\;\;\;\;\;\;\;\;\;\;\;\;\;\;\;\;\;\;\;\;\;\;\;\;\;\;\;\;\;\;\;\;\;\;\;\;\;\;\;\;\;\;\;\;\;\;\;\;\;\;\;\;\;\;\;\;\;\;\;\;\;\;\;\;\;\;\;\;\times \pi_E(\beta_E) \cdot \pp(\bar{R}\beta_E+ \bar{s}, \bar{\Theta};\widehat{\beta}_E).
\end{aligned}
\end{equation*}
\end{proposition}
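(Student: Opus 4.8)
The plan is to start from the selection-informed posterior in its raw form \eqref{select_posterior}, namely $\pi_S(\beta_E \mid \widehat\beta_E, N_E) \propto (\mathbb{P}(\mathcal{A}_E \mid \beta_E))^{-1} \cdot \pi(\beta_E) \cdot \pp(\beta_E, \Sigma_E; \widehat\beta_E)$, and substitute the exact expression for the adjustment factor $\mathbb{P}(\mathcal{A}_E \mid \beta_E)$ supplied by Theorem \ref{thm1}. After that substitution, everything reduces to bookkeeping: the numerator term $\pi(\beta_E) \cdot \pp(\beta_E, \Sigma_E; \widehat\beta_E)$ must be matched against $\pi(\beta_E) \cdot \pp(\bar R \beta_E + \bar s, \bar\Theta; \widehat\beta_E)$, and the denominator integral from Theorem \ref{thm1} must be massaged into the displayed form. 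The core observation is that both the free-standing likelihood factor and the integrand in the adjustment factor are Gaussian in $\widehat\beta_E$ (and, in the integrand, jointly Gaussian in $(\widehat\beta_E, \widehat\gamma)$ after completing the square in the exponent), so each can be rewritten as a normal density $\pp(\cdot, \cdot; \cdot)$ up to a multiplicative constant that does not depend on $\beta_E$.

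The first key step is to recognize the factor $\pp(\beta_E, \Sigma_E; \widehat\beta_E)$ as, equivalently, a density of the form $\pp(\bar R \beta_E + \bar s, \bar\Theta; \widehat\beta_E)$ for an appropriate affine reparametrization; this is where $\bar R$, $\bar s$, $\bar\Theta$ enter, and it is essentially a definition — these matrices are introduced precisely so that the identity holds. The second step is to take the quadratic form $-\tfrac12 (A\widehat\beta_E + B\widehat\gamma + c)^\intercal \Omega^{-1}(A\widehat\beta_E + B\widehat\gamma + c)$ appearing inside the integral in Theorem \ref{thm1}, together with the Gaussian prefactor $\pp(\beta_E, \Sigma_E; \widehat\beta_E)$ already present there, and complete the square jointly in $(\widehat\beta_E, \widehat\gamma)$. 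The result is a product of two Gaussian densities $\pp(\bar R\beta_E + \bar s, \bar\Theta; \widehat\beta_E) \cdot \pp(\bar A \widehat\beta_E + \bar b, \bar\Omega; \widehat\gamma)$ — here $\bar A$, $\bar b$, $\bar\Omega$ capture the conditional law of $\widehat\gamma$ given $\widehat\beta_E$ in this completed-square representation — multiplied by a normalizing constant that, crucially, is free of $\beta_E$ and $\widehat\gamma$ and hence can be absorbed into the overall proportionality. Carrying the Jacobian $J_\phi(\widehat\gamma, \mathcal{U})$ and the indicator $\mathbf 1(\widehat\gamma > 0)$ through unchanged, the denominator of $\pi_S$ becomes exactly the reciprocal integral displayed in the proposition.

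The third step is simply to collect: the numerator of $\pi_S$ is $\pi(\beta_E)$ times the one Gaussian density $\pp(\bar R\beta_E + \bar s, \bar\Theta; \widehat\beta_E)$, and the denominator is the integral over $(\widehat\gamma, \widehat\beta_E)$ of $J_\phi \cdot \pp(\bar R\beta_E + \bar s, \bar\Theta; \widehat\beta_E) \cdot \pp(\bar A\widehat\beta_E + \bar b, \bar\Omega; \widehat\gamma) \cdot \mathbf 1(\widehat\gamma > 0)$, which is exactly the stated expression. I expect the main obstacle to be the completing-the-square manipulation in step two: one must verify that the cross term between $\widehat\beta_E$ and $\widehat\gamma$ coming from $B^\intercal \Omega^{-1} A$ is precisely absorbed into the mean $\bar A \widehat\beta_E + \bar b$ of the conditional Gaussian, and that the quadratic form in $\widehat\beta_E$ that remains, after marginalizing conceptually over $\widehat\gamma$, recombines with the $\Sigma_E^{-1}$ term to produce the single covariance $\bar\Theta$ (rather than, say, a more complicated expression involving the Schur complement of $\Omega$). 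This is where the precise definitions of $\bar A, \bar b, \bar\Omega, \bar R, \bar s, \bar\Theta$ — deferred to the appendix — must be chosen consistently, and checking that these choices make all cross terms and constants land correctly is the one genuinely computational part of the argument; everything else is rearrangement.
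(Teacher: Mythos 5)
Your overall route---substitute the Theorem \ref{thm1} expression for $\mathbb{P}(\mathcal{A}_E \mid \beta_E)$ into \eqref{select_posterior} and complete the square jointly in $(\widehat{\beta}_E,\widehat{\gamma})$---is the same as the paper's, but two of your specific claims are false, and they sit exactly where the content of the proposition lives. First, $\pp(\beta_E,\Sigma_E;\widehat{\beta}_E)$ is \emph{not} equal to, nor proportional in $\beta_E$ to, $\pp(\bar{R}\beta_E+\bar{s},\bar{\Theta};\widehat{\beta}_E)$: with the paper's definitions $\bar{\Theta}=\left(\Sigma_E^{-1}-\bar{A}^{\intercal}\bar{\Omega}^{-1}\bar{A}+A^{\intercal}\Omega^{-1}A\right)^{-1}$ and $\bar{R}=\bar{\Theta}\Sigma_E^{-1}$, the purely-quadratic-in-$\beta_E$ parts of the two exponents are $\beta_E^{\intercal}\Sigma_E^{-1}\beta_E$ versus $\beta_E^{\intercal}\Sigma_E^{-1}\bar{\Theta}\Sigma_E^{-1}\beta_E$, which differ unless $\bar{\Theta}=\Sigma_E$. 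So your Step~1 (``essentially a definition'') does not hold. Second, and relatedly, the residual factor produced by completing the square is \emph{not} free of $\beta_E$. The paper's key identity is
\begin{equation*}
\pp(\beta_E, \Sigma_E ; \widehat{\beta}_E)\, \exp\Big\{-\tfrac{1}{2}(A\widehat{\beta}_E + B\widehat{\gamma} +c)^{\intercal}\Omega^{-1} (A\widehat{\beta}_E + B\widehat{\gamma} +c)\Big\} = K(\beta_E) \cdot  \pp(\bar{R}\beta_E+ \bar{s}, \bar{\Theta} ; \widehat{\beta}_E)\cdot  \pp(\bar{A}\widehat{\beta}_E+ \bar{b}, \bar{\Omega} ; \widehat{\gamma}),
\end{equation*}
where $K(\beta_E)$ is free of $(\widehat{\beta}_E,\widehat{\gamma})$ but genuinely depends on $\beta_E$ --- it carries precisely the mismatch between $\Sigma_E^{-1}$ and $\Sigma_E^{-1}\bar{\Theta}\Sigma_E^{-1}$ noted above. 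You therefore cannot absorb it into the overall proportionality as a constant.

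The step your write-up is missing is that this same $K(\beta_E)$ arises in \emph{both} numerator and denominator and cancels. Pulling it out of the denominator gives $\mathbb{P}(\mathcal{A}_E\mid\beta_E)\propto K(\beta_E)\cdot\int\int J_{\phi}(\widehat{\gamma},\mathcal{U})\,\pp(\bar{R}\beta_E+\bar{s},\bar{\Theta};\widehat{\beta}_E)\,\pp(\bar{A}\widehat{\beta}_E+\bar{b},\bar{\Omega};\widehat{\gamma})\,\mathbf{1}(\widehat{\gamma}>0)\,d\widehat{\gamma}\,d\widehat{\beta}_E$, while the same identity (evaluated at a fixed $\widehat{\gamma}$, or integrated over $\widehat{\gamma}$) rewrites the free-standing likelihood factor as $\pp(\beta_E,\Sigma_E;\widehat{\beta}_E)=K(\beta_E)\,\pp(\bar{R}\beta_E+\bar{s},\bar{\Theta};\widehat{\beta}_E)\cdot h(\widehat{\beta}_E)$ for some $h$ free of $\beta_E$. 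Only after the two copies of $K(\beta_E)$ cancel in the ratio does the numerator acquire the density $\pp(\bar{R}\beta_E+\bar{s},\bar{\Theta};\widehat{\beta}_E)$ displayed in the proposition. As written, your argument would equally well ``prove'' the variant with $\pp(\beta_E,\Sigma_E;\widehat{\beta}_E)$ left in the numerator, which differs from the true posterior by the non-constant factor $K(\beta_E)$ and is therefore wrong; fixing Steps~1 and~2 to track $K(\beta_E)$ explicitly closes the gap.
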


Bypassing integrations, we provide easy-to-implement deterministic expressions for a surrogate selection-informed posterior and the corresponding gradient in Theorem \ref{thm2}.
Let
\begin{equation*}
\begin{aligned}
&\beta_E^{\star}, \gamma^{\star}=  \operatorname*{argmin}_{\widetilde{\beta}_E, \widetilde{\gamma}}\ \Big\{\dfrac{1}{2}(\widetilde{\beta}_E-\bar{R} \beta_E- \bar{s} )^{\intercal} (\bar{\Theta})^{-1} (\widetilde{\beta}_E-\bar{R} \beta_E- \bar{s})  \\
& \;\;\;\;\;\;\;\;\;\;\;\;\;\;\;\;\;\;\;\;\;\;\;\;\;\;\;\;\;\; + \dfrac{1}{2} (\widetilde{\gamma}-\bar{A}\widetilde{\beta}_E -\bar{b})^{\intercal} (\bar{\Omega})^{-1} (\widetilde{\gamma}-\bar{A}\widetilde{\beta}_E -\bar{b}) + \barr (\widetilde\gamma)\Big\},
\end{aligned}
\end{equation*}
where $\barr ( \cdot )$ is a barrier penalty \citep{auslender1999penalty} that takes the value $\infty$ when the support constraints are violated and imposes a smaller penalty for values farther away from the boundary of the positive orthant.
We use $C$ to denote a constant free of $\beta_E$.
A generalized version of the Laplace approximation \citep{wong2001asymptotic, inglot2014simple} for the normalizing constant in Proposition \ref{prop:1}:
\begin{equation*}
\int J_{\phi}(\widehat{\gamma}, \mathcal{U})  \cdot \pp(\bar{R}\beta_E+ \bar{s}, \bar{\Theta} ; \widehat{\beta}_E)\cdot  \pp(\bar{A}\widehat{\beta}_E+ \bar{b}, \bar{\Omega} ; \widehat{\gamma})\cdot \mathbf{1}(\widehat{\gamma} > 0) d\widehat{\gamma} d\widehat{\beta}_E \ \ \ \ \ \ \ \ \ \ \ \  \ \ \ \ \
\end{equation*}
\begin{align}
&\approx C\cdot J_{\phi}(\gamma^{\star}; \mathcal{U}) \cdot \exp\Big(-\dfrac{1}{2}(\beta^{\star}_E-\bar{R} \beta_E- \bar{s} )^{\intercal} (\bar{\Theta})^{-1} (\beta^{\star}_E-\bar{R} \beta_E- \bar{s}) \nonumber \\
&\;\;\;\;\;\;\;\;\;\;\;\;\;\;\;\;\;\;\;\;\;\;\;\;\;\;\;\;\;\;\;\;\;\;- \dfrac{1}{2} (\gamma^{\star}-\bar{A}\beta^{\star}_E -\bar{b})^{\intercal} (\bar{\Omega})^{-1} (\gamma^{\star}-\bar{A}\beta^{\star}_E -\bar{b}) - \barr (\gamma^{\star})\Big) \numberthis
\label{sel:post:gen:Laplace}
\end{align}
is the basis of our surrogate posterior.
Using convex analysis \citep{rockafellar2015convex}, we detail the surrogate selection-informed posterior and gradient in the following theorem.

\begin{theorem}
  \label{thm2}
  Fixing $\bar{\Sigma}= \bar{\Omega} + \bar{A} \bar{\Theta} (\bar{A})^{\intercal}$, $\bar{P}= \bar{A}\bar{R}$ and $\bar{q}= \bar{A}\bar{s} + \bar{b}$, let
  \begin{equation}
  \label{optimizer:main}
  \gamma^{\star} = \operatorname*{argmin}_{\gamma\in \mathbb{R}^{|\mathcal{G}_E|}}\ \dfrac{1}{2}(\gamma-\bar{P}\beta_E -\bar{q})^{\intercal} (\bar{\Sigma})^{-1} (\gamma-\bar{P}\beta_E -\bar{q}) + \barr (\gamma).
  \end{equation}
Letting $\Gamma^{\star} = \operatorname{diag}\left( \left( \gamma^{\star}_g I_{\lvert g \rvert - 1} \right)_{g \in \mathcal{G}_E} \right)$
 and $M_g$ be the set of $\lvert g \rvert - 1$ diagonal indices of $\Gamma^{\star}$ for group $g$, define
$$J^{\star}= \left(\sum_{i \in M_g} [(\Gamma^{\star} + \bar{U}^{\intercal}(X_E^{\intercal}X_E)^{-1}\Lambda \bar{U})^{-1}]_{ii}\right)_{g\in \mathcal{G}_E} \in \mathbb{R}^{\lvert \mathcal{G}_E \rvert}.$$
Using the approximation in \eqref{sel:post:gen:Laplace}, the logarithm of our surrogate selection-informed posterior is given by:
\begin{equation*}\label{log:posterior}
\log \pi_E(\beta_E) +  \log \pp(\bar{R}\beta_E+ \bar{s}, \bar{\Theta} ; \widehat{\beta}_E) -\log \pp(\bar{P}\beta_E + \bar{q}, \bar{\Sigma} ; \gamma^{\star})  + \barr (\gamma^{\star})- \log J_{\phi}(\gamma^{\star}; \mathcal{U}),
\end{equation*}
with the following gradient:
\begin{equation*}\label{grad:log:posterior}
\nabla  \log \pi_E(\beta_E) + \bar{R}^{\intercal}(\bar{\Theta})^{-1}(\widehat{\beta}_E -\bar{R} \beta_E-\bar{s}) +\bar{P}^{\intercal} (\bar{\Sigma})^{-1}  \Big( \bar{P}\beta_E + \bar{q} -\gamma^{\star}- \left( \bar{\Sigma}^{-1} +  \nabla^2 \barr(\gamma^{\star}) \right)^{-1} J^{\star}\Big).
\end{equation*}
\end{theorem}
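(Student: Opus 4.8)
The plan is to prove Theorem \ref{thm2} in two moves: first, reduce the joint Laplace-type minimization in the display preceding \eqref{sel:post:gen:Laplace} to the univariate-in-$\widehat\gamma$ program \eqref{optimizer:main}, so that the surrogate from the text matches the claimed expressions; second, differentiate the resulting log-posterior, treating $\gamma^{\star}$ as an implicitly defined function of $\beta_E$. For the reduction, write $\lVert v\rVert^2_M=v^{\intercal}Mv$ and minimize the objective defining $(\beta^{\star}_E,\gamma^{\star})$ over $\widetilde\beta_E$ for fixed $\widetilde\gamma$. This is an unconstrained quadratic in $\widetilde\beta_E$, and the standard completing-the-square (equivalently, Gaussian marginalization) identity gives $\min_{\widetilde\beta_E}\big\{\tfrac12\lVert\widetilde\beta_E-\bar R\beta_E-\bar s\rVert^2_{\bar\Theta^{-1}}+\tfrac12\lVert\widetilde\gamma-\bar A\widetilde\beta_E-\bar b\rVert^2_{\bar\Omega^{-1}}\big\}=\tfrac12\lVert\widetilde\gamma-\bar P\beta_E-\bar q\rVert^2_{\bar\Sigma^{-1}}+\mathrm{const}$, with $\bar P=\bar A\bar R$, $\bar q=\bar A\bar s+\bar b$, $\bar\Sigma=\bar\Omega+\bar A\bar\Theta\bar A^{\intercal}$, and the additive constant free of $\beta_E$. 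Equivalently, integrating $\widehat\beta_E$ out of the normalizing integral in Proposition \ref{prop:1} (and using that $J_{\phi}(\widehat\gamma,\mathcal U)$ does not involve $\widehat\beta_E$) replaces $\pp(\bar R\beta_E+\bar s,\bar\Theta;\widehat\beta_E)\cdot\pp(\bar A\widehat\beta_E+\bar b,\bar\Omega;\widehat\gamma)$ by the marginal $\pp(\bar P\beta_E+\bar q,\bar\Sigma;\widehat\gamma)$. Hence the $\widehat\gamma$-component of the joint minimizer solves exactly \eqref{optimizer:main}, and after the barrier relaxation $\mathbf 1(\widehat\gamma>0)\leftrightarrow e^{-\barr(\widehat\gamma)}$ the generalized Laplace approximation \eqref{sel:post:gen:Laplace} of \citep{wong2001asymptotic,inglot2014simple} evaluates the normalizing integral as $C\cdot J_{\phi}(\gamma^{\star};\mathcal U)\cdot\pp(\bar P\beta_E+\bar q,\bar\Sigma;\gamma^{\star})\cdot e^{-\barr(\gamma^{\star})}$, the constant $C$ being free of $\beta_E$ (it absorbs the $\beta_E$-free Gaussian normalizing constant of $\bar\Sigma$ and the constant from the square-completion above).

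Substituting this into the posterior of Proposition \ref{prop:1}, namely $\pi_S(\beta_E\mid\widehat\beta_E,N_E)\propto(\text{normalizing integral})^{-1}\cdot\pi(\beta_E)\cdot\pp(\bar R\beta_E+\bar s,\bar\Theta;\widehat\beta_E)$, and taking logarithms while dropping $\beta_E$-free additive constants gives $\log\pi(\beta_E)+\log\pp(\bar R\beta_E+\bar s,\bar\Theta;\widehat\beta_E)-\log\pp(\bar P\beta_E+\bar q,\bar\Sigma;\gamma^{\star})+\barr(\gamma^{\star})-\log J_{\phi}(\gamma^{\star};\mathcal U)$, which is the stated log-posterior. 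This step is purely bookkeeping once the reduction is in hand.

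The gradient is the delicate part. Beyond the explicit quadratic in $\log\pp(\bar R\beta_E+\bar s,\bar\Theta;\widehat\beta_E)$, the only dependence on $\beta_E$ enters through $\gamma^{\star}=\gamma^{\star}(\beta_E)$. Implicitly differentiating the stationarity condition $\bar\Sigma^{-1}(\gamma^{\star}-\bar P\beta_E-\bar q)+\nabla\barr(\gamma^{\star})=0$ yields the Jacobian $D\gamma^{\star}=(\bar\Sigma^{-1}+\nabla^2\barr(\gamma^{\star}))^{-1}\bar\Sigma^{-1}\bar P$. In the combination $-\log\pp(\bar P\beta_E+\bar q,\bar\Sigma;\gamma^{\star})+\barr(\gamma^{\star})$, the contributions that act through $\gamma^{\star}$ are $(D\gamma^{\star})^{\intercal}\bar\Sigma^{-1}(\gamma^{\star}-\bar P\beta_E-\bar q)=-(D\gamma^{\star})^{\intercal}\nabla\barr(\gamma^{\star})$ and $(D\gamma^{\star})^{\intercal}\nabla\barr(\gamma^{\star})$, which cancel by stationarity; what survives is the explicit piece $\bar P^{\intercal}\bar\Sigma^{-1}(\bar P\beta_E+\bar q-\gamma^{\star})$. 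For the Jacobian factor, since $\log J_{\phi}(\widehat\gamma,\mathcal U)=\log\det(\Gamma+\bar U^{\intercal}(X_E^{\intercal}X_E)^{-1}\Lambda\bar U)$ with $\Gamma$ block-diagonal and affine in $\widehat\gamma$, the Jacobi formula gives $\partial_{\gamma_g}\log J_{\phi}=\operatorname{tr}\big((\Gamma+\cdots)^{-1}\partial_{\gamma_g}\Gamma\big)=\sum_{i\in M_g}[(\Gamma+\cdots)^{-1}]_{ii}$, i.e.\ $\nabla_{\gamma}\log J_{\phi}(\gamma^{\star},\mathcal U)=J^{\star}$; by the chain rule $\nabla_{\beta_E}\big(-\log J_{\phi}(\gamma^{\star};\mathcal U)\big)=-(D\gamma^{\star})^{\intercal}J^{\star}=-\bar P^{\intercal}\bar\Sigma^{-1}(\bar\Sigma^{-1}+\nabla^2\barr(\gamma^{\star}))^{-1}J^{\star}$, using symmetry of $\bar\Sigma^{-1}$ and of $(\bar\Sigma^{-1}+\nabla^2\barr(\gamma^{\star}))^{-1}$. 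Adding $\nabla\log\pi(\beta_E)$ and $\nabla_{\beta_E}\log\pp(\bar R\beta_E+\bar s,\bar\Theta;\widehat\beta_E)=\bar R^{\intercal}\bar\Theta^{-1}(\widehat\beta_E-\bar R\beta_E-\bar s)$ gives precisely the stated gradient.

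I expect the main obstacle to be the careful handling of the implicit dependence $\gamma^{\star}(\beta_E)$ in the gradient: one must keep straight which terms contribute total versus partial derivatives, verify the cancellation of the barrier-gradient contributions via the stationarity condition (this is exactly what leaves the final gradient free of $\nabla^2\barr$ except inside the single term multiplying $J^{\star}$), and confirm that differentiating the $\log\det$ Jacobian prefactor produces the block-indexed diagonal sums in $J^{\star}$. The square-completion/marginalization identity reducing the joint program to \eqref{optimizer:main} and the appeal to the generalized Laplace formula are standard and essentially routine by comparison.
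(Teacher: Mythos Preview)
Your proposal is correct and follows essentially the same route as the paper's proof. The only difference is packaging: for the gradient the paper rewrites the minimized quadratic-plus-barrier term as the convex conjugate $\zeta^{\star}$ of $\zeta(\gamma)=\tfrac12\gamma^{\intercal}\bar\Sigma^{-1}\gamma+\barr(\gamma)$ and then invokes the identities $\nabla\zeta^{\star}=(\nabla\zeta)^{-1}=\gamma^{\star}$ and $\nabla^2\zeta^{\star}=[\nabla^2\zeta(\gamma^{\star})]^{-1}$, whereas you obtain the same facts by implicitly differentiating the stationarity condition and invoking the envelope-theorem cancellation directly; the derivation of $J^{\star}$ via the Jacobi formula is identical to the paper's use of $\partial\log\det(X)/\partial X_{ij}=(X^{-1})_{ji}$.
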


Recall, $|\mathcal{G}_E|$ is the number of selected groups after solving \eqref{glasso}.
Evident from Theorem \ref{thm2}, gradient-based sampling from the surrogate posterior requires us to solve a $\mathbb{R}^{|\mathcal{G}_E|}$-dimensional optimization problem in every update, without carrying out integrations for the theoretical adjustment.
Algorithm \ref{alg:updates} outlines a prototype implementation of our methods to generate estimates for the group-sparse parameters using the surrogate posterior.

\begin{algorithm}[H]
 \caption{A Prototype Implementation of our Selection-informed Bayesian Method}
 \label{alg:updates}
\begin{spacing}{2.}
\begin{algorithmic}
\REQUIRE ($y$, $X$, $\Omega$, $\mathcal{G}$, $\{\lambda_g\}_{g\in \mathcal{G}}$) $\overset{\text{Optimize} \;  \eqref{glasso}}{\longrightarrow}$ $\widehat{E}=E, \ \widehat{\mathcal{U}}= \mathcal{U}, \ \widehat{\mathcal{Z}}= \mathcal{Z}$
\ENSURE \text{\textbf{Set up parameters for (Laplace)}}
\STATE {\text{(Orthonormal completion)}}
Calculate $\bar{U}$ (see Theorem~\ref{thm1})
\STATE {\text{(Parameters)}}
Calculate $\bar{R}, \bar{s}, \bar{\Theta}, \bar{P}, \bar{q}, \bar{\Sigma}$ (see Theorem~\ref{thm2})
\ENSURE \text{\textbf{Implementation for a generic gradient-based sampler}}
\STATE (Initialize) Sample: $\beta_E^{(1)} = \widehat{\beta}_E$, Step Size: $\eta$, Proposal Scale: $\chi$, Number of Samples: $K$
\FOR {$k= 1,2,\cdots, K-1$}
 \STATE \textbf{(Laplace)} Solve $\gamma^{\star (k)} = \operatorname*{argmin}_{\gamma\in \mathbb{R}^{|\mathcal{G}_E|}}\ \Big\{\dfrac{1}{2}(\gamma-\bar{P}\beta_E^{(k)} -\bar{q})^{\intercal} (\bar{\Sigma})^{-1} (\gamma-\bar{P}\beta_E^{(k)} -\bar{q})$ \\ \hspace{70mm}$+ \barr (\gamma)\Big\}$
 \STATE (Jacobian)
 Calculate  $J^{\star (k)}$ (see  Theorem~\ref{thm2})
 \STATE (Gradient)  Calculate $\nabla \log \pi_E(\beta_E^{(k)}\  | \ \widehat{\beta}_E, N_E)$ (see Theorem~\ref{thm2})
 \STATE  (Update) $\beta_E^{(k+1)} \longleftarrow \beta_E^{(k)} + \eta \chi \nabla \log \pi_E(\beta_E^{(k)}\  | \ \widehat{\beta}_E, N_E) + \sqrt{2\eta} \epsilon^{(k)}$, $\epsilon^{(k)}\sim \mathcal{N} (0, \chi)$.
\ENDFOR
\end{algorithmic}
\end{spacing}
\end{algorithm}

We revisit our simple running example in Section \ref{sec:select-corr-framework} to instantiate Algorithm \ref{alg:updates}.
In the selection stage, we solve \eqref{glasso} with $\omega \sim \mathcal{N}_2(0, \tau^2 I_2)$ where $\tau^2$ is the randomization variance.
Before noting the updates from the surrogate posterior, we assess in Figure \ref{fig:rel:approx} the relative accuracy of the (generalized) Laplace approximation in \eqref{sel:post:gen:Laplace} with respect to the exact normalizing constant.
Because we have exactly one selected group of covariates, the exact normalizer is a one-dimensional integral that can be computed numerically.
Especially, we observe how the relative accuracy of the approximation varies with sample size $n$ and randomization variance $\tau^2$.
For any fixed sample size, the accuracy of approximation for the integral with the mode decreases as the randomization variance increases or, equivalently the concentration of probability mass in the integrand has a greater spread.
As expected, we observe that the relative accuracy converges to $0$ with growing sample size for all values of randomization variance.
\begin{figure}[H]
  \centering
  \includegraphics[height=10cm, width=\textwidth]{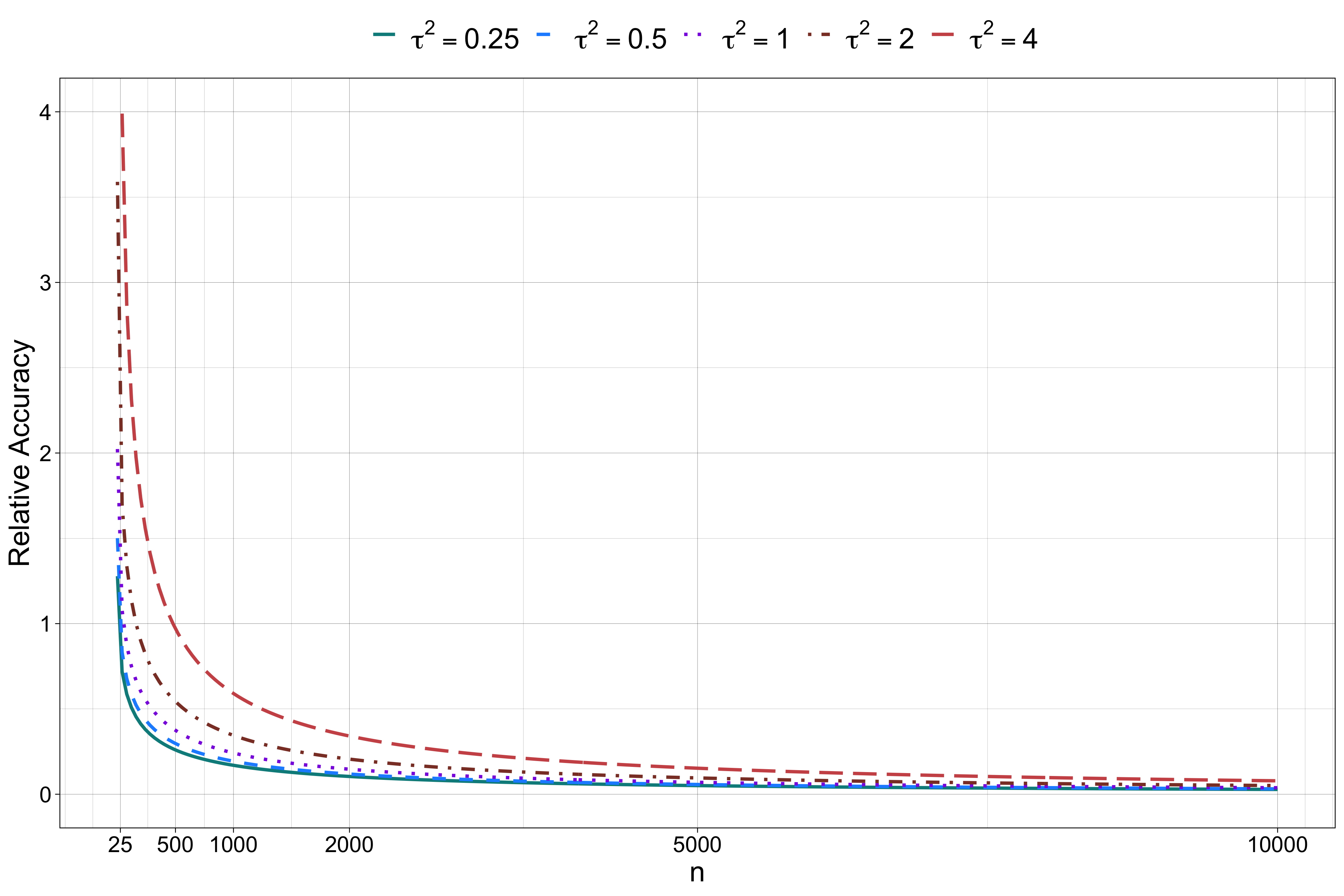}
  \vspace{-1cm}
  \caption{Plot for relative accuracy of the (generalized) Laplace approximation with respect to the exact normalizing constant.}
  \label{fig:rel:approx}
\end{figure}

Now, we exemplify Algorithm \ref{alg:updates} for the simple example.
Applying Theorems \ref{thm1} and \ref{thm2}, \textbf{(Laplace)} in this instance solves the one-dimensional optimization
$$\gamma^{\star (k)} = \operatorname*{argmin}_{\gamma\in \mathbb{R}}\ \Big\{\dfrac{1}{2}(\gamma-\bar{P}\beta_E^{(k)} -\bar{q})^{\intercal} (\bar{\Sigma})^{-1} (\gamma-\bar{P}\beta_E^{(k)} -\bar{q})+ \barr (\gamma)\Big\}$$
where $\bar{P}= U^{\tp}(2\cdot I_2- UU^{\tp})^{-1}$, $\bar{q}=-1$, $\bar{\Sigma}=1+ U^{\tp}(2\cdot I_2- UU^{\tp})^{-1} U$ and $U$ is the unit vector in $\mathbb{R}^2$ from writing the Group LASSO solution in the polar coordinate system.
The Jacobian function is given by $J^{\star (k)}= (1+\gamma^{\star (k)})^{-1}$.
We generate our prototype update for $\beta_E^{(k+1)}$ that depends on the expression for the gradient of the surrogate in Theorem \ref{thm2}.

At last, we briefly comment on the case when $\sigma$, the error variance in the data, is treated as an unknown parameter.
Using a joint prior on $(\beta_E, \sigma)$ in conjunction with our selection-informed likelihood gives us Bayesian inference in this situation.
As prescribed in \cite{selective_bayesian}, one could run a Gibbs sampler that alternates between drawing (i) an update for $\beta_E$ given $\sigma$ and (ii) an update for $\sigma$ given $\beta_E$.
Note, the updates for $\beta_E$ based on a gradient-based sampler will assume the expression in Theorem \ref{thm2}, except now the gradient of the prior is with respect to the parameter vector $\beta_E$.

\section{Generalization to other Grouped Sparsities}
\label{sec:generalizations}

We now generalize our selection-informed methods to learning algorithms which target other forms of grouped sparsities and covariate structures.

\subsection{Overlapping Group LASSO}
\label{sec:overl-group-lasso}

One approach for the overlapping Group LASSO recovers superposed groups by augmenting the covariate space with duplicated predictors \citep{jacobGroupLassoOverlap2009}.
We proceed by implementing the solver \eqref{glasso} with a randomization variable $\omega^* \sim \mathcal{N}_{p^{*}} (0,\Omega)$ such that $p^*$ is the number of covariates after duplication.
To formalize the setting, we let  $X^* \in \mathbb{R}^{n \times p^*}$ denote the augmented matrix of covariates constructed from $X \in \mathbb{R}^{n \times p}$, given that these overlapping groups are determined before selection.
Each set of selected covariates $E^*$ in the augmented space maps to a set of selected variables $E$ in the original space by reversing the duplication.
The stationary mapping for the overlapping Group LASSO with augmented matrix $X^*$ and randomization $\omega^*$, which we call $\phi^*$, is given by
\begin{equation}
\label{KKT:oglasso}
  \omega^*
  = X^{*\intercal} X^*
    \begin{pmatrix}
      \left( \gamma^*_g u^*_g \right)_{g \in \mathcal{G}_{E^*}} \\
      0
    \end{pmatrix}
    - X^{*\intercal} X_E \hat{\beta}_E -N_E +
    \begin{pmatrix}
      \left( \lambda_g u^*_g \right)_{g \in \mathcal{G}_{E^*}} \\
      \left( \lambda_h z^*_h \right)_{h \in -\mathcal{G}_{E^*}}
    \end{pmatrix}
\end{equation}
where $\hat{\beta}_E$ and $N_E= (X^*)^{\intercal}(y-X_E \hat{\beta}_E)$ are the refitted and the ancillary statistics in our selected model (cf. (\ref{sel_model})).
Following our notation,
\begin{equation*}
	\widehat\gamma^* = (\gamma^*_g : g \in \mathcal{G}_{E^*})^{\intercal}, \
	\widehat{\mathcal{U}}^* = \{ u^*_g : g \in \mathcal{G}_{E^*}\}, 
\end{equation*}
represent a polar decomposition of the overlapping Group LASSO solution; let $U^* = \operatorname{diag}\left( \left(u^*_g\right)_{g \in \mathcal{G}_{E^*}}\right)$.
Finally,
\begin{equation*}
	\widehat{\mathcal{Z}}^* = \{z^*_g : g \in -\mathcal{G}_{E^*}\}
\end{equation*}
are the subgradient variables from the Group LASSO penalty for the non-selected groups in the augmented predictor space.

The conditioning event we study for an analytically feasible selection-informed posterior is given by
\begin{equation*}
	\mathcal{A}_{E^*} = \{\widehat{E}^*=E^*,\ \widehat{\mathcal{U}}^* = \mathcal{U}^*, \ \widehat{\mathcal{Z}}^* = \mathcal{Z}^*\}
\end{equation*}
where $E^*$, $\mathcal{U}^*$ and $\mathcal{Z}^*$ are the corresponding observed instances.
We then recover an expression for the adjustment factor along the lines of Theorem~\ref{thm1} using the matrices
\begin{equation}\label{A:B:c:oglasso}
A = -(X^*)^{\intercal}X_E, \quad B= (X^*)^{\intercal}X^*_{E^*} U^*, \quad c = - N_E +
		    \begin{pmatrix}
		      \left( \lambda_g u^*_g \right)^{\intercal}_{g \in \mathcal{G}_E} &
		      \left( \lambda_g z^*_g \right)^{\intercal}_{g \in -\mathcal{G}_E}
		    \end{pmatrix}^{\intercal}
\end{equation}
from the mapping in \eqref{KKT:oglasso}.

\begin{proposition}
\label{adj:oglasso}
	Define the following matrices
	\begin{equation*}
	       \begin{aligned}
			&\bar{U}^* = \operatorname{diag}\left( \left(\bar{U}^*_g\right)_{g \in \mathcal{G}_{E^*}}\right), \Gamma^* = \operatorname{diag}\left( \left( \widehat\gamma^*_g I_{\lvert g \rvert} \right)_{g \in \mathcal{G}_{E^*}} \right),  \Lambda = \operatorname{diag}\left( \left(\lambda_g I_{|g|}\right)_{g \in \mathcal{G}_{E^*}}\right)\\
	     \end{aligned}
     \end{equation*}
     where $A$, $B$, $c$ are specified in \eqref{A:B:c:oglasso}.
     Then, $\mathbb{P}(\mathcal{A}_{E^*} \  | \  \beta_E)$ is proportional to
		\begin{equation*}
		        \begin{aligned}
			& \int \int \pp(\beta_E, \Sigma_E ; \widehat{\beta}_E) \cdot \exp \left\{ -\frac{1}{2}( A \widehat{\beta}_E + B\widehat{\gamma}^*  + c)^{\intercal} \Omega^{-1} (A \widehat{\beta}_E + B\widehat{\gamma}^*  + c) \right\} \\
		&\;\;\;\;\;\;\;\;\;\;\;\;\;\;\;\;\;\;\;\;\;\;\;\;\;\;\;\;\;\;\;\;\;\;\;\;\;\;\;\;\;\;\;\;\;\;\;\;\;\;\;\;\;\;\;\;\;\;\;\;\;\;\;\;\; \times J_{\phi^*}(\widehat{\gamma}^*;\mathcal{U}^*) \cdot \mathbf{1}(\widehat{\gamma}^* > 0) d\widehat{\beta}_E d\widehat{\gamma}^*
			\end{aligned}
		\end{equation*}
		where
		\begin{equation} \label{jacobian_overlap}
			J_{\phi^*}(\widehat\gamma^*;\mathcal{U}^*) = \operatorname{det} \begin{pmatrix} \left( (X^*_{E^*})^{\tp}X^*_{E^*}\Gamma^* + \Lambda \right)\bar{U}^* & (X^*_{E^*})^{\tp}X^*_{E^*} U^* \end{pmatrix}.
		\end{equation}
\end{proposition}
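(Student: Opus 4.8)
The plan is to mimic the proof of Theorem~\ref{thm1}, adapting the change-of-variables argument to the augmented predictor space. Starting from the selection-informed likelihood \eqref{sel:lik} applied with the conditioning event $\mathcal{A}_{E^*}$, I would first observe that on this event the randomization variable $\omega^*$ is determined by the stationary mapping \eqref{KKT:oglasso}; this mapping is precisely of the form $\phi^*_{\widehat{\beta}_E}(\widehat{\gamma}^*, \widehat{\mathcal{U}}^*, \widehat{\mathcal{Z}}^*) = A\widehat{\beta}_E + B(\widehat{\mathcal{U}}^*)\widehat{\gamma}^* + c(\widehat{\mathcal{U}}^*, \widehat{\mathcal{Z}}^*)$ with $A$, $B$, $c$ as in \eqref{A:B:c:oglasso}. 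Substituting $\omega^* = \phi^*_{\widehat{\beta}_E}(\cdots)$ into the Gaussian density $\pp(\mathbf{0}, \Omega; \omega^*)$ produces the exponential factor displayed in the Proposition, so the only remaining task is to compute the Jacobian of the change of variables \eqref{CoV} (with $*$ superscripts) and to argue that the event $\mathcal{A}_{E^*}$ translates into the support constraint $\mathbf{1}(\widehat{\gamma}^* > 0)$, with the directions $\mathcal{U}^*$ and the inactive subgradients $\mathcal{Z}^*$ held fixed at their observed values.

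Next I would set up the Jacobian computation. As in Theorem~\ref{thm1}, with $\widehat{\mathcal{U}}^*$ and $\widehat{\mathcal{Z}}^*$ fixed, the map $\omega^* \mapsto (\widehat{\gamma}^*, \widehat{\mathcal{U}}^*, \widehat{\mathcal{Z}}^*)$ has a Jacobian determined by differentiating \eqref{KKT:oglasso} with respect to the free optimization variables parametrizing each selected group. The key structural difference from the non-overlapping case is that the selected set $E$ in the \emph{original} space and the selected set $E^*$ in the \emph{augmented} space are of different sizes: a group in the augmented space may contribute $|g|$ free parameters rather than $|g|-1$, since the constraint tying the block norms to the $\widehat{\beta}_E$ coordinates is mediated through the (non-injective) duplication map, so no single radial coordinate per group is ``used up'' against $\widehat{\beta}_E$. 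This is why $\Gamma^*$ in the Proposition uses blocks $\widehat{\gamma}^*_g I_{|g|}$ rather than $\widehat{\gamma}^*_g I_{|g|-1}$, and why the Jacobian \eqref{jacobian_overlap} is the determinant of a block matrix horizontally concatenating $\big((X^*_{E^*})^{\tp}X^*_{E^*}\Gamma^* + \Lambda\big)\bar{U}^*$ with $(X^*_{E^*})^{\tp}X^*_{E^*}U^*$, rather than the single square determinant of \eqref{jacobian_final}. I would compute this by writing the block $\begin{pmatrix} \bar{U}^*_g & u^*_g \end{pmatrix}$ as an orthonormal basis of $\mathbb{R}^{|g|}$ for each selected group, expressing the derivative of the KKT residual with respect to the angular-plus-radial local coordinates, and reading off that the derivative of $B\widehat{\gamma}^*$ contributes the $(X^*_{E^*})^{\tp}X^*_{E^*}U^*$ column block while the derivative through the varying directions $u^*_g$ (tangent component) contributes $\big((X^*_{E^*})^{\tp}X^*_{E^*}\Gamma^* + \Lambda\big)\bar{U}^*$, exactly as in the proof of Theorem~\ref{thm1} but without the cancellation that removed one coordinate per group.

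Then I would assemble the pieces: insert the Gaussian density for $\omega^*$, the Jacobian $J_{\phi^*}(\widehat{\gamma}^*; \mathcal{U}^*)$, and the remaining Gaussian factors $\pp(\beta_E, \Sigma_E; \widehat{\beta}_E)$ (the ancillary $\pp(\mathbf{0}, \tau_E; N_E)$ factor being absorbed into the proportionality constant, since $N_E$ is fixed) into the integral defining $\mathbb{P}(\mathcal{A}_{E^*} \mid \beta_E)$, integrate out $\widehat{\gamma}^*$ over the positive orthant and $\widehat{\beta}_E$ over $\mathbb{R}^{|E|}$, and note that the $\mathcal{U}^*$-, $\mathcal{Z}^*$-dependent constant factors from the change of variables do not depend on $\beta_E$ and hence drop under $\propto$. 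This yields exactly the displayed expression. The main obstacle I anticipate is the Jacobian bookkeeping in the augmented space: one must carefully track that the duplication map $X \mapsto X^*$ means the free coordinates of a selected augmented group are genuinely $|g|$-dimensional (the full sphere parametrization $\gamma^*_g u^*_g$) rather than being reduced by the stationarity relation to $\widehat{\beta}_E$, and then verify that the horizontal concatenation in \eqref{jacobian_overlap} is the correct determinant of the resulting (square, after stacking over all selected groups and combining with the $\widehat{\beta}_E$-block) derivative matrix. Once the local coordinate chart on each $S^{|g|-1}$ is chosen and the derivative blocks are laid out, the rest is the same linear-algebra identity used for Theorem~\ref{thm1}, applied with $(X^*_{E^*})^{\tp}X^*_{E^*}$ in place of the identity-like role played by $X_E^{\tp}X_E$ there.
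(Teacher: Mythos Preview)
Your overall approach is correct and matches the paper's: apply the change of variables $\omega^* \to (\widehat{\gamma}^*, \widehat{\mathcal{U}}^*, \widehat{\mathcal{Z}}^*)$ via the stationary map \eqref{KKT:oglasso}, condition on $\widehat{\mathcal{U}}^* = \mathcal{U}^*$ and $\widehat{\mathcal{Z}}^* = \mathcal{Z}^*$, and compute the Jacobian by differentiating the selected-block part of the map in radial and tangent coordinates. Your identification of the two derivative blocks, $(X^*_{E^*})^{\tp}X^*_{E^*}U^*$ from $\partial/\partial\widehat{\gamma}^*$ and $\big((X^*_{E^*})^{\tp}X^*_{E^*}\Gamma^* + \Lambda\big)\bar{U}^*$ from $\partial/\partial\mathcal{U}^*$, is exactly right and reproduces the paper's computation.

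However, your explanation for \emph{why} the Jacobian takes the concatenated form \eqref{jacobian_overlap} rather than the reduced form \eqref{jacobian_final} is incorrect. There is no dimensional difference between the two settings: each selected group still contributes exactly $|g|$ free coordinates (one radial $\gamma^*_g$ plus $|g|-1$ tangent directions on $S^{|g|-1}$), and the duplication map plays no role in this count. The matrix $\Gamma^*$ with $I_{|g|}$ blocks is simply the analogue of the intermediate object $\bar\Gamma$ in the proof of Theorem~\ref{thm1} (not of the final $\Gamma$ with $I_{|g|-1}$ blocks), and \eqref{jacobian_overlap} is precisely the analogue of the intermediate display \eqref{alt:exp:det}, namely $\det\begin{pmatrix}(Q\bar\Gamma + \Lambda)\bar U & QU\end{pmatrix}$ with $Q$ replaced by $(X^*_{E^*})^{\tp}X^*_{E^*}$. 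The only reason Theorem~\ref{thm1} goes one step further to \eqref{jacobian_final} is that it factors out $Q = X_E^{\tp}X_E$ and uses $Q^{-1}$; in the overlapping case $(X^*_{E^*})^{\tp}X^*_{E^*}$ can be singular because of duplicated columns, so that simplification is unavailable and one stops at \eqref{alt:exp:det}. Your remark about radial coordinates being ``used up against $\widehat{\beta}_E$'' should be dropped.
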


Notice, since $X^*$ contains overlapping groups, the matrix $(X^*_{E^*})^{\tp}X^*_{E^*}$ may not be invertible.
Thus, in comparison to Theorem~\ref{thm1}, (\ref{jacobian_overlap}) provides a different expression for the Jacobian function involving the sizes of the selected groups of variables.
In solving \eqref{glasso}, one may introduce a ridge penalty $\epsilon \|\beta\|_2^2/2$, where $\epsilon$ is a small positive number. This will in turn lead to \eqref{A:B:c:oglasso} with
$$
B=\begin{bmatrix} (X_E^*)^{\intercal}X^*_{E^*} +\epsilon\cdot  I_{E^*} \\  (X^*_{-E^*})^{\intercal}X^*_{E^*} \end{bmatrix} U^*.
$$
This results in the following Jacobian
$$
J_{\phi^*}(\widehat{\gamma}^*;\mathcal{U}^*) = \operatorname{det} \left( \Gamma^* + \left( \bar{U}^* \right)^{\intercal} \left( \left( X^*_{E^*} \right)^{\tp} X^*_{E^*} +\epsilon\cdot  I_{E^*} \right)^{-1}\Lambda \bar{U}^*  \right)
$$
where the ridge parameter can specifically be used to counter the collinearity in the augmented predictor matrix.

\subsection{Standardized Group LASSO}
\label{sec:std-group-lasso}

An alternate treatment to the Group LASSO objective is popularly applied in problems with correlated covariates when a within-group orthonormality is desired.
The learning algorithm proposed by \citet{simon2012standardization} addresses the selection of groups in the presence of such correlations via a modification to the Group LASSO penalty.
Equivalently, the canonical objective \eqref{glasso} is reparameterized in the standardized formulation; for each submatrix $X_g$ containing the predictors in group $g \in \mathcal{G}$,
the quadratic loss function is now given by
\begin{equation}
    \lVert y - \sum_g X_g \beta_g \rVert_2^2 = \lVert y - \sum_g W_g \theta_g \rVert_2^2,
\end{equation}
$X_g = W_g R_g$ for $W_g$, an orthonormal matrix and $R_g$, an invertible matrix and $\theta_g = R_g \beta_g$, the reparameterized vector.
The standardized Group LASSO optimizes the Group LASSO objective in terms of $\theta$ rather than $\beta$:
\begin{equation}\label{std:glasso}
  \widehat{\theta}^{(\mathcal{G})} = \argmin_{\theta} \left\{ \frac{1}{2}\lVert y - \sum_{g\ \in \mathcal{G}} W_g \theta_g \rVert_2^2 + \sum_{g \in \mathcal{G}} \lambda_g \lVert \theta_g \rVert_2 - \omega^{\tp}\theta \right\}.
\end{equation}
Lastly, the original parameters of interest are estimated by $R_g^{-1}\widehat{\theta}^{(\mathcal{G})}$.
The selected set of groups $E$ comprises groups with non-zero coordinates in $\widehat{\theta}^{(\mathcal{G})}$ after solving \eqref{std:glasso}.


Letting $W \in \mathbb{R}^{n \times p}$ be the column-wise concatentation of the standardized groups of covariates $W_g$, the stationary mapping for the standardized Group LASSO is given by
\begin{equation}
  \omega
  = W^{\intercal} W
    \begin{pmatrix}
      \left( \gamma_g u_g \right)_{g \in \mathcal{G}_E} \\
      0
    \end{pmatrix}
    - W^{\intercal} X_E \widehat{\beta}_E - N_E +
    \begin{pmatrix}
      \left( \lambda_g u_g \right)_{g \in \mathcal{G}_E} \\
      \left( \lambda_h z_h \right)_{h \in -\mathcal{G}_E}
    \end{pmatrix};
\end{equation}
$\widehat{\beta}_E$ is our usual refitted statistic and $N_E = W^{\intercal} (Y- X_E \widehat{\beta}_E)$.
Consistent with our approach, a polar decomposition of the (non-zero) standardized Group LASSO solution $\widehat{\theta}^{(\mathcal{G})}$ is represented via
\begin{equation*}
	\gamma = (\gamma_g : g \in \mathcal{G}_{E}), \ \
	\mathcal{U} = \{ u_g : g \in \mathcal{G}_{E}\}. 
\end{equation*}
Recall,
\begin{equation*}
	\mathcal{Z} = \{z_g : g \in -\mathcal{G}_{E}\} 
\end{equation*}
are the subgradient variables for the non-selected groups. Setting
\begin{equation*}
	\mathcal{A}_E = \{\widehat{E}=E,\ \widehat{\mathcal{U}}=\mathcal{U}, \ \widehat{\mathcal{Z}}=\mathcal{Z} \}, 
\end{equation*}
\begin{equation}\label{A:B:c:stdglasso}
A = -W^{\intercal}X_E, \quad B= W^{\intercal}W_E U,  \quad c = - N_E +
		     \begin{pmatrix}
		      \left( \lambda_g u_g \right)^{\intercal}_{g \in \mathcal{G}_E} &
		      \left( \lambda_g z_g \right)^{\intercal}_{g \in -\mathcal{G}_E}
		    \end{pmatrix}^{\intercal},
\end{equation}
we present the adjustment factor in line with Theorem~\ref{thm1} after a change of variables from inverting the stationary mapping of \eqref{std:glasso}.

\begin{proposition}
\label{adj:stdglasso}
	Consider the following matrices
		\begin{equation*}
		       \begin{aligned}
				&\bar{U} = \operatorname{diag}\left( \left(\bar{U}_g\right)_{g \in \mathcal{G}_E}\right), \Gamma = \operatorname{diag}\left( \left( \widehat\gamma_g I_{\lvert g \rvert - 1} \right)_{g \in \mathcal{G}_E} \right), \Lambda = \operatorname{diag}\left( \left(\lambda_g I_{|g|}\right)_{g \in \mathcal{G}_E}\right),
                        \end{aligned}
		\end{equation*}
		We then have
		\begin{equation*}
		        \begin{aligned}
			\mathbb{P}(\mathcal{A}_E \  | \  \beta_E) &\propto  \int \int \pp(\beta_E, \Sigma_E ; \widehat{\beta}_E) \cdot \exp \left\{ -\frac{1}{2}( A \widehat{\beta}_E + B\widehat{\gamma}  + c)^{\intercal} \Omega^{-1} (A \widehat{\beta}_E + B\widehat{\gamma}  + c) \right\} \\
		&\;\;\;\;\;\;\;\;\;\;\;\;\;\;\;\;\;\;\;\;\;\;\;\;\;\;\;\;\;\;\;\;\;\;\;\;\;\;\;\;\;\;\;\;\;\;\;\;\;\;\;\;\;\;\;\;\;\;\;\;\;\;\;\;\; \times J_{\phi}(\widehat{\gamma};\mathcal{U}) \cdot \mathbf{1}(\widehat{\gamma} > 0) d\widehat{\beta}_E d\widehat{\gamma}
		\end{aligned}
		\end{equation*}
		where
		\begin{equation*}
			J_{\phi}(\widehat\gamma;\mathcal{U}) = \operatorname{det}( \Gamma + \bar{U}^{\intercal}(W_E^{\intercal}W_E)^{-1}\Lambda \bar{U} ).
		\end{equation*}
\end{proposition}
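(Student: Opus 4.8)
The plan is to reduce Proposition~\ref{adj:stdglasso} to the argument behind Theorem~\ref{thm1}, exploiting the fact that \eqref{std:glasso} is simply the (randomized, non-overlapping) Group LASSO objective run on the standardized design $W$ in place of $X$, with the reparameterization $\theta_g = R_g\beta_g$ absorbed into the choice of coordinates. Concretely, I would start from the defining integral for the adjustment factor, $\mathbb{P}(\mathcal{A}_E \mid \beta_E) = \int\int \pp(\beta_E,\Sigma_E;\widehat\beta_E)\,\pp(\mathbf 0,\Omega;\omega)\,\mathbf 1_{\mathcal{A}_E}(\widehat\beta_E,\omega)\,d\omega\,d\widehat\beta_E$, and first rewrite the stationarity conditions of \eqref{std:glasso} at a solution whose active groups are $\mathcal{G}_E$. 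Substituting $W^{\intercal}y = W^{\intercal}X_E\widehat\beta_E + N_E$ (which is exactly the definition of $N_E$ in the standardized selected model) into the subgradient equation turns the condition into the affine identity $\omega = \phi_{\widehat\beta_E}(\widehat\gamma,\widehat{\mathcal U},\widehat{\mathcal Z}) = A\widehat\beta_E + B(\widehat{\mathcal U})\widehat\gamma + c(\widehat{\mathcal U},\widehat{\mathcal Z})$ with $A,B,c$ as in \eqref{A:B:c:stdglasso}, after writing each nonzero block $\widehat\theta_g = \gamma_g u_g$ in polar form.

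Next I would note that on the conditioning event $\mathcal{A}_E = \{\widehat E = E,\ \widehat{\mathcal U}=\mathcal U,\ \widehat{\mathcal Z}=\mathcal Z\}$ the directions $u_g$ and the inactive subgradients $z_h$ are pinned at their observed values, so beyond $\widehat\beta_E$ the only surviving randomness is $\widehat\gamma$, subject to the positivity constraints $\widehat\gamma > 0$ encoding group membership. Applying the change of variables $\omega \mapsto (\widehat\gamma,\widehat{\mathcal U},\widehat{\mathcal Z}) = \phi_{\widehat\beta_E}^{-1}(\omega)$ — parameterizing each unit vector $u_g$ via local coordinates in its tangent sphere, i.e.\ through the orthonormal completion $\bar U_g$ — sends $\pp(\mathbf 0,\Omega;\omega)$ to $\exp\{-\tfrac12(A\widehat\beta_E+B\widehat\gamma+c)^{\intercal}\Omega^{-1}(A\widehat\beta_E+B\widehat\gamma+c)\}$ up to the usual normalizing constant and introduces the Jacobian factor $J_{\phi}$; integrating out $\widehat{\mathcal U},\widehat{\mathcal Z}$ against the pinned values leaves precisely the two-fold integral in the statement, so it remains to identify $J_{\phi}$.

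The Jacobian computation is where the work lies, and it follows the bookkeeping in the proof of Theorem~\ref{thm1} with $X_E$ replaced by $W_E$. Differentiating $\phi_{\widehat\beta_E}$ with respect to $(\widehat\gamma,\text{ tangent coordinates of }\widehat{\mathcal U},\widehat{\mathcal Z})$, the derivative is block-triangular with the inactive-subgradient block equal to $\operatorname{diag}\left((\lambda_h I_{|h|})_{h\in-\mathcal{G}_E}\right)$, whose determinant is a constant free of $\widehat\gamma$ and $\beta_E$ and hence absorbed into the proportionality. The remaining $|E|\times|E|$ block is $\begin{pmatrix} W_E^{\intercal}W_E U & W_E^{\intercal}W_E\bar U\Gamma + \Lambda\bar U\end{pmatrix}$ with $U=\operatorname{diag}\left((u_g)_{g\in\mathcal{G}_E}\right)$; factoring out $\det(W_E^{\intercal}W_E)$ — another absorbed constant, using that $W_E$ is full column rank, the standardized analogue of the full-rank hypothesis on $X_E$ — and then left-multiplying by the orthogonal matrix $[\,U\ \ \bar U\,]^{\intercal}$ and using $U^{\intercal}U=I$, $\bar U^{\intercal}U = 0$, $\bar U^{\intercal}\bar U = I$, the matrix collapses to block-upper-triangular form with diagonal blocks $I$ and $\Gamma + \bar U^{\intercal}(W_E^{\intercal}W_E)^{-1}\Lambda\bar U$. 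Hence $J_{\phi}(\widehat\gamma;\mathcal U) = \operatorname{det}\left(\Gamma + \bar U^{\intercal}(W_E^{\intercal}W_E)^{-1}\Lambda\bar U\right)$, as claimed.

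The main obstacle I anticipate is organizational rather than conceptual: keeping the group-indexed block structure of $D\phi$ straight while simultaneously splitting rows into active and inactive covariate blocks, splitting columns into $\widehat\gamma$, sphere, and $z$ coordinates, and carrying the $\Gamma$ and $\Lambda$ weights through the reduction; and, on the modeling side, checking that replacing $X$ by $W$ in the loss while still refitting against $X_E$ is harmless because $\operatorname{col}(W_E)=\operatorname{col}(X_E)$, so $W^{\intercal}X_E\widehat\beta_E$ and $N_E$ enter exactly as the corresponding objects did in Theorem~\ref{thm1}. Once these are in place, the remaining steps — the change of variables, absorbing $\det(W_E^{\intercal}W_E)$ and the inactive-block determinant into $\propto$, and recognizing the quadratic form as the Gaussian density of $\omega$ — are routine.
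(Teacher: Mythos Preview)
Your proposal is correct and follows essentially the same approach as the paper, which simply notes that the argument is direct from inverting the stationary mapping of \eqref{std:glasso} and then proceeds along the lines of Theorem~\ref{thm1} with $W_E$ in place of $X_E$. Your additional remark that $\operatorname{col}(W_E)=\operatorname{col}(X_E)$ (so that refitting against $X_E$ is compatible with the $W$-based loss) and your explicit block-triangular reduction of the Jacobian are helpful elaborations that the paper leaves implicit.
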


\subsection{Sparse Group LASSO}
\label{sec:sparse-group-lasso}

The sparse Group LASSO \citep{simon2013sparse} produces solutions that are sparse at both the group level and the individual level within selected groups by deploying the Group LASSO penalty along with the usual $\ell_1$ penalty.
A randomized formulation of the sparse Group LASSO is given by
\begin{equation} \label{sglasso}
\operatorname*{argmin}_{\beta}  \left\{\frac{1}{2} \lVert y - X\beta \rVert_2^2 + \textstyle\sum_{g} \lambda_g \lVert \beta_g \rVert_2 + \lambda_0 \lVert \beta \rVert_1 - \omega^{\tp} \beta \right\};
\end{equation}
 the sum over $g$ forms a non-overlapping partition of the predictors.
Notice, this criterion may be viewed as a special case of the overlapping Group LASSO where each predictor $i$ appears in a group $g \in \{1, \ldots ,G\}$, as well as in its own individual group.

Setting up notations, recall that $\mathcal{G}_E$ denotes the set of selected groups and $-\mathcal{G}_E$ its complement; let $T_g$ denote the selected predictors in group $g$, and $-T_g$ the corresponding complement.
Finally, we define
\begin{equation*}
  \breve{E} = \bigcup_{g \in \mathcal{G}_E} T_g, 
\end{equation*}
the set of selected predictors in the selected groups which parameterize our model.
We write the stationary mapping for the sparse Group LASSO below:
\begin{equation} \label{kkt_sgl}
\begin{aligned}
  \omega
  = X^{\intercal} X
    \begin{pmatrix}
      \left( \gamma_g u_g \right)_{g \in \mathcal{G}_E} \\
      0
    \end{pmatrix}
    &- X^{\intercal} X_{\breve{E}} \hat{\beta}_{\breve{E}} - N_{\breve{E}}  +
    \begin{pmatrix}
      \left( \lambda_g u_g \right)_{g \in \mathcal{G}_E} \\
      \left( \lambda_h z_h \right)_{h \in -\mathcal{G}_E}
    \end{pmatrix} + \lambda_0\left( \begin{pmatrix} (s_j)_{j \in T_g} \\ (s_j)_{j \in -T_g} \end{pmatrix}_{g \in \mathcal{G}} \right)
\end{aligned}
\end{equation}
where $\hat{\beta}_{\breve{E}}$ and $N_{\breve{E}}$ are defined as per projections according to the selected model; we denote this mapping by $\breve{\phi}$.
Recall, $\gamma$, $\mathcal{U}$, and $\mathcal{Z}$ are consistent in their definition in terms of the groups we select after solving \eqref{glasso}.
In addition, the subgradient variables from the $\ell_1$ penalty are represented by $s_j$, with $s_j = \text{sign}(\widehat{\beta}^{(\mathcal{G})}_j)$ for $j \in \breve{E}$, and $\lvert s_j \rvert < 1$ for $j \in -\breve{E}$.
We collect these scalar variables into the two sets, $\mathcal{S}_{\mathcal{G}_E}$ and $\mathcal{S}_{-\mathcal{G}_E}$, depending on whether the predictor is in a selected group.
For non-selected predictors in selected groups, the corresponding entry of $u_g$ is zero. That is, we can interpret $u_g$ as a unit vector with the same dimension as the  selected part of group $g$, denoting it by $\breve{u}_g = (u_{g,j} : j \in T_g)^{\tp} \in \mathbb{R}^{\lvert T_g \rvert}$. Set $\breve{U} = \operatorname{diag}\left( \left(\breve{u}_g\right)_{g \in \mathcal{G}_E}\right)$.
Define the selection event
\begin{equation*}
	\mathcal{A}_{\breve{E}} = \{\widehat{\breve{E}}=\breve{E},\ \widehat{\mathcal{U}}=\mathcal{U},\ \widehat{\mathcal{Z}}=\mathcal{Z},\ \widehat{\mathcal{S}}_{\mathcal{G}_E}=\mathcal{S}_{\mathcal{G}_E}, \ \widehat{\mathcal{S}}_{-\mathcal{G}_E}= \mathcal{S}_{-\mathcal{G}_E} \},
\end{equation*}
\begin{equation}\label{A:B:c:spglasso}
{A = - X^{\intercal} X_{\breve{E}}, \quad B= X^{\intercal}X_{\breve{E}} \breve{U},  \quad c = - N_E +  \begin{pmatrix}
		      \left( \lambda_g u_g \right)^{\intercal}_{g \in \mathcal{G}_E} &
		      \left( \lambda_g z_g \right)^{\intercal}_{g \in -\mathcal{G}_E}
		    \end{pmatrix}^{\intercal} + \lambda_0
				\left( \left((s_j)_{j \in g}\right)_{g \in \mathcal{G}} \right)^{\intercal}},
\end{equation}
using the stationary mapping for the learning algorithm under scrutiny.
We then recover the following theoretical expression for the adjustment factor post the sparse Group LASSO.

\begin{proposition}
\label{adj:spglasso}
	For each selected group $g \in \mathcal{G}_E$, construct $\bar{\breve{U}}_g \in \mathbb{R}^{\lvert T_g \rvert \times (\lvert T_g - 1 \rvert)}$ as the orthonormal basis completion of $\breve{u}_g$.
	Define the following matrices
		\begin{equation*}
		       \begin{aligned}
				& \bar{\breve{U}} = \operatorname{diag}\left( \left(\bar{\breve{U}}_g\right)_{g \in \mathcal{G}_E}\right), \Gamma = \operatorname{diag}\left( \left( \gamma_g I_{\lvert T_g \rvert - 1} \right)_{g \in \mathcal{G}_E} \right),  \Lambda = \operatorname{diag}\left( \left(\lambda_g I_{|T_g|}\right)_{g \in \mathcal{G}_E}\right), \\
                 \end{aligned}
		\end{equation*}
		based upon \eqref{A:B:c:spglasso}.
		Then, we have
		\begin{equation*}
		        \begin{aligned}
			\mathbb{P}(\mathcal{A}_E \  | \ \beta_{\breve{E}}) &\propto \int \int \pp(\beta_{\breve{E}}, \Sigma_{\breve{E}} ; \widehat{\beta}_{\breve{E}}) \cdot \exp \left\{ -\frac{1}{2}( A \widehat{\beta}_{\breve{E}} + B\widehat{\gamma}  + c)^{\intercal} \Omega^{-1} (A \widehat{\beta}_{\breve{E}} + B\widehat{\gamma}  + c) \right\} \\
		&\;\;\;\;\;\;\;\;\;\;\;\;\;\;\;\;\;\;\;\;\;\;\;\;\;\;\;\;\;\;\;\;\;\;\;\;\;\;\;\;\;\;\;\;\;\;\;\;\;\;\;\;\;\;\;\;\;\;\;\;\;\;\;\;\; \times J_{\breve{\phi}_{\widehat{\beta}_{\breve{E}}}}(\widehat{\gamma};\mathcal{U}) \cdot \mathbf{1}(\widehat{\gamma} > 0) d\widehat{\beta}_{\breve{E}} d\widehat{\gamma}
		       \end{aligned}
		\end{equation*}
		where
		\begin{equation*}
			J_{\breve{\phi}}(\widehat\gamma;\mathcal{U}) = \operatorname{det}( \Gamma + \bar{\breve{U}}^{\intercal}(X_{\breve{E}}^{\intercal}X_{\breve{E}})^{-1}\Lambda \bar{\breve{U}} ).
		\end{equation*}
\end{proposition}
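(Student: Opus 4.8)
The plan is to follow the derivation of Theorem~\ref{thm1}, carrying along the extra within-group sparsity created by the $\ell_1$ term. I start from the defining integral for the adjustment factor,
$$\mathbb{P}(\mathcal{A}_{\breve{E}}\mid\beta_{\breve{E}}) = \int\!\!\int \pp(\beta_{\breve{E}},\Sigma_{\breve{E}};\widehat{\beta}_{\breve{E}})\cdot\pp(\mathbf{0},\Omega;\omega)\cdot\mathbf{1}_{\mathcal{A}_{\breve{E}}}(\widehat{\beta}_{\breve{E}},\omega)\,d\omega\,d\widehat{\beta}_{\breve{E}},$$
in which $N_{\breve{E}}$ is held fixed by the earlier conditioning on the ancillary statistic. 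Using the stationary mapping \eqref{kkt_sgl}, the event $\mathcal{A}_{\breve{E}}$ is, for fixed $\widehat{\beta}_{\breve{E}}$, equivalent to the existence of $\widehat{\gamma}>0$ with $\omega=\breve{\phi}_{\widehat{\beta}_{\breve{E}}}(\widehat{\gamma})=A\widehat{\beta}_{\breve{E}}+B\widehat{\gamma}+c$, where $A,B,c$ are as in \eqref{A:B:c:spglasso} and the direction vectors, the non-selected-group subgradients $\mathcal{Z}$, and the $\ell_1$ signs $\mathcal{S}$ all take their observed values. The structural point that makes the sparse case different is that $\widehat{\beta}^{(\mathcal{G})}$ vanishes off $T_g$ within each selected group, so its active block is parametrized by $\breve{u}_g\in\mathbb{R}^{|T_g|}$ (equivalently $\widehat{\theta}_g:=\gamma_g\breve{u}_g$), and the corresponding signs $s_j=\operatorname{sign}((\breve{u}_g)_j)$, $j\in T_g$, are locally constant.

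Next I would change variables from $\omega$ to the optimization variables --- the active block $\widehat{\theta}$ together with the inactive subgradient coordinates, which together have the same total dimension as $\omega$ --- and compute the Jacobian. Ordering the coordinates of $\omega$ with the selected-group rows first, the differential of \eqref{kkt_sgl} is block triangular: the selected-group rows depend on $\widehat{\theta}$ only, through the Gram block $X_{\breve{E}}^{\intercal}X_{\breve{E}}$ and through the group-$\ell_2$ subgradient $\lambda_g\widehat{\theta}_{T_g}/\lVert\widehat{\theta}_{T_g}\rVert_2$, whereas the remaining rows contain the non-selected-group subgradients and the inactive $\ell_1$ variables affinely, with constant coefficient blocks. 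Those constant blocks pull out of the integral, so the only nontrivial factor is $\det\!\bigl(X_{\breve{E}}^{\intercal}X_{\breve{E}}+\operatorname{diag}\bigl((\tfrac{\lambda_g}{\gamma_g}(I_{|T_g|}-\breve{u}_g\breve{u}_g^{\intercal}))_{g}\bigr)\bigr)$, and passing from $\widehat{\theta}_g$ to polar coordinates $(\gamma_g,\breve{u}_g)$ multiplies this by $\prod_{g\in\mathcal{G}_E}\gamma_g^{|T_g|-1}$.

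The reduction to the stated form is the main technical step. Using $I_{|T_g|}-\breve{u}_g\breve{u}_g^{\intercal}=\bar{\breve{U}}_g\bar{\breve{U}}_g^{\intercal}$ from the orthonormal completion, factoring the constant $\det(X_{\breve{E}}^{\intercal}X_{\breve{E}})$ out of the determinant (absorbed into the proportionality), and applying Sylvester's determinant identity to move the projection into the lower-dimensional space spanned by the $\bar{\breve{U}}_g$'s, the determinant becomes $\det(\Gamma^{-1})\cdot\det\!\bigl(\Gamma+\bar{\breve{U}}^{\intercal}(X_{\breve{E}}^{\intercal}X_{\breve{E}})^{-1}\Lambda\bar{\breve{U}}\bigr)$ with $\Gamma=\operatorname{diag}((\gamma_g I_{|T_g|-1})_g)$; the factor $\det(\Gamma^{-1})=\prod_g\gamma_g^{-(|T_g|-1)}$ cancels the polar Jacobian exactly, leaving $J_{\breve{\phi}}(\widehat{\gamma};\mathcal{U})$. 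Since the conditioning fixes the direction, $\mathcal{Z}$, and $\mathcal{S}$ coordinates, a disintegration then absorbs all remaining constants into ``$\propto$'' and reproduces the claimed integral over $\widehat{\gamma}>0$, carrying $\pp(\beta_{\breve{E}},\Sigma_{\breve{E}};\widehat{\beta}_{\breve{E}})$, the Gaussian weight at $\omega=A\widehat{\beta}_{\breve{E}}+B\widehat{\gamma}+c$, and $J_{\breve{\phi}}$.

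As a consistency check rather than a shortcut, one can use the observation made in the text that the sparse Group LASSO is an overlapping Group LASSO in which each predictor additionally forms a singleton group: selected singletons have size one and hence contribute only trivial factors to the Jacobian, matching the vanishing of the Jacobian for atomic groups noted after Corollary~\ref{cor1}, while the $\ell_1$ shrinkage is precisely what restricts the nonzero coordinates of each $u_g$ to $T_g$; specializing Proposition~\ref{adj:oglasso} in the full-rank case, after a reduction analogous to the one above, recovers the same $J_{\breve{\phi}}$. I expect the one place needing genuine care to be the bookkeeping of the subgradient variables --- adopting a consistent parametrization of the group-$\ell_2$ and $\ell_1$ pieces so the inactive coordinates are not double-counted, and restricting every Gram and penalty block to $\breve{E}$ and $T_g$ rather than to entire groups --- which is the only substantive departure from the proof of Theorem~\ref{thm1}.
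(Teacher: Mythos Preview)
Your proposal is correct and follows the same strategy the paper indicates: the paper's proof simply says to rerun the argument of Theorem~\ref{thm1} with the stationary mapping $\breve{\phi}$ from \eqref{kkt_sgl}, omitting all details. Your write-up supplies those details, with one minor technical variation worth noting: where Theorem~\ref{thm1} parameterizes the sphere directly via tangent-space coordinates $\bar U_g y_g$ and reduces the block determinant $\det\bigl((Q\bar\Gamma+\Lambda)\bar U\ \ QU\bigr)$ by an orthogonal change of basis, you instead differentiate in Cartesian coordinates $\widehat{\theta}_g$ (using the standard formula $\nabla_{\theta}(\theta/\lVert\theta\rVert)=\tfrac{1}{\lVert\theta\rVert}(I-uu^{\intercal})$), append the spherical Jacobian $\prod_g\gamma_g^{|T_g|-1}$, and finish with Sylvester's identity. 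The two routes yield the same determinant (the resulting matrices are transposes of one another), so this is a stylistic rather than substantive difference. Your remarks on bookkeeping the overlapping $\ell_1$ and group-$\ell_2$ subgradients for non-selected groups, and the consistency check via Proposition~\ref{adj:oglasso}, go beyond what the paper records but are sound.
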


\section{Large Sample Theory}
\label{sec:large-sample-theory}

In the present section, we establish statistical credibility for our surrogate selection-informed posterior under a fixed $p$ and growing $n$ regime.
We fix $\beta_{n, E}$ to be the sequence of parameters governing our generating model such that
$\sqrt{n} \beta_{n,E} = b_n \bar{\beta}_E$, where $n^{-1/2}b_n = O(1)$, $b_n \rightarrow \infty$ as $n \rightarrow \infty$.
Introducing the dependence on the sample size, let $\ell_{n,E}(\beta_{n,E};  \widehat{\beta}_{n, E} \ \lvert \ N_{n, E})$ represent the surrogate (log) selection-informed likelihood given in Theorem \ref{thm2} after ignoring constants and the prior:
\begin{equation*}\label{log:lik:n}
\begin{aligned}
& \log \pp(\bar{R}\sqrt{n}\beta_{n, E}+ \bar{s}, \bar{\Theta} ; \sqrt{n}\widehat{\beta}_{n, E})  + \dfrac{n}{2}(\gamma_n^{\star}-\bar{P}\beta_{n, E} -n^{-1/2}\bar{q})^{\intercal} (\bar{\Sigma})^{-1} (\gamma_n^{\star} -\bar{P}\beta_{n, E} -n^{-1/2}\bar{q}) \\
&\;\;\;\;\;\;\;\;\;\;\;\;\;\;\;\;\;\;\;\;\;\;\;\;\;\;\;\;\;\;\;\;\;\;\;\;\;\;\;\;\;\;\;\;\; + \barr (\sqrt{n} \gamma^{\star}_n) - \log J_{\phi}(\sqrt{n} \gamma^{\star}_n; \mathcal{U}),
 \end{aligned}
\end{equation*}
based on the optimizer
  \begin{equation*}
  \gamma_n^{\star} = \operatorname*{argmin}_{\gamma}\ \dfrac{n}{2}(\gamma-\bar{P}\beta_{n, E}  - n^{-1/2}\bar{q})^{\intercal} (\bar{\Sigma})^{-1} (\gamma-\bar{P}\beta_{n,E} -n^{-1/2}\bar{q}) + \barr (\sqrt{n} \gamma).
  \end{equation*}
Recall, appending our surrogate selection-informed likelihood to a selection-informed prior $\pi_E(\cdot)$ gives us our selection-informed posterior.
We denote the measure of a set $\mathcal{K}$ with respect to this posterior distribution as follows:
\begin{equation}
\label{surrogate:posterior:prob}
\Pi_{n,E}(\mathcal{K}  \ | \ \widehat{\beta}_{n,E} ; N_{n,E} ) = \dfrac{\int_{\mathcal{K}} \pi_E(z_n) \cdot \exp(\ell_{n,E}(z_n ;  \widehat{\beta}_{n, E}\  \lvert  \ N_{n, E}))\ dz_n}{\int \pi_E(z_n) \cdot \exp(\ell_{n,E}(z_n;  \widehat{\beta}_{n, E}\  \lvert \ N_{n, E})) \ dz_n}.
\end{equation}
We let
	\begin{equation*}
		\mathcal{B}(\beta_{n,E},\delta) = \left\{ z : \lVert z - \beta_{n,E} \rVert_2^2 < \delta \right\} 
	\end{equation*}
denote a ball of radius $\delta$ around our parameter of interest, $\beta_{n,E}$, and $\mathcal{B}^c(\beta_{n,E},\delta)$ denote its complement.
Lastly we use $\mathbb{P}_{n,E} (\cdot)$ to represent the selection-informed probability after conditioning upon our selection event and the ancillary statistic under the generating parameter, $\beta_{n, E}$.

Our main theoretical result, Theorem \ref{thm3}, proves that our surrogate version of the selection-informed posterior concentrates around the true parameter as the sample size grows infinitely large, giving us the rate of contraction.
We begin with two supporting propositions: (i) Proposition \ref{gen_laplace_convergence} proves the convergence of the approximate normalizing constant based on \eqref{sel:post:gen:Laplace} to the exact counterpart when the support constraints for the sizes of the selected groups are restricted to a compact subset; (ii) Proposition \ref{bounds:lik} bounds the curvature of the surrogate (log) selection-informed likelihood around its maximizer. Proofs of these main results are in Appendix \ref{appendix:large-sample-theory}.
To support the claim in Proposition \ref{bounds:lik}, Lemma \ref{lemma_jacobian_derivs} and Lemma \ref{lemma_jacobian_contribution} provide supplementary theory to control the asymptotic orders of the gradient and Hessian of the (log) Jacobian in our surrogate selection-informed posterior. We include both these results in Appendix \ref{appendix:supporting-large-sample-theory}.



\begin{proposition} \label{gen_laplace_convergence}
	Suppose that
	\begin{equation}
	\label{assump2}
		\lim_{n \rightarrow \infty} (b_n)^{-2} \left\{ \log \mathbb{P}\left( \sqrt{n} \gamma_n > 0\right) - \log \mathbb{P} \left( \sqrt{n} \gamma_n > \bar{q}  \right)\right\} = 0.
	\end{equation}
	Let $\bar\gamma_n^{\star}$ be given as
	\begin{equation*}
	 \operatorname*{argmin}_{\bar\gamma < \bar{Q} \cdot 1_{\lvert E \rvert} } \frac{b_n^2}{2} (\bar\gamma - \bar{P}\bar{\beta}_E - (b_n)^{-1}\bar{q})^{\tp} \bar{\Sigma}^{-1} (\bar\gamma - \bar{P}\bar{\beta}_E -(b_n)^{-1} \bar{q}) +  \barr (b_n \bar\gamma),
	\end{equation*}
	where $\bar{Q}>0$.
   Then we have
	\begin{equation*}
	        \begin{aligned}
		& \lim_{n \rightarrow \infty}  (b_n)^{-2} \log \mathbb{P} \left( 0< \sqrt{n} \gamma_n  < b_n \bar{Q} \cdot 1_{\lvert E \rvert}   + \bar{q} \right)+ (b_n)^{-2} \barr (b_n \bar\gamma_n^{\star}) - (b_n)^{-2} \log J_{\phi} (b_n \bar\gamma^{\star}_n ; \mathcal{U}) \\
		&\;\;\;\;\;\;\;\;\;\;\;\;\;\;\;\;\;\;\;\;\;\;\;\;\;\;\;\;\;\;\;\;\;\;\;\;\;\;\;\; \;\;\;\;\; + \dfrac{1}{2} (\bar\gamma_n^{\star} - \bar{P}\bar{\beta}_E - (b_n)^{-1}\bar{q})^{\tp} \bar{\Sigma}^{-1} (\bar\gamma_n^{\star} - \bar{P}\bar{\beta}_E - (b_n)^{-1}\bar{q})= 0.
		\end{aligned}
	\end{equation*}
\end{proposition}

\begin{remark}
For a fixed selection-informed prior $\pi_E(\cdot)$, we may choose $\bar{Q}$ to be a positive constant in order to consider a sufficiently large compact subset of our selection region that would work for all $\bar\beta_{E}$ in a bounded set of probability close to 1 under our prior.
Proposition~\ref{gen_laplace_convergence} now implies that
our surrogate (log) selection-informed likelihood converges to its exact counterpart, obtained by plugging in the exact probability of selection under the parameter sequence $\beta_{n, E}$, as the sample size grows to $\infty$.
\end{remark}

Proposition \ref{gen_laplace_convergence} and the above remark together motivate the following approximation for the likelihood adjustment factor in Proposition \ref{prop:1}:
\begin{equation}
\label{approx:scaled:n}
\begin{aligned}
& \exp\Big(-\dfrac{n}{2} (\gamma_n^{\star} - \bar{P}\bar{\beta}_E - n^{-1/2}\bar{q})^{\tp} \bar{\Sigma}^{-1} (\gamma_n^{\star} - \bar{P}\bar{\beta}_E - n^{-1/2}\bar{q})\\
&\;\;\;\;\;\;\;\;\;\;\;\;\;\;\;\;\;\;\;\;\;\;\;\;\;\;\;\;\;\;\;\;\;\;\;\;\;\;\;\;\;\;- \barr (\sqrt{n} \gamma_n^{\star}) + \log J_{\phi} (\sqrt{n} \gamma^{\star}_n ; \mathcal{U}) \Big),
\end{aligned}
\end{equation}
by substituting $b_n \bar\gamma$ with  $\sqrt{n} \gamma$ in the optimization objective of the above Proposition.




\begin{proposition} \label{bounds:lik}
	Fix $\mathcal{C} \in \mathbb{R}^{\lvert E \rvert}$, a compact set. Define $\widehat{\beta}_{n, E}^{\;\text{max}}$ to be the maximizer of the selection-informed likelihood sequence, $\ell_{n,E}(\cdot;  \widehat{\beta}_{n, E} \ \lvert \ N_{n, E})$. Then there exist positive constants $C_0 \leq C_1$
and $N\in \mathbb{N}$ such that for any $0 < \epsilon_0 < C_0$,
	\begin{equation*}
	       \begin{aligned}
		-\dfrac{n}{2}(C_1 + \epsilon_0) \lVert z_n - \widehat{\beta}_{n,E}^{\;\text{max}} \rVert_2^2 & \leq \ell_{n,E}(z_n ;  \widehat{\beta}_{n, E} \ \lvert \ N_{n, E})- \ell_{n,E}(\widehat{\beta}_{n, E}^{\;\text{max}} ;  \widehat{\beta}_{n, E} \ \lvert \ N_{n, E}) \\
		&\;\;\;\;\;\;\;\;\;\;\;\;\;\;\;\;\;\;\;\;\;\;\;\;\;\;\;\;\;\;\;\;\;\;\;\;\;\;\;\;\;\;\;\;\;\;\;\;\;\;\leq -\dfrac{n}{2}(C_0 - \epsilon_0) \lVert z_n - \widehat{\beta}_{n,E}^{\;\text{max}} \rVert_2^2
		\end{aligned}
	\end{equation*}
	for all $n \geq N$ and $z_n \in \mathcal{C}$.
\end{proposition}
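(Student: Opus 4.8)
The plan is to read the Hessian of $\ell_{n, S}(\cdot;\,\widehat{\beta}_{n, E}\,\lvert\,N_{n, E})$ off its three natural pieces and then Taylor expand at the maximizer. Up to an additive constant free of $z_n$, write
\[
\ell_{n, S}(z_n) = T_1(z_n) + T_2(z_n) - T_3(z_n),
\]
where $T_1(z_n) = \log \pp(\bar R\sqrt{n}z_n + \bar s, \bar\Theta ; \sqrt{n}\widehat{\beta}_{n,E})$ is the Gaussian refitting term, $T_2(z_n) = \min_{\gamma}\big\{\tfrac{n}{2}(\gamma - \bar P z_n - \bar q/\sqrt{n})^{\intercal}\bar\Sigma^{-1}(\gamma - \bar P z_n - \bar q/\sqrt{n}) + \barr(\sqrt{n}\gamma)\big\}$ is the value of the barrier-penalized strictly convex inner program defining $\gamma_n^{\star}$ (a partial minimization of a jointly convex function, hence convex in $z_n$), and $T_3(z_n) = \log J_{\phi}(\sqrt{n}\gamma_n^{\star}(z_n);\mathcal{U})$. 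Since $\nabla\ell_{n,S}(\widehat{\beta}_{n, E}^{\;\text{max}}) = 0$, Taylor's theorem with integral remainder gives $\ell_{n, S}(z_n) - \ell_{n, S}(\widehat{\beta}_{n, E}^{\;\text{max}}) = \tfrac12(z_n - \widehat{\beta}_{n, E}^{\;\text{max}})^{\intercal}\bar H_n(z_n)(z_n - \widehat{\beta}_{n, E}^{\;\text{max}})$ with $\bar H_n(z_n) = \int_0^1 2(1-t)\,\nabla^2\ell_{n,S}\big(\widehat{\beta}_{n, E}^{\;\text{max}} + t(z_n - \widehat{\beta}_{n, E}^{\;\text{max}})\big)\,dt$, so it suffices to sandwich $\nabla^2\ell_{n,S}$ between $-n(C_1 + \epsilon_0)I$ and $-n(C_0 - \epsilon_0)I$ along the segment joining $\widehat{\beta}_{n, E}^{\;\text{max}}$ to $z_n$.

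The term $T_1$ is a concave quadratic with constant Hessian $\nabla^2 T_1 = -n\,\bar R^{\intercal}\bar\Theta^{-1}\bar R$. For $T_2$, differentiating the first-order condition $n\bar\Sigma^{-1}(\gamma_n^{\star} - \bar P z_n - \bar q/\sqrt{n}) + \sqrt{n}\,\nabla\barr(\sqrt{n}\gamma_n^{\star}) = 0$ gives $\partial\gamma_n^{\star}/\partial z_n = \big(\bar\Sigma^{-1} + \nabla^2\barr(\sqrt{n}\gamma_n^{\star})\big)^{-1}\bar\Sigma^{-1}\bar P$, and the envelope theorem then yields $\nabla^2 T_2(z_n) = n\bar P^{\intercal}\bar\Sigma^{-1}\bar P - n\bar P^{\intercal}\bar\Sigma^{-1}\big(\bar\Sigma^{-1} + \nabla^2\barr(\sqrt{n}\gamma_n^{\star})\big)^{-1}\bar\Sigma^{-1}\bar P$. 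Convexity of the barrier forces $\nabla^2\barr \succeq 0$, hence $0 \preceq \big(\bar\Sigma^{-1} + \nabla^2\barr\big)^{-1} \preceq \bar\Sigma$ and therefore $0 \preceq \nabla^2 T_2(z_n) \preceq n\bar P^{\intercal}\bar\Sigma^{-1}\bar P$ uniformly in $z_n$. Combining, $-n\bar R^{\intercal}\bar\Theta^{-1}\bar R \preceq \nabla^2(T_1 + T_2)(z_n) \preceq -n\big(\bar R^{\intercal}\bar\Theta^{-1}\bar R - \bar P^{\intercal}\bar\Sigma^{-1}\bar P\big)$. Now invoke the identities from Theorem~\ref{thm2}, $\bar\Sigma = \bar\Omega + \bar A\bar\Theta\bar A^{\intercal}$ and $\bar P = \bar A\bar R$: by the Woodbury identity, $\bar R^{\intercal}\bar\Theta^{-1}\bar R - \bar P^{\intercal}\bar\Sigma^{-1}\bar P = \bar R^{\intercal}\bar\Theta^{-1/2}\big(I + \bar\Theta^{1/2}\bar A^{\intercal}\bar\Omega^{-1}\bar A\bar\Theta^{1/2}\big)^{-1}\bar\Theta^{-1/2}\bar R =: D$, which is positive definite because $X_E$ is full rank (so $\bar R$ has full column rank) and $(I + \text{PSD})^{-1} \succ 0$, and which satisfies $D \preceq \bar R^{\intercal}\bar\Theta^{-1}\bar R$. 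Set $C_0 := \lambda_{\min}(D)$ and $C_1 := \lambda_{\max}(\bar R^{\intercal}\bar\Theta^{-1}\bar R)$; then $0 < C_0 \le C_1$ and $-nC_1 I \preceq \nabla^2(T_1 + T_2)(z_n) \preceq -nC_0 I$ for every $z_n$.

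It remains to show the Jacobian piece is negligible at this scale, $\lVert\nabla^2 T_3(z_n)\rVert = o(n)$ uniformly over $z_n$ in a fixed compact set. Writing $T_3 = \log J_{\phi}(\cdot;\mathcal{U})\circ\big(\sqrt{n}\gamma_n^{\star}(\cdot)\big)$ and chaining the bound on $\partial\gamma_n^{\star}/\partial z_n$ from the previous paragraph with Lemma~\ref{lemma_jacobian_derivs} and Lemma~\ref{lemma_jacobian_contribution} — which control the gradient and Hessian of $\gamma \mapsto \log J_{\phi}(\gamma;\mathcal{U}) = \log\det\big(\Gamma + \bar U^{\intercal}(X_E^{\intercal}X_E)^{-1}\Lambda\bar U\big)$ evaluated along the diverging optimizer $\sqrt{n}\gamma_n^{\star}$ — the first and second $z_n$-derivatives of $T_3$ are seen to be $o(n)$ on compacta. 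Hence for every $\epsilon_0 \in (0, C_0)$ there is $N$ with $\lVert\nabla^2 T_3(z_n)\rVert \le n\epsilon_0$ for all $n \ge N$ and $z_n$ in any prescribed compact set, and together with the preceding paragraph $-n(C_1 + \epsilon_0)I \preceq \nabla^2\ell_{n,S}(z_n) \preceq -n(C_0 - \epsilon_0)I$. Finally, the uniform strict concavity of $T_1 + T_2$ together with the $o(n)$ perturbation confines the maximizer $\widehat{\beta}_{n, E}^{\;\text{max}}$, for all large $n$, to a fixed compact set; enlarging $\mathcal{C}$ to a compact $\mathcal{C}'$ that contains it along with all segments to points of $\mathcal{C}$, the Hessian sandwich holds on each such segment, and substituting it into the Taylor remainder $\bar H_n(z_n)$ yields exactly the asserted two-sided inequality for all $z_n \in \mathcal{C}$ and $n \ge N$.

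\textbf{Main obstacle.} The crux is the last step: although $\sqrt{n}\gamma_n^{\star}$ diverges, so that $\log J_{\phi}$ is unbounded, it grows only like $\sum_{g\in\mathcal{G}_E}(\lvert g\rvert - 1)\log(\sqrt{n}\gamma_{n,g}^{\star})$, and one must verify that differentiating twice in $z_n$ — which pits the bounded factors $\partial\gamma_n^{\star}/\partial z_n$ against derivatives of $\log\det$ that decay along the diverging argument — still produces an $o(n)$ Hessian; this is precisely the content of Lemmas~\ref{lemma_jacobian_derivs}--\ref{lemma_jacobian_contribution}. Assembling these into the uniform bound, and separately checking that the maximizer is eventually trapped in a fixed compact set so the sandwich can feed the Taylor expansion, are the only nonroutine points; everything else is linear algebra (the Woodbury inequality) and convex calculus (the envelope identity).
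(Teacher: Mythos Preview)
Your proposal is correct and follows essentially the same strategy as the paper's proof: Taylor expand at the maximizer, sandwich the Hessian by separating the Gaussian piece, the barrier--optimization piece, and the Jacobian piece, and invoke Lemmas~\ref{lemma_jacobian_derivs}--\ref{lemma_jacobian_contribution} to show the latter contributes $o(n)$ uniformly on compacta. The paper carries this out in rescaled coordinates $\sqrt{n}\,z_n=b_n\bar z$ and bounds the Hessian of the barrier term more abstractly by recognizing it as a convex conjugate, whereas you compute $\nabla^2 T_2$ directly via the envelope theorem and then use the Woodbury identity to see that $\bar R^{\intercal}\bar\Theta^{-1}\bar R-\bar P^{\intercal}\bar\Sigma^{-1}\bar P\succ 0$; both routes yield the same sandwich with $C_1=\lambda_{\max}(\bar R^{\intercal}\bar\Theta^{-1}\bar R)$ and a positive $C_0$ coming from the Woodbury complement. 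Your explicit remark that $\widehat\beta_{n,E}^{\;\text{max}}$ must eventually lie in a fixed compact set (so that the Hessian sandwich applies along the entire Taylor segment) is a detail the paper leaves implicit.
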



We are now ready to state and prove our main theoretical result on the concentration properties of our selection-informed posterior.


\begin{theorem} \label{thm3}
        Suppose a selection-informed prior $\pi_E(\cdot)$ with compact support $\mathcal{C}$ assigns non-zero probability to $\mathcal{B}(\beta_{n,E},\delta')\subset \mathcal{C}$ for any $\delta' > 0$ such that the inclusion is satisfied. Further, assume for the associated prior measure $\Pi_E(\cdot)$ that
        $$\textstyle\lim_{n\to\infty}\exp(-b_n^2\delta^2 K/2)/\Pi_E(\mathcal{B}(\beta_{n,E},\kappa\delta_n)) = 0 
        $$
         for any $K>0$, $\kappa \in (0,1)$, and $\delta > 0$, where $\delta_n$ is defined by $\sqrt{n}\delta_n = b_n \delta$.
         Then, the following convergence must hold for any $\epsilon>0$:
	\begin{equation*}
		\mathbb{P}_{n,E} \left(  \Pi_{n,E} \left( \mathcal{B}^c(\beta_{n,E},\delta_n) \ | \ \widehat{\beta}_{n,E} ; N_{n,E} \right) \leq \epsilon \right) \rightarrow 1 \text{ as } n \rightarrow \infty. 
	\end{equation*}
\end{theorem}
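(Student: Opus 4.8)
## Proof Strategy for Theorem \ref{thm3}

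\textbf{Overall approach.} The plan is to follow the classical recipe for posterior contraction rates (in the spirit of a Bernstein--von Mises argument adapted to the selection-informed surrogate likelihood), namely: (1) lower-bound the denominator $\int \pi(z_n)\exp(\ell_{n,S}(z_n;\widehat{\beta}_{n,E}\mid N_{n,E}))\,dz_n$ by restricting the integral to a small ball around $\beta_{n,E}$ where the likelihood is nearly maximal; (2) upper-bound the numerator $\int_{\mathcal{B}^c(\beta_{n,E},\delta_n)} \pi(z_n)\exp(\ell_{n,S}(z_n;\widehat{\beta}_{n,E}\mid N_{n,E}))\,dz_n$ using the quadratic curvature control from Proposition \ref{bounds:lik}; (3) show the ratio goes to $0$ with $\mathbb{P}_{n,S}$-probability tending to $1$, using a concentration statement for the location of the maximizer $\widehat{\beta}_{n,E}^{\,\text{max}}$ relative to $\beta_{n,E}$.

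\textbf{Key steps in order.} First I would invoke Proposition \ref{bounds:lik} with the compact set $\mathcal{C} = \operatorname{supp}(\pi)$ to obtain, for all large $n$ and all $z_n \in \mathcal{C}$, the two-sided quadratic sandwich
\[
-\tfrac{n}{2}(C_1+\epsilon_0)\lVert z_n - \widehat{\beta}_{n,E}^{\,\text{max}}\rVert_2^2 \;\le\; \ell_{n,S}(z_n) - \ell_{n,S}(\widehat{\beta}_{n,E}^{\,\text{max}}) \;\le\; -\tfrac{n}{2}(C_0-\epsilon_0)\lVert z_n - \widehat{\beta}_{n,E}^{\,\text{max}}\rVert_2^2.
\]
Second, I would establish that $\lVert \widehat{\beta}_{n,E}^{\,\text{max}} - \beta_{n,E}\rVert_2 = O_{\mathbb{P}_{n,S}}(1/\sqrt{n})$ — heuristically, the surrogate likelihood is (up to lower-order Jacobian and barrier corrections controlled by Lemmas \ref{lemma_jacobian_derivs}, \ref{lemma_jacobian_contribution} and the scaling $\sqrt{n}\delta_n = b_n\delta$ with $b_n = O(\sqrt{n})$) a near-Gaussian log-likelihood in $\sqrt{n}(z_n - \beta_{n,E})$ centered at a root-$n$-consistent statistic, so its maximizer is within $O_{\mathbb{P}}(b_n/n) = O_{\mathbb{P}}(\delta_n)$ of $\beta_{n,E}$; on the event $\{\lVert\widehat{\beta}_{n,E}^{\,\text{max}} - \beta_{n,E}\rVert_2 \le \kappa\delta_n/2\}$ one has $\mathcal{B}(\widehat{\beta}_{n,E}^{\,\text{max}},\kappa\delta_n/2)\subseteq\mathcal{B}(\beta_{n,E},\kappa\delta_n)$ and $\mathcal{B}^c(\beta_{n,E},\delta_n)\subseteq \{z : \lVert z - \widehat{\beta}_{n,E}^{\,\text{max}}\rVert_2 > \delta_n/2\}$. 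Third, on this event, the denominator is bounded below by
\[
\Pi\big(\mathcal{B}(\beta_{n,E},\kappa\delta_n)\big)\cdot\exp\Big(\ell_{n,S}(\widehat{\beta}_{n,E}^{\,\text{max}}) - \tfrac{n}{2}(C_1+\epsilon_0)(\kappa\delta_n/2)^2\Big),
\]
while the numerator is bounded above by
\[
\exp\Big(\ell_{n,S}(\widehat{\beta}_{n,E}^{\,\text{max}})\Big)\int_{\lVert z_n - \widehat{\beta}_{n,E}^{\,\text{max}}\rVert_2 > \delta_n/2} \pi(z_n)\exp\Big(-\tfrac{n}{2}(C_0-\epsilon_0)\lVert z_n - \widehat{\beta}_{n,E}^{\,\text{max}}\rVert_2^2\Big)dz_n,
\]
and since $\mathcal{C}$ is compact the latter integral is at most $\exp(-\tfrac{n}{2}(C_0-\epsilon_0)(\delta_n/2)^2)$ times a constant. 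Taking the ratio, the $\exp(\ell_{n,S}(\widehat{\beta}_{n,E}^{\,\text{max}}))$ factors cancel and one is left (up to constants) with
\[
\frac{\exp\big(-\tfrac{n}{2}(C_0-\epsilon_0)(\delta_n/2)^2\big)}{\Pi(\mathcal{B}(\beta_{n,E},\kappa\delta_n))}\cdot \exp\Big(\tfrac{n}{2}(C_1+\epsilon_0)(\kappa\delta_n/2)^2\Big).
\]
Choosing $\kappa$ small enough (relative to the fixed ratio $C_1/C_0$) so that $(C_1+\epsilon_0)\kappa^2 < (C_0-\epsilon_0)$, the product of the two exponentials is bounded by $\exp(-n\delta_n^2 K'/2) = \exp(-b_n^2\delta^2 K'/2)$ for some $K' > 0$; the prior hypothesis $\lim_n \exp(-b_n^2\delta^2 K/2)/\Pi(\mathcal{B}(\beta_{n,E},\kappa\delta_n)) = 0$ then forces the whole ratio to $0$. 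Finally, combining with the probability of the good event on $\widehat{\beta}_{n,E}^{\,\text{max}}$ tending to $1$ yields $\mathbb{P}_{n,S}(\Pi_{n,S}(\mathcal{B}^c(\beta_{n,E},\delta_n)\mid\widehat{\beta}_{n,E};N_{n,E})\le\epsilon)\to 1$.

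\textbf{Main obstacle.} The step I expect to be hardest is the second one: rigorously establishing root-$n$ (more precisely, $O(\delta_n)$) consistency of the maximizer $\widehat{\beta}_{n,E}^{\,\text{max}}$ under $\mathbb{P}_{n,S}$. This requires combining the asymptotic control of the Jacobian gradient and Hessian from Lemmas \ref{lemma_jacobian_derivs}--\ref{lemma_jacobian_contribution}, the convergence of the generalized Laplace approximation to the exact adjustment from Proposition \ref{gen_laplace_convergence} (to ensure the surrogate and exact likelihoods agree to leading order on the relevant compact set), and a stochastic-boundedness statement for $\widehat{\beta}_{n,E}$ itself under the selection-informed measure — the barrier term $\barr(\sqrt{n}\gamma_n^{\star})$ and the implicitly-defined optimizer $\gamma_n^{\star}(\beta_{n,E})$ make the likelihood only implicitly available, so one must differentiate through the optimization (using the gradient formula in Theorem \ref{thm2}) and verify that the resulting estimating equation has a root near $\beta_{n,E}$ with high probability. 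Once that localization is in hand, the sandwiching argument above is essentially bookkeeping.
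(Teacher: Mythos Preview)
Your proposal is correct and follows essentially the same route as the paper: apply the quadratic sandwich of Proposition~\ref{bounds:lik}, work on the event $\{\lVert\widehat{\beta}_{n,E}^{\,\text{max}}-\beta_{n,E}\rVert\le a\delta_n\}$ (your $\kappa\delta_n/2$), balance the numerator and denominator via a choice of $a$ satisfying a quadratic inequality in $C_0,C_1$, and invoke the prior condition. You have also correctly identified the main obstacle and its resolution---the paper handles consistency of $\widehat{\beta}_{n,E}^{\,\text{max}}$ exactly as you anticipate, by writing the estimating equation $\sqrt{n}\bar{R}^{\intercal}\bar{\Theta}^{-1}\widehat{\beta}_{n,E}=b_n\nabla C_n(\bar{\beta}^{\,\text{max}}_E)$, using strong convexity of $C_n$ (from Lemma~\ref{lemma_jacobian_contribution}) to convert this into a bound on $\lVert\widehat{\beta}_{n,E}^{\,\text{max}}-\beta_{n,E}\rVert$, and then appealing to Proposition~\ref{gen_laplace_convergence} to control $\lVert\nabla C_n(\bar{\beta}_E)-\nabla\bar{C}_n(\bar{\beta}_E)\rVert$.
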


We visualize in Figure~\ref{fig:posterior-concentration} the concentration theory presented in Theorem~\ref{thm3} for our simple example with a single group of two covariates.

\section{Empirical Investigations}
\label{sec:empirical-analysis}


\subsection{Experimental Design}
\label{sec:canonical-gl-posi}

In all of our experiments with synthetic data, we construct the design $X$ by drawing $n = 500$ rows independently according to $\mathcal{N}_p \left( 0, \Sigma \right)$; $\Sigma$ follows an autoregressive structure with the $(i,j)$-th entry of the covariance matrix $\Sigma_{(i, j)}= 0.2^{ \lvert i -j \rvert}$.
We fix the support and vary the values of $\beta$ according to a variety of schemes described below; in each case, we have a ``Low", ``Medium", and ``High" signal-to-noise ratio (SNR) regime as detailed in Appendix \ref{appendix:empirical-analysis}.
Finally, we draw $Y \sim \mathcal{N}_n \left( X \beta, \sigma^2 I_n \right)$, a Gaussian group-sparse linear model; we fix $\sigma = 3$ in our experiments.
We consider the following settings for our grouped covariates:
\begin{itemize}[leftmargin=*]
\itemsep0em
\item Balanced: In our balanced analysis, we partition (i) $p = 100$ covariates into $25$ disjoint groups each of cardinality four when we solve the canonical Group LASSO and the standardized Group LASSO to learn a group-sparse model; (ii) $p=103$ covariates into $34$ groups of four predictors each, but the last feature of the first group is also the first feature of the second group, and so on when we solve the overlapping Group LASSO.
In the latter case,  the first and last groups each have three features in no other groups, and all other groups have two features in no other groups.
We randomly select three of these candidate groups to be active, and let each coefficient assume a random sign with the same magnitudes that in turn depend on the SNR regime.
\item Heterogeneous:
In the heterogeneous setting, we allow the disjoint groups of covariates to differ in their sizes.
Further, our groups of covariates now display heterogeneity in the signal amplitudes both within and between the active groups.
We have 3 groups with three predictors each, 4 groups with four predictors each, 5 groups with five predictors each, and 5 groups with ten predictors each.
We set one of each of the three-, four-, and five-predictor groups to be active with linearly increasing signal magnitudes and each active coefficient is assigned a random sign.
\end{itemize}

For each realization of the data and each setting under study, we apply three methods: (i) ``Selection-informed'', the selection-informed implementation summarized in Algorithm \ref{alg:updates}; drawing each sample remarkably solves only a $|\mathcal{G}_{E}|$-dimensional optimization problem as can be appreciated by reviewing Step \textbf{(Laplace)} in Algorithm~\ref{alg:updates}; (ii) ``Naive'', the standard inferential tool that
first fits the usual Group LASSO \eqref{glasso} with no randomization to identify the active set $E$, and then fits $\widehat{\beta}_E$ using ordinary least squares restricted to $X_E$ from which we obtain credible intervals ignoring the effects of selection; (iii) ``Split'', the sample splitting method follows the same procedure as  ``Naive" except that this method partitions the data at a prespecified ratio $r$, that is, ``Split" applies the usual Group LASSO to $[rn]$ randomly chosen subsamples without replacement to obtain $E$ and then uses the remaining (holdout) samples to fit a linear model restricted to $E$ for interval estimation.
The nominal level for the interval estimates is set at $90\%$.
Following Algorithm \ref{alg:updates}, a (gradient-based) Langevin sampler takes a noisy step along the gradient of our posterior \eqref{grad:log:posterior} at each draw:
\begin{equation} \label{langevin}
    \beta_E^{(K+1)} = \beta_E^{(K)} + \eta \chi \nabla \log \pi_E(\beta_E^{(k)}\  | \ \widehat{\beta}_E, N_E) + \sqrt{2\eta} \epsilon^{(K)},
\end{equation}
where $\eta > 0$ is a predetermined step size, $\pi_E(\beta_E\  | \ \widehat{\beta}_E, N_E)$ denotes our surrogate selection-informed posterior at the parameter vector $\beta_E$, $\epsilon^{(K)}\sim N(0, \chi)$, our Gaussian proposal \citep{shang2015covariance} and we plug in the expression for the gradient of the surrogate posterior from Theorem \ref{thm2}.
In practice, we set $\eta = 1$ and determine $\chi$ from the inverse of the Hessian of our (negative-log) posterior.
It bears emphasis that this sampler serves as a representative execution of our methods; more generally, other sampling schemes to deliver inference based upon our selection-informed posterior (and its gradient) are clearly possible.
We construct credible intervals from the appropriate quantiles in the posterior sample of the parameter for ``Selection-informed" under a diffuse Gaussian (selection-informed) prior, i.e. $\pi_E= \mathcal{N}_{|E|} \left( 0,  r_0\sigma^2\right)$ and $r_0= 100$.
Our credible intervals for ``Split" and ``Naive" are reported for the same prior.

In our experimental findings, we draw attention to comparisons between ``Split" based on an allocation of $r$ fraction of the samples for selecting a group-sparse linear model and
``Selection-informed'' based on the solution of the randomized Group LASSO \eqref{glasso} with a $p$-dimensional isotropic Gaussian randomization variable independent of our response; that is, $\Omega = \tau^2 \cdot I_p$.
To (approximately) match the amount of information utilized during selection by ``Split" for an honest assessment of inference for the randomized methods, we fix the ratio of randomization variation to the noise level in our response as follows:
\begin{equation}
\label{rand:level}
\frac{\tau^2}{\sigma^2}= \frac{(1-r)}{r},
\end{equation}
where $r$ is the proportion of data reserved by ``Split" for solving the Group LASSO.
The value for randomization variation we set for comparisons is motivated from an asymptotic equivalence between data splitting and a Gaussian randomization scheme proved in \citet[Proposition 4.1]{selective_bayesian}, which notes that a regularized regression objective using $[rn]$ subsamples, under an i.i.d. generative process for the response and covariates, can be formulated as \eqref{glasso} with the randomization covariance $\Omega= \frac{(1-r)}{r}\cdot \mathbb{E}[ \frac{1}{n}X^{\tp} X]$.
In this sense, our choice of $\tau^2$ allows us to mimic data carving during post-selective inference for the group-sparse parameters.
Clearly, there will be a tradeoff between the information used for selection and inference when the same data is utilized for learning a group-sparse model and inferring for these selection-informed parameters.
We remark that ``Naive", deploying all the samples for selection, does not strike a tradeoff between the two intertwined goals.

\subsection{Inferential findings for different Grouped Sparsities}
\label{sec:extended-gl-posi}

Based on the design of experiment in the preceding section, we undertake $100$ rounds of numerical simulations for each level of randomization variation and a category of the three SNR regimes, ``Low", ``Medium" and ``High".
Our experimental findings after solving the canonical Group LASSO \eqref{glasso} are summarized for Balanced and Heterogeneous groups in Figures\footnote{We do not plot outliers in any box plots.} \ref{fig:can-que}, \ref{fig:can-cov}, \ref{fig:can-len}.
On the x-axis of these figures, we vary the level of randomization with decreasing levels from left to right.
The randomization level is determined by the ratio of data $r$ allocated for model selection and reserved for inference in the case of ``Split" and the level of variation for the corresponding Gaussian randomization scheme  $\tau^2$ is set according to \eqref{rand:level}  in the case of ``Selection-informed"; this value is denoted by the label $x:y$ such that $(x+y)^{-1}x = r$ on the x-axis.
Note, because ``Naive" deploys no randomization, we found it instructive to assign it a label ``0" for the randomization level on the x-axis.
We then highlight how our method can be adapted for extensions to the overlapping and standardized Group LASSO through Figures \ref{fig:other-que}, \ref{fig:other-cov}, \ref{fig:other-len}.
To implement the overlapping Group LASSO, we take the approach of \citet{jacobGroupLassoOverlap2009} that duplicates overlapping features to obtain an augmented design matrix $X^{\star}$ with no overlaps.
Because columns are duplicated, the expanded design matrix is rank deficient; we therefore incorporate a (small) ridge term as per the prescription in Section \ref{sec:generalizations}.
After selecting the active set $E^{\star}$, we map back to the set of selected variables $E$ in the original space to define our group-sparse model and perform inference for $\beta_E$.

In terms of our findings, Figures \ref{fig:can-que} and \ref{fig:other-que} first depict an assessment of the model selection accuracy which we measure in terms of the $\text{F}1$ score:
$$\text{F}1 \; \text{score}= \dfrac{\text{True Positives}}{\text{True Positives} + \frac{1}{2}(\text{False Positives} +\text{False Negatives}) }.$$
These plots corroborate the approximate correspondence in the amount of information used for model selection by the two randomized methods and subsequently in the quality of models selected by them.
We note that the selection accuracy for the randomized methods increases with decreased levels of randomization and is bounded above by the ``Naive" selection based on all the data.
Figures \ref{fig:can-cov} and \ref{fig:other-cov} plot the distribution of coverages of the credible intervals for all three methods grouped by the level of randomization.
Consistent with expectations, ``Naive" does not yield honest interval estimates with the shortfall in coverage understandably more severe for the lower SNR regimes.
Evident from the empirical distribution for the coverage of interval estimates,  ``Selection-informed" and ``Split" discard information from model selection to counteract the bias in uncertainty estimation.
Figures \ref{fig:can-len} and \ref{fig:other-len}  anchor our motivation in the paper to borrow residual information from selection in order to construct more efficient inferential procedures than the benchmark offered by sample splitting.
The gains in efficiency for ``Selection-informed" are noticeable from the clear separation in the distributions of the lengths of the interval estimates produced by the randomized approaches at a fixed level of randomization in diverse grouped settings, for example, we may compare the third quartile for the ``Selection-informed" distribution with the first quartile for the corresponding ``Split".
The ``Naive" intervals in the figure for lengths highlight the price paid in terms of efficiency for constructing honest estimates of uncertainty post selection.
We note a marginal loss in inferential accuracy for the interval estimates based on our method as the level of randomization increases.
This is attributed to the Laplace-type approximation we apply to replace the probability of selection with the mode of the associated integrand.
That is, the quality of approximation under a fixed sample size deteriorates fractionally as the concentration of probability mass is more spread out in the case of randomization level ``1:2" with four times the level of randomization variation than ``2:1".
The above intuitive explanation is corroborated by Figure \ref{fig:rel:approx} for the simple, running example in the paper.

\subsection{Application to neuroimaging data}
\label{sec:appl-neur-data}

We apply our method to a subset ($n = 785$) of human neuroimaging data from the Human Connectome Project (HCP) \citep{vanessenWUMinnHumanConnectome2013}, a landmark study undertaken by a consortium involving Washington University, the University of Minnesota, and Oxford University.
The HCP has led to a substantial advancement of human neuroimaging methodology and included the collection of several corpora of data which are available to researchers interested in studying brain function and connectivity.

\begin{figure}[H]
  \centering
  \includegraphics[width=0.9\textwidth]{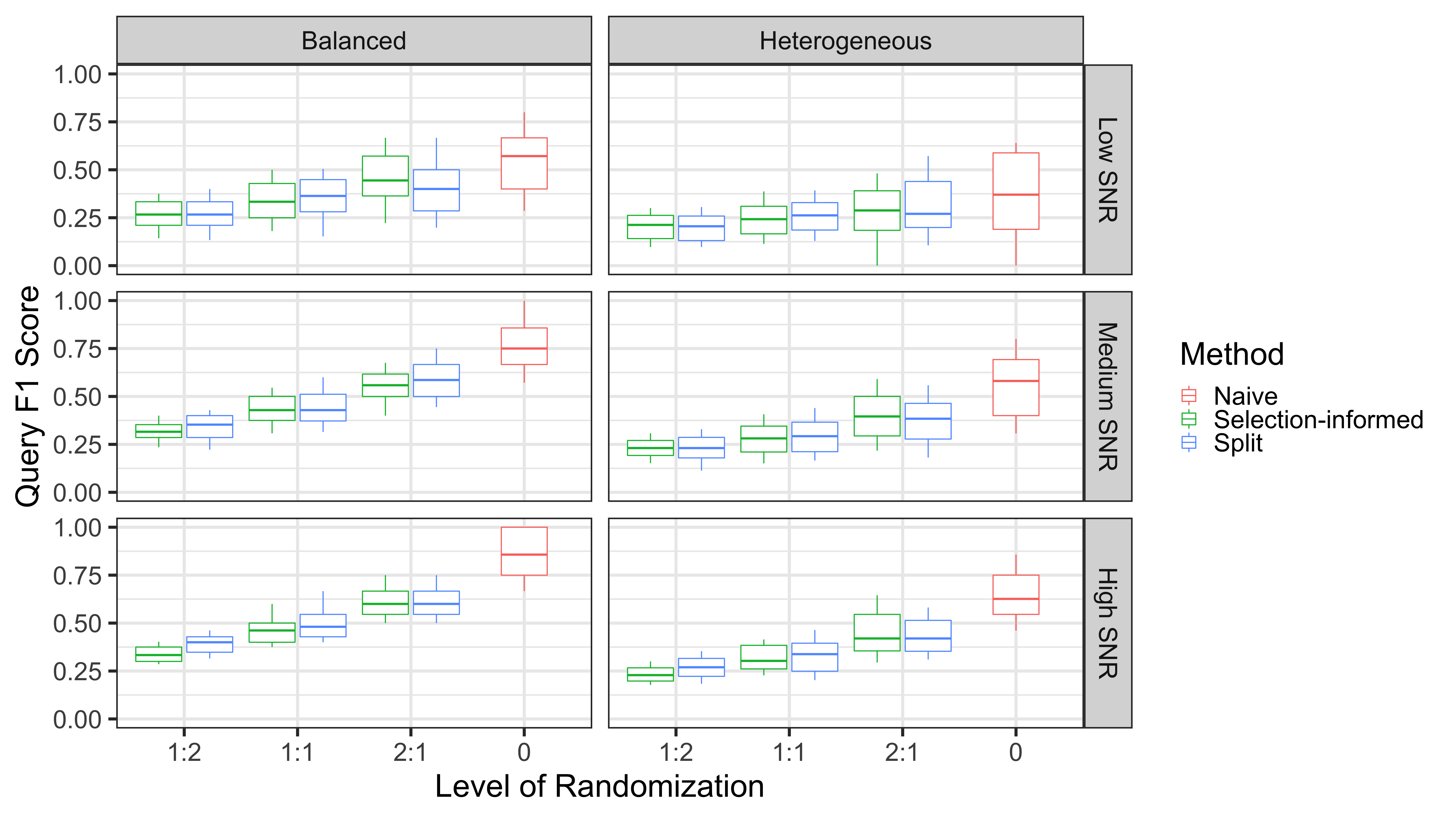}
  \vspace{-0.5cm}
  \caption{Box plots for model selection accuracy under the Group LASSO.}
  \label{fig:can-que}
\end{figure}

\begin{figure}[H]
  \centering
  \includegraphics[width=0.9\textwidth]{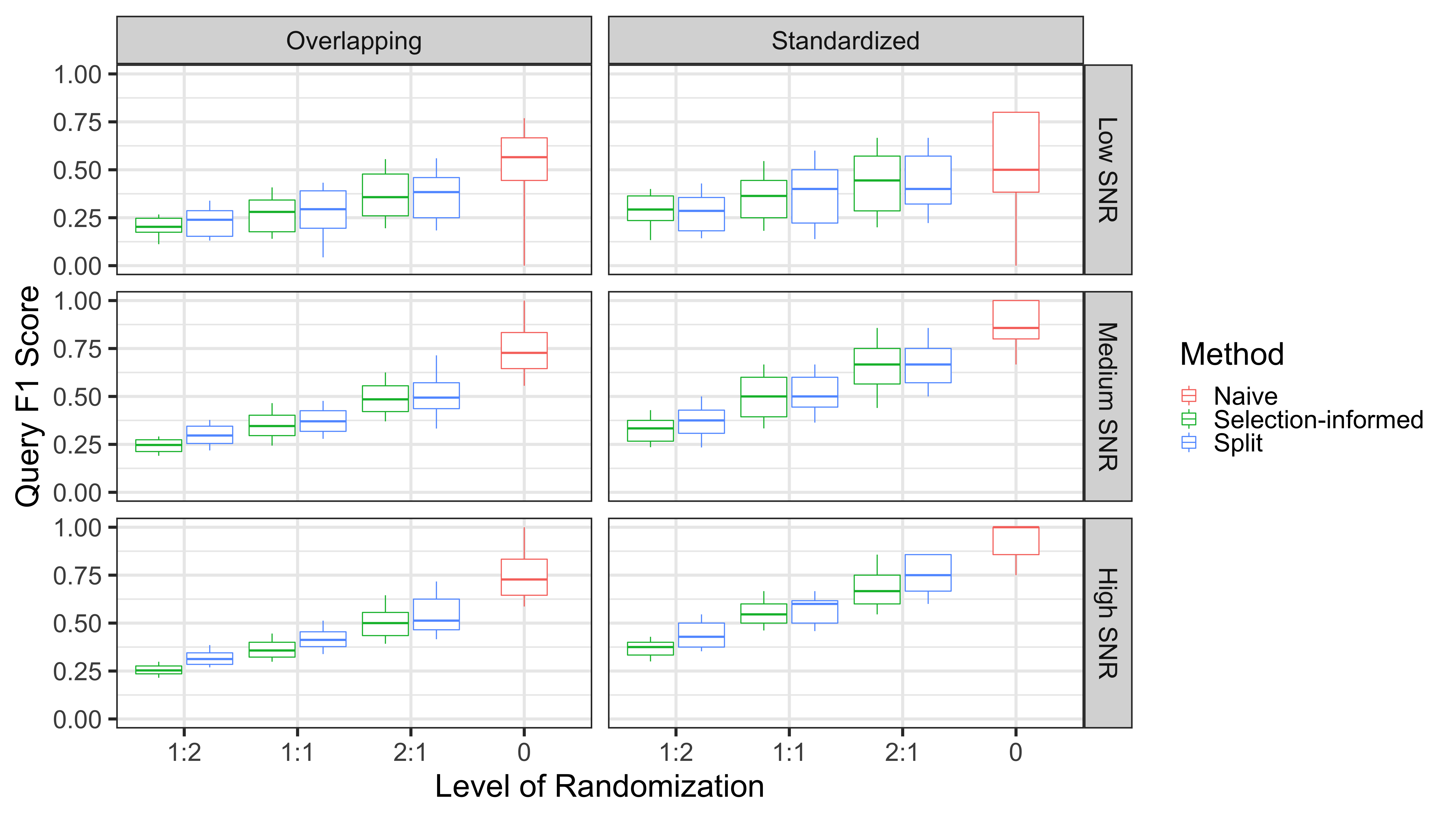}
  \vspace{-0.5cm}
  \caption{Box plots for model selection accuracy under extensions of the Group LASSO.}
  \label{fig:other-que}
\end{figure}

\begin{figure}[H]
  \centering
  \includegraphics[width=0.9\textwidth]{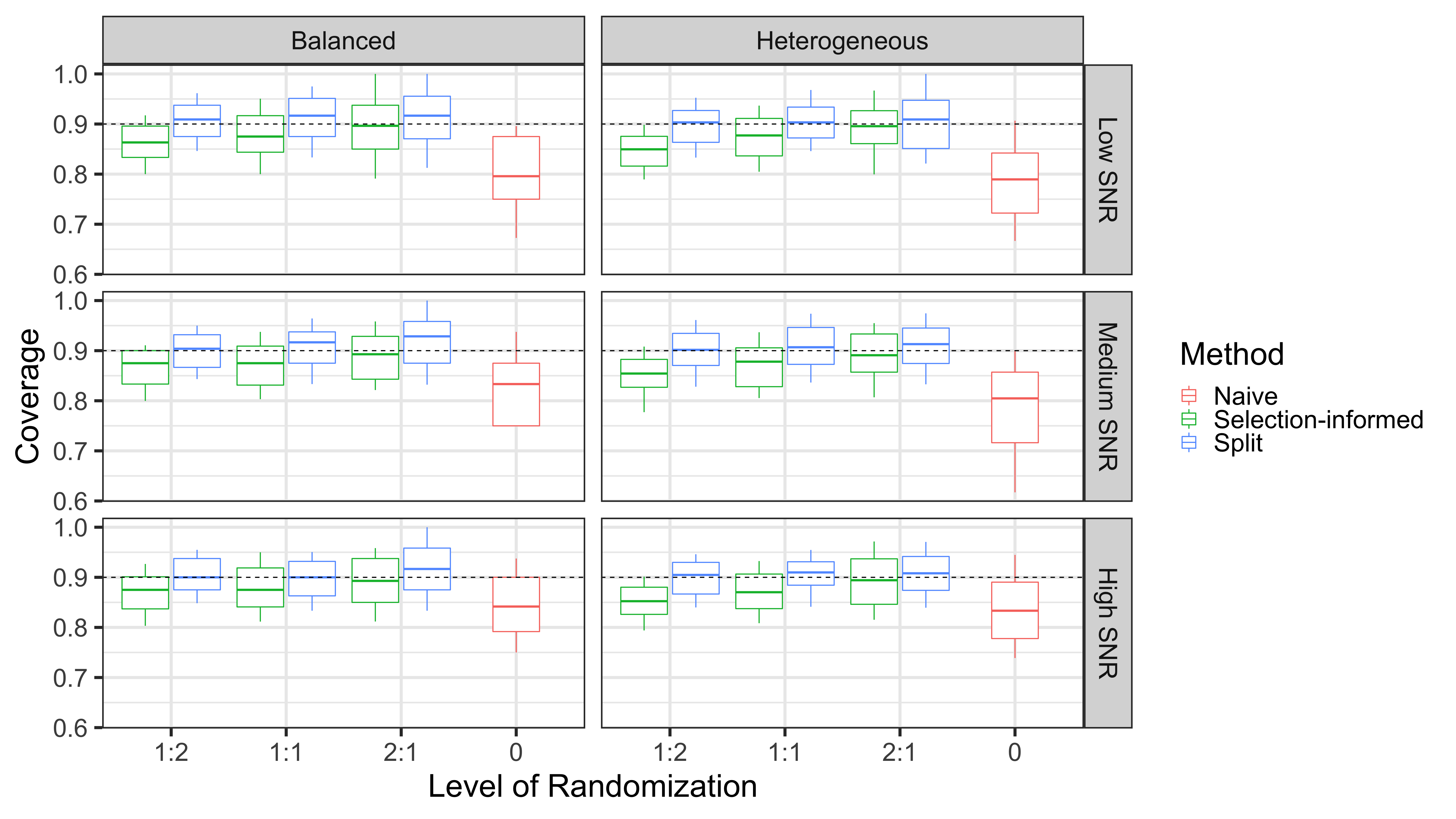}
  \vspace{-0.5cm}
  \caption{Box plots for coverage of credible intervals post the Group LASSO.}
  \label{fig:can-cov}
\end{figure}

\begin{figure}[H]
  \centering
  \includegraphics[width=0.9\textwidth]{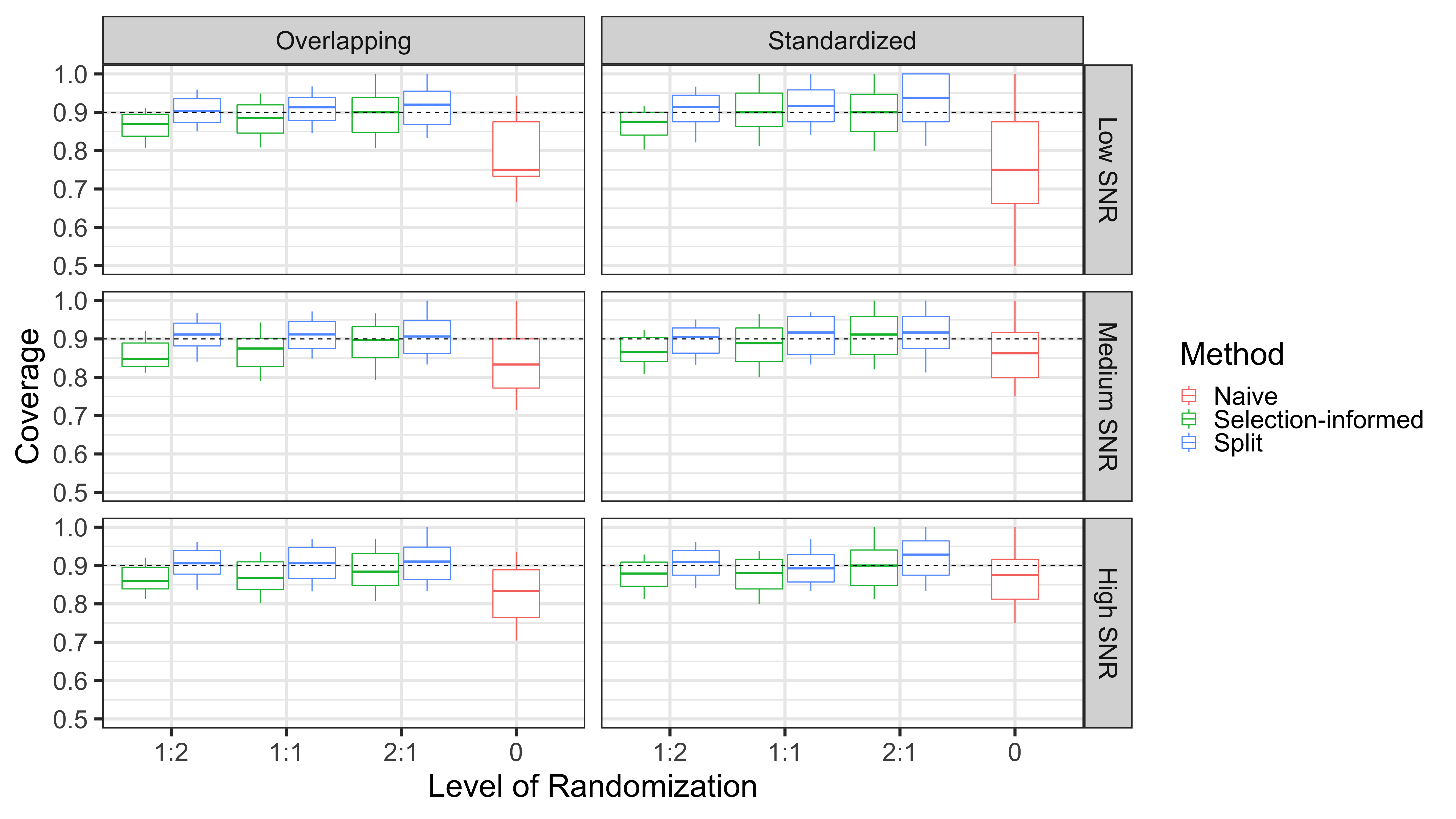}
  \vspace{-0.5cm}
  \caption{Box plots for coverage of credible intervals post the extensions of the Group LASSO.}
  \label{fig:other-cov}
\end{figure}

\begin{figure}[H]
  \centering
  \includegraphics[width=0.9\textwidth]{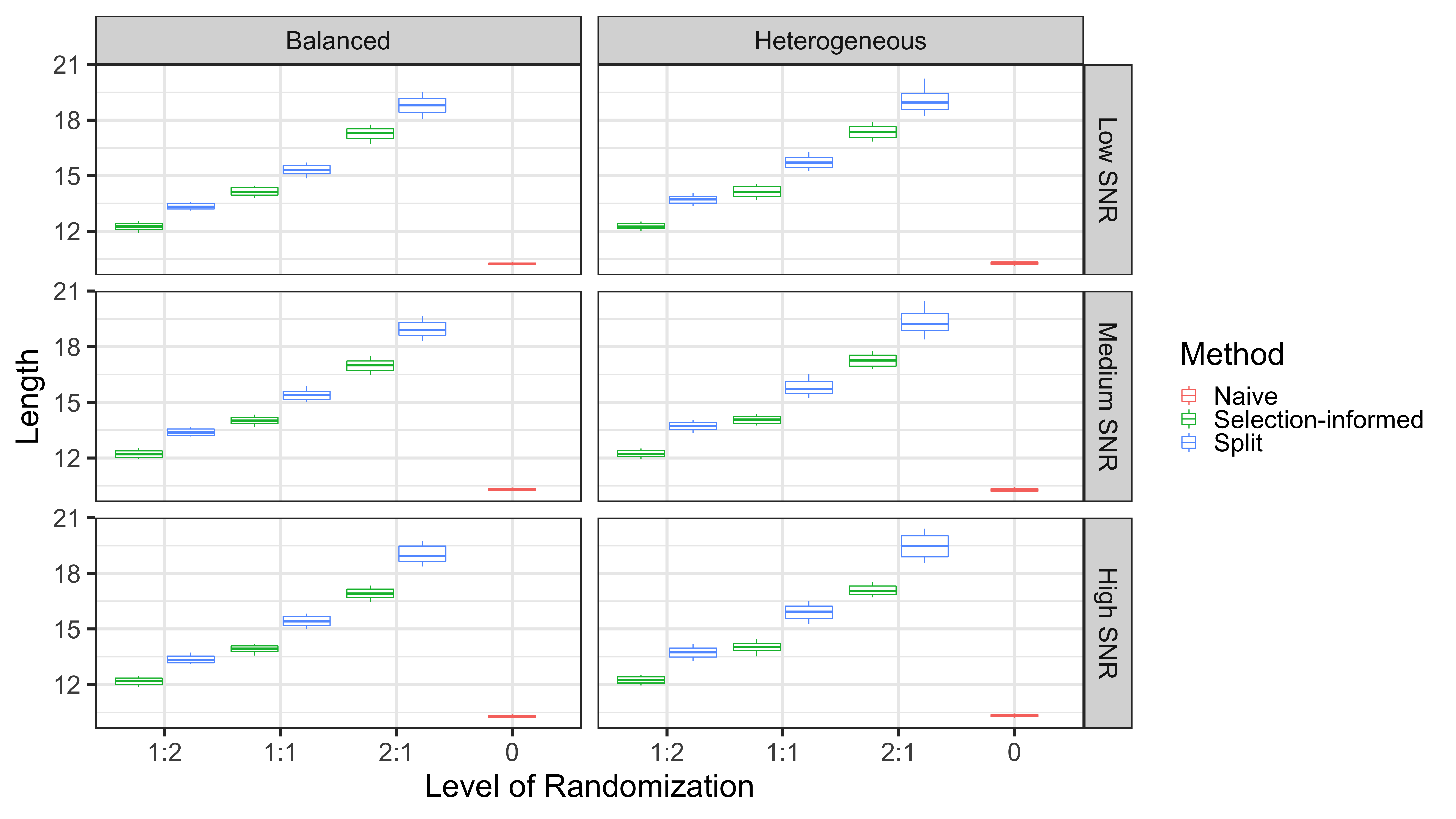}
  \vspace{-0.5cm}
  \caption{
    Box plots for lengths of credible intervals post the canonical Group LASSO.}
  \label{fig:can-len}
\end{figure}

\begin{figure}[H]
  \centering
  \includegraphics[width=0.9\textwidth]{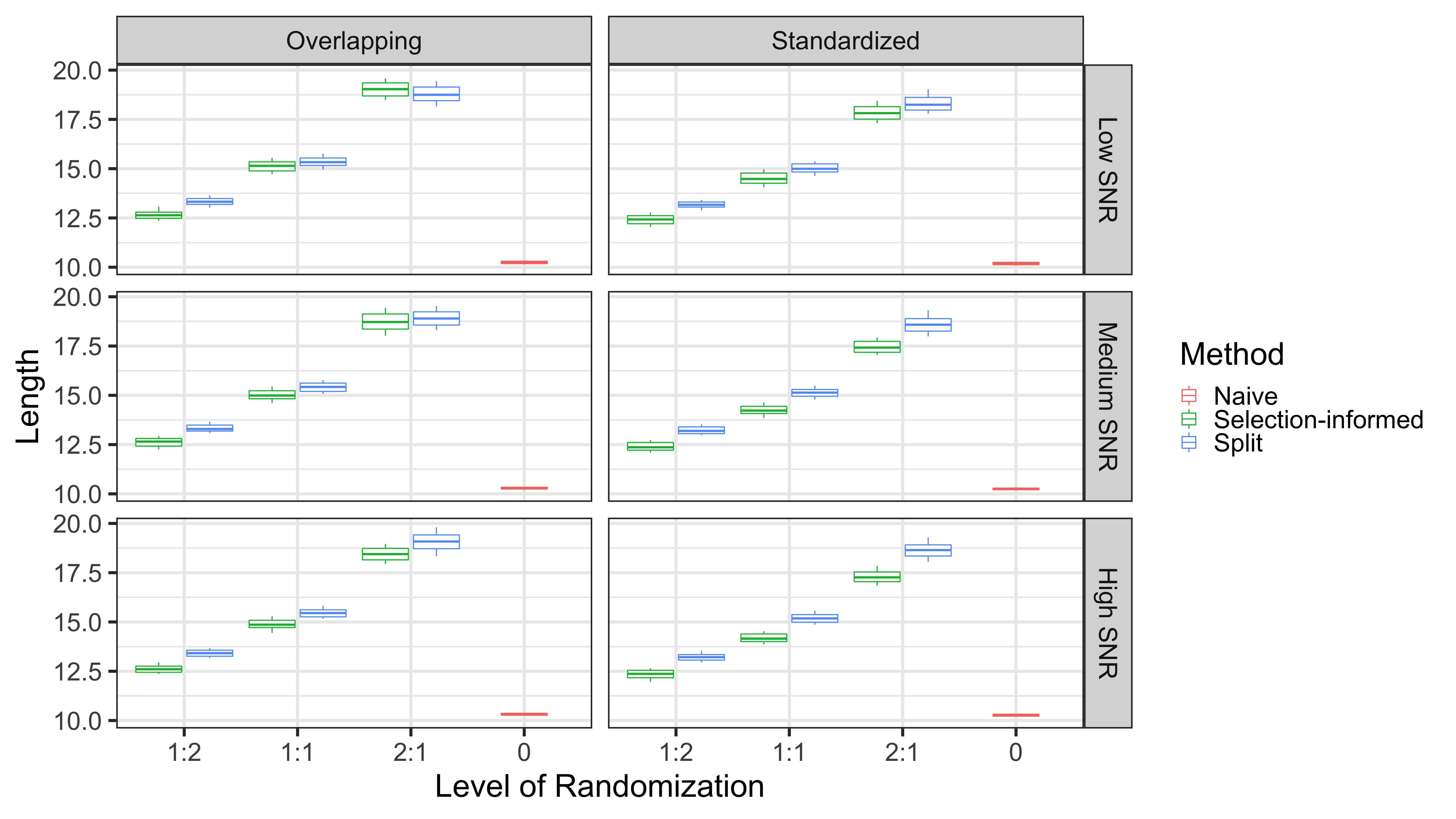}
  \vspace{-0.5cm}
  \caption{
    Box plots for lengths of credible intervals post the extensions of the Group LASSO.}
  \label{fig:other-len}
\end{figure}

We consider below a linear model to understand participant accuracy on a working memory task using brain activity measured at $p = 236$ locations in the brain during performance of the task, using data graciously processed by the lab of our collaborator (see Acknowledgements).
Using both behavioral and functional magnetic resonance imaging (fMRI) measurements recorded from a cognitive task, a standardized measure of accuracy for each participant during this task will be our response $y$ and contrasts relying upon brain activation records during the task form our covariates $X$.
We provide a summary of these details in Appendix \ref{appendix:HCP}, accompanied by a description for the preprocessing steps and parameter settings.
Our analysis here groups the covariates by brain system and applies our selection-informed method to calibrate interval estimates for coefficients within any selected systems.

We apply both the randomized Group LASSO and the Group LASSO to a random split of this dataset.
We consider the level of isotropic Gaussian randomization to be $1:1$, $2:1$, and $9:1$, by setting the variance parameter $\tau^2$ according to \eqref{rand:level} after replacing $\sigma^2$ with $$\hat{\sigma}^2 = \left( n - p \right)^{-1} \left\lVert y - X \left( X^{\intercal} X \right)^{-1} X^{\intercal} y \right\rVert_2^2,$$
for $r=1/2, 2/3, 9/10$, respectively.
In all cases, we select one group: the ``Fronto-parietal Task Control'' (FP) system.
To restore inferential validity, we reuse  data via our ``Selection-informed" method to draw samples from the surrogate selection-adjusted posterior.
Given the overlap between datasets employed, \citet{sripadaTreadmillTestCognition2020} also found that activation in the Fronto-parietal Task Control system could be used to predict ``General Cognitive Ability'' (GCA): this general pattern that included activation in the fronto-parietal system and deactivation in the default mode system under general cognitive demands (including working memory) is discussed and reviewed in \citet{sripadaTreadmillTestCognition2020}.
We depict $90\%$ interval estimates for each of the locations in the brain within the selected FP system under both the ``Selection-informed" and ``Split" methods in Figure \ref{fig:hcp-node-CIs} at varying levels of randomization.
Corroborating the general pattern from our numerical experiments, the intervals from our ``Selection-informed" methods are roughly 8\% shorter than those from ``Split" on an average in the application.

\begin{figure}[h]
  \includegraphics[width=\textwidth]{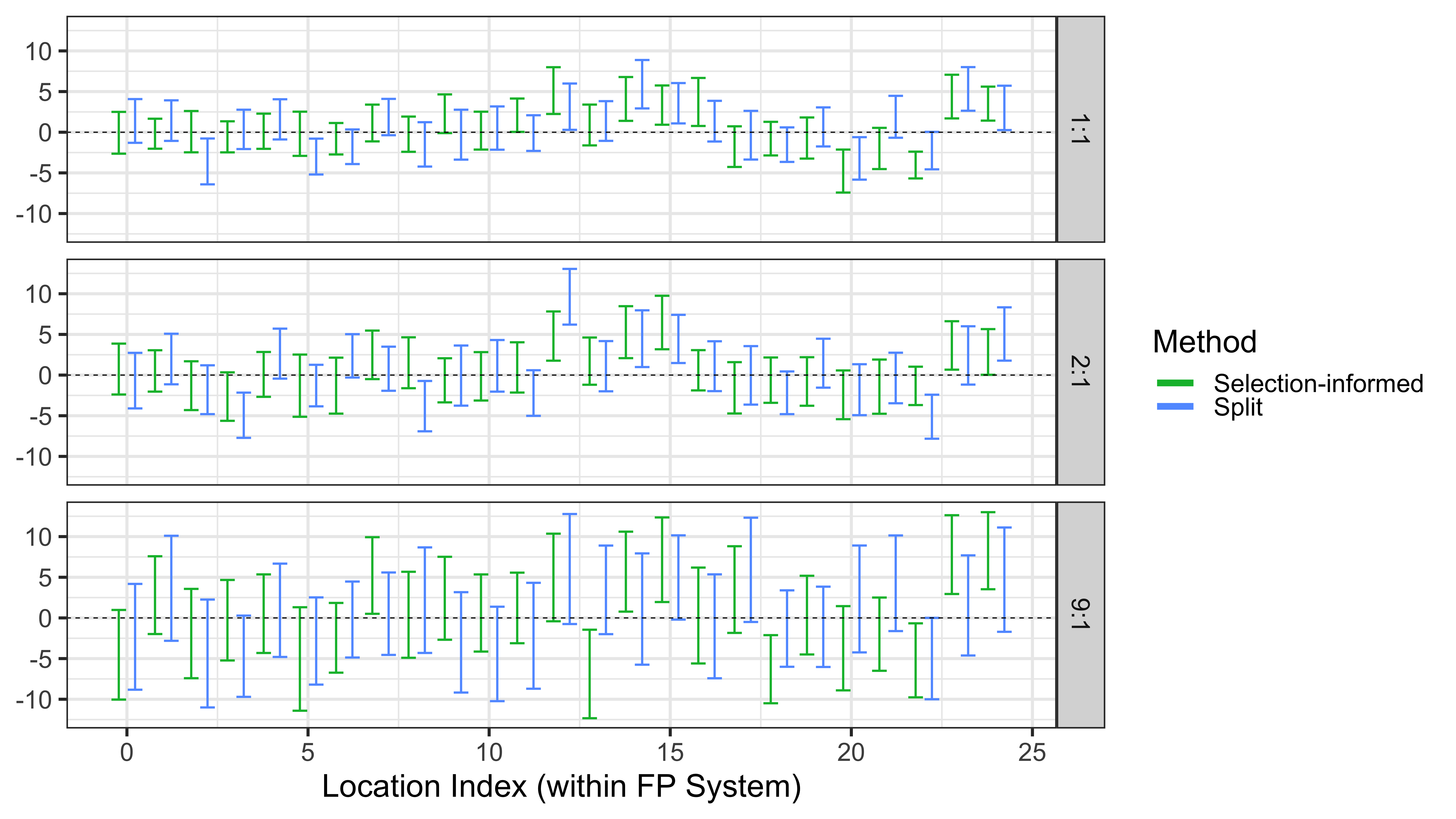}
  \caption{Interval estimates for coefficients at each location within the selected FP brain system for human neuroimaging application.
    Intervals are provided for both our ``Selection-informed" method and ``Split" at a variety of randomization levels.
   At the levels of randomization $1:1$, $2:1$, and $9:1$, average interval widths for $(\text{``Selection-informed"}, \text{``Split"})$ are $(4.72, 5.10), (5.99, 6.17), (10.03, 11.77)$.
  }
  \label{fig:hcp-node-CIs}
\end{figure}

\section{Conclusion}
\label{sec:conclusion}

In this paper, we provide methods to account for the selection-informed nature of models after solving group-sparse learning algorithms.
Deriving conditional inference in these settings is particularly challenging due to a breakdown of a polyhedral representation for the selection event.
Formally cast into a Bayesian framework, we successfully characterize an exact adjustment factor to account for selections of grouped variables and importantly, provide a computationally feasible solution to bridge the gap between theory and practice for a general class of group-sparse models.
Appealingly amenable to a large class of grouped sparsities and context-relevant targets, the efficiency of our methods is evident from the minimal price we pay to correct for selection in comparison to naive inference.

Our work leaves room for promising directions of research which we hope to take on as future investigations.
The performance of our methods serve as an encouraging direction to tackle non-affine geometries in general, seen often with penalties disparate in behavior from $\ell_1$-sparsity imposing algorithms.
These developments do not preclude an asymptotic framework for valid inference after the Group LASSO, when we deviate from Gaussian distributions.
Lastly, grouped selections form the first step of many hierarchical exploratory pipelines (for example, in genomic studies) to select predictors with better meaning and accuracy.
Our solutions for reusing data after a class of grouped selection rules hold the potential to infer using automated models from these complicated selection pipelines.

 \section{Acknowledgments}

S.P. was supported by NSF-DMS 1951980 and NSF-DMS 2113342. D.K.\ was supported by NSF-DMS 1646108 and a Rackham Predoctoral Fellowship from the University of Michigan. P.W.M.\ was supported by NSF-DMS 1916222.
Data were provided in part by the Human Connectome Project, WU-Minn Consortium (Principal Investigators: David Van Essen and Kamil Ugurbil; 1U54MH091657) funded by the 16 NIH Institutes and Centers that support the NIH Blueprint for Neuroscience Research; and by the McDonnell Center for Systems Neuroscience at Washington University. 
We thank Dr.~Chandra Sripada and his research group (with particular thanks to Saige Rutherford) for providing a processed version of this data as well as helpful comments.
This research was supported in part through computational resources and services provided by Advanced Research Computing (ARC), a division of Information and Technology Services (ITS) at the University of Michigan, Ann Arbor.
In addition, this work used the Extreme Science and Engineering Discovery Environment (XSEDE), which is supported by National Science Foundation grant number ACI-1548562.

\bibliographystyle{plainnat}
\bibliography{references}

\appendix

\section{Proofs for technical developments (Section \ref{sec:methodology})}
\label{appendix:methodology}

\begin{proof}[Proof of Theorem~\ref{thm1}.]
        We write our adjustment factor as follows:
        \begin{equation*}
        \begin{aligned}
        & \mathbb{P}(\mathcal{A}_E \  | \ \beta_E)= \int \int \mathrm{p}(\beta_E, \Sigma_E ; \widehat{\beta}_E) \cdot \mathrm{p}(0, \Omega ; \omega) \cdot \mathbf{1}_{\mathcal{A}_{E}}(\widehat{\beta}_E,\omega)\ d\omega  d\widehat{\beta}_E.
        \end{aligned}
        \end{equation*}
        Next, the change of variables in \eqref{CoV} together with the conditioning upon $\widehat{\mathcal{U}}= \mathcal{U}$ and $\widehat{\mathcal{Z}}= \mathcal{Z}$ results in the below simplification
        \begin{equation} \label{adj:factor}
        \begin{aligned}
        &\mathbb{P}(\mathcal{A}_E \  | \  \beta_E) =  \int \int J_{\phi_{\widehat{\beta}_E}}(\widehat\gamma; \mathcal{U}, \mathcal{Z}) \cdot \mathrm{p}(\beta_E, \Sigma_E ; \widehat{\beta}_E) \cdot\mathrm{p}(0, \Omega ; \phi_{\widehat{\beta}_E}(\widehat{\gamma}, \mathcal{U}, \mathcal{Z})) \cdot \mathbf{1}(\widehat{\gamma} > 0) \ d\widehat{\gamma}  d\widehat{\beta}_E\\
        &\propto \int \int J_{\phi_{\widehat{\beta}_E}}(\widehat\gamma; \mathcal{U}, \mathcal{Z}) \cdot \mathrm{p}(\beta_E, \Sigma_E ; \widehat{\beta}_E) \cdot\exp \Big\{ - \frac{1}{2} (A\widehat{\beta}_E + B(\mathcal{U})\widehat{\gamma} +c(\mathcal{U}, \mathcal{Z}))^{\intercal}\Omega^{-1} \\
        & \;\;\;\;\; \;\;\;\;\; \;\;\;\;\; \;\;\;\;\; \;\;\;\;\; \;\;\;\;\; \;\;\; \;\;\;  \;\;\;\;\; \;\;\;\;\; \;\;\;\;\; \;\;\;\;\;\;\;\;\;\; \;\;\;\;\;\;\;\;  (A\widehat{\beta}_E + B(\mathcal{U})\widehat{\gamma} +c(\mathcal{U}, \mathcal{Z})) \Big\}\cdot \mathbf{1}(\widehat{\gamma} > 0) \ d\widehat{\gamma}  d\widehat{\beta}_E
        \end{aligned}
        \end{equation}
        where $J_{\phi_{\widehat{\beta}_E}}(\gamma; \mathcal{U}, \mathcal{Z})$ is the Jacobian associated with the change of variables $\phi_{\widehat{\beta}_E}(\cdot)$.
         To compute the non trivial Jacobian, we write the change of variables map in \eqref{CoV} as follows:
         \begin{equation*}
	  (\widehat{\gamma},\mathcal{U},\mathcal{Z}) \overset{\phi}{\longmapsto} (\phi_1,\phi_2);
	\end{equation*}
         \begin{equation*}
         \begin{aligned}
	  &\phi_1(\widehat{\beta}_E,\widehat{\gamma},\mathcal{U},\mathcal{Z}) = Q U \widehat{\gamma} - Q\widehat{\beta}_E - ((N_{E,j})_{j \in E}) + ((\lambda_g u_g)_{g \in \mathcal{G}_E}),\\
	  &\;\;\;\;\;\;\phi_2(\widehat{\beta}_E,\widehat{\gamma},\mathcal{U},\mathcal{Z}) = X_{-E}^{\intercal}X_E(U\widehat{\gamma} - \widehat{\beta}_E) - ((N_{E,j})_{j \in -E}) + ((\lambda_g z_g)_{g \in -\mathcal{G}_E}).
	 \end{aligned}
	\end{equation*}
	such that $Q = X_E^{\intercal} X_E$.
	Then, the derivative matrix is given by:
	\begin{equation*}
	  D_{\phi_{\widehat{\beta}_E}} =
	  \begin{pmatrix}
	     \pdev{\phi_1}{\mathcal{U}} & \pdev{\phi_1}{\widehat{\gamma}} & \pdev{\phi_1}{\mathcal{Z}} \\
	     \pdev{\phi_2}{\mathcal{U}} & \pdev{\phi_2}{\widehat{\gamma}} & \pdev{\phi_2}{\mathcal{Z}}
	  \end{pmatrix},
	\end{equation*}
         where $\pdev{}{\mathcal{U}}(\cdot)$ refers to differentiation with respect to each $u_g$ in the coordinates of its tangent space.
         Note that the block $\pdev{\phi_1}{\mathcal{Z}}$ above the diagonal is zero and $\det\left(\pdev{\phi_2}{\mathcal{Z}} \right) \propto 1$.
	Thus, it follows that
	\begin{equation*}
	 J_{\phi_{\widehat{\beta}_E}}(\widehat{\gamma};\mathcal{U}, \mathcal{Z})=  \det(D_{\phi_{\widehat{\beta}_E}}) \propto \det \left( \hmat{\pdev{\phi_1}{\mathcal{U}}}{\pdev{\phi_1}{\widehat{\gamma}}} \right).
	\end{equation*}
	First, it is easy to see $\pdev{\phi_1}{\widehat{\gamma}} = QU$.
	For computing the other block $\pdev{\phi_1}{\mathcal{U}}$, let $u_g \in \mathcal{S}^{|g|-1}$ be associated with the tangent space:
	$T_{u_g}\mathcal{S}^{|g|-1} = \{v : v^{\intercal}u_g = 0\}$,
	the orthogonal complement of $\operatorname{span} \{u_g\}$; $\bar{U}_g$ is a fixed orthonormal basis for this tangent space.
	For a vector of coordinates $y_g \in \mathbb{R}^{|g|-1}$ and a general function $h$,
	\begin{equation*}
	  \pdev{h(u_g)}{u_g} := \pdev{h(u_g + \bar{U}_g y_g)}{y_g}.
	\end{equation*}
	Writing this more compactly with a stacked vector $y = (y_1, \ldots ,y_{|\mathcal{G}_E|})^{\intercal}$, it follows that for fixed $g$,
	\begin{equation*}
	\begin{aligned}
	  \pdev{\phi_1}{u_g} &= \pdev{}{y_g} \left\{ Q(U + \bar{U}y)\widehat{\gamma} + ((\lambda_g(u_g + \bar{U}_gy_g))_{g \in E}) \right\} \\
	                     &= Q
	                         \begin{pmatrix}
	                           0 &
	                           \cdots &
	                           (\widehat{\gamma}_g \bar{U}_g)^{\intercal} &
	                           \cdots &
	                           0
	                         \end{pmatrix}^{\intercal} +
	  \begin{pmatrix}
	    0 &
	    \cdots &
	    (\lambda_g \bar{U}_g)^{\intercal} &
	    \cdots &
	    0
	  \end{pmatrix}^{\intercal},
	 \end{aligned}
	\end{equation*}
	and combining these column-wise we obtain the full derivative matrix
	\begin{equation*}
	\begin{aligned}
	  \pdev{\phi_1}{\mathcal{U}} =
	                       \begin{pmatrix}
	                         \pdev{\phi_1}{u_1} & \cdots & \pdev{\phi_1}{u_{|\mathcal{G}_E|}}
	                       \end{pmatrix} &= Q \cdot \operatorname{diag}( (\widehat{\gamma}_g \bar{U}_g)_{g \in E}) + \operatorname{diag}( (\lambda_g \bar{U}_g)_{g \in E} ) \\
	                   &= Q \bar{\Gamma} \bar{U} + \Lambda \bar{U} ; \ \text{ where }\  \bar\Gamma = \operatorname{diag} \left( \left( \widehat{\gamma}_g I_{|g|} \right)_{g \in \mathcal{G}_E} \right),\\
	                   &= (Q\bar{\Gamma} + \Lambda)\bar{U}.
	 \end{aligned}
	\end{equation*}
	This gives us
	\begin{equation}
	\label{alt:exp:det}
	\det( D_{\phi_{\widehat{\beta}_E}}) \propto \det \hmat{(Q\bar{\Gamma} + \Lambda)\bar{U}}{QU}.
	\end{equation}
	 Simplifying this expression further
	\begin{equation*} \label{jacobian}
	\begin{aligned}
	  \det( D_{\phi_{\widehat{\beta}_E}}) &\propto
	                 \det \left( \vmat{\bar{U}^{\intercal}}{U^{\intercal}} \hmat{\bar{\Gamma} \bar{U} + Q^{-1}\Lambda \bar{U}}{U} \right) = \det \left(
	                    \begin{pmatrix}
	                      \Gamma + \bar{U}^{\intercal}Q^{-1}\Lambda \bar{U} & 0 \\
	                      \ldots & I_{|E|}
	                    \end{pmatrix} \right) \nonumber \\
	                &= \det( \Gamma + \bar{U}^{\intercal}Q^{-1}\Lambda \bar{U} ),
	\end{aligned}
	\end{equation*}
      This follows since $\bar{U}^{\intercal} \bar\Gamma \bar{U} = \Gamma$ by block orthogonality, and the final equality is deduced using block triangularity. This proves our claim in the Theorem.
	\end{proof}

\begin{proof}[Proof of Proposition \ref{prop:1}.]
Ignoring the selection-informed prior for now, our likelihood after conditioning upon the event $\mathcal{A}_E$ is given by
\begin{equation*}
\begin{aligned}
& \dfrac{\pp(\beta_E, \Sigma_E ; \widehat{\beta}_E)\cdot \bigintssss J_{\phi}(\widehat{\gamma}, \mathcal{U}) \exp\Big\{-\frac{1}{2}(A\widehat{\beta}_E + B\widehat{\gamma} +c)^{\intercal}\Omega^{-1} (A\widehat{\beta}_E + B\widehat{\gamma} +c)\Big\}\cdot \mathbf{1}(\widehat{\gamma} > 0)d\widehat{\gamma}}{\bigintssss J_{\phi}(\widehat{\gamma}, \mathcal{U})  \cdot \pp(\beta_E, \Sigma_E ; \widehat{\beta}_E)\cdot \exp\Big\{-\frac{1}{2}(A\widehat{\beta}_E + B\widehat{\gamma} +c)^{\intercal}\Omega^{-1} (A\widehat{\beta}_E + B\widehat{\gamma} +c)\Big\}\cdot \mathbf{1}(\widehat{\gamma} > 0) d\widehat{\gamma} d\widehat{\beta}_E}.
 \end{aligned}
\end{equation*}
We note that this expression is proportional to:
\begin{equation*}
\begin{aligned}
&
\left(\int J_{\phi}(\widehat{\gamma}, \mathcal{U})  \mathrm{p}(\bar{R}\beta_E+ \bar{s}, \bar{\Theta} ; \widehat{\beta}_E)\cdot  \mathrm{p}(\bar{A}\widehat{\beta}_E+ \bar{b}, \bar{\Omega} ; \widehat{\gamma})\cdot \mathbf{1}(\widehat{\gamma} > 0) d\widehat{\gamma} d\widehat{\beta}_E\right)^{-1} \\
&\;\;\;\;\;\;\;\;\;\;\;\;\;\;\;\;\;\;\;\;\;\;\;\;\;\;\;\;\;\;\;\;\;\;\;\;\;\;\;\;\;\;\;\;\;\;\;\;\;\;\;\;\;\;\;\;\;\;\;\;\;\;\;\;\;\;\;\;\;\;\;\;\;\;\;\;\;\;\;\;\;\;\;\;\;\;\;\;\;\;\;\;\;\;\times\mathrm{p}(\bar{R}\beta_E+ \bar{s}, \bar{\Theta}; \widehat{\beta}_E)
 \end{aligned}
\end{equation*}
leaving out the constants in $\beta_E$,
where
$$\bar{\Omega} = (B^{\intercal}\Omega^{-1}B)^{-1}, \; \bar{A} = -\bar{\Omega} B^{\intercal} \Omega^{-1} A, \;\bar{b} =  -\bar{\Omega} B^{\intercal} \Omega^{-1} c,$$
$$\bar{\Theta} =\left( \Sigma_E^{-1} - (\bar{A})^{\intercal}  (\bar{\Omega})^{-1} \bar{A} + A^{\intercal} \Omega^{-1} A\right)^{-1} , \ \bar{R} = \bar{\Theta}\Sigma_E^{-1}, \ \bar{s} = \bar{\Theta}\left((\bar{A})^{\intercal}  (\bar{\Omega})^{-1} \bar{b} - A^{\intercal}\Omega^{-1}c\right).$$
This display relies on the observation:
\begin{equation*}
\begin{aligned}
&\pp(\beta_E, \Sigma_E ; \widehat{\beta}_E) \exp\Big\{-\frac{1}{2}(A\widehat{\beta}_E + B\widehat{\gamma} +c)^{\intercal}\Omega^{-1} (A\widehat{\beta}_E + B\widehat{\gamma} +c)\Big\}\\
&\;\;\;\;\;\;\;\;\;\;\;\;\;\;\;\;\;\;\;\;\;\;\;\;\;\;\;\;\;\;\;\;\;\;\;\;\;\;\;\;\;\;\;\;\;\;\;\;\;\;\;\;\;\;\;\;= K(\beta_E) \cdot  \mathrm{p}(\bar{R}\beta_E+ \bar{s}, \bar{\Theta} ; \widehat{\beta}_E)\cdot  \mathrm{p}(\bar{A}\widehat{\beta}_E+ \bar{b}, \bar{\Omega} ; \widehat{\gamma}),
 \end{aligned}
\end{equation*}
such that $K(\beta_E)$ involves $\beta_E$ alone.
\end{proof}

\begin{proof}[Proof of Theorem~\ref{thm2}.]
To derive the expression in \eqref{log:posterior}, we note that optimizing over $\widetilde{\beta}_E$ in the problem:
\begin{equation*}
\begin{aligned}
& \text{minimize}_{\widetilde{\beta}_E, \widetilde{\gamma}}\ \Big\{\dfrac{1}{2}(\widetilde{\beta}_E-\bar{R}\beta_E-\bar{s})^{\intercal} \bar\Theta^{-1} (\widetilde{\beta}_E-\bar{R}\beta_E-\bar{s})   \\
&\;\;\;\;\;\;\;\;\;\;\;\;\;\;\;\;\;\;\;\;\;\;\;\;\;\;\;\;\;\;+ \dfrac{1}{2} (\widetilde{\gamma}-\bar{A}\widetilde{\beta}_E -\bar{b})^{\intercal} (\bar{\Omega})^{-1}(\widetilde{\gamma}-\bar{A}\widetilde{\beta}_E -\bar{b})+ \barr (\widetilde\gamma)\Big\}
\end{aligned}
\end{equation*}
gives us
$$
\text{minimize}_{\gamma}\ \dfrac{1}{2}(\gamma-\bar{P}\beta_E -\bar{q})^{\intercal} (\bar{\Sigma})^{-1} (\gamma-\bar{P}\beta_E -\bar{q}) + \barr (\gamma).
$$
Plugging in the value of this optimization into \eqref{sel:post:gen:Laplace}, the logarithm of the surrogate posterior is given by the expression
\begin{equation*}
\begin{aligned}
\log \pi_E(\beta_E) +  \log \mathrm{p}(\bar{R}\beta_E+ \bar{s}, \bar{\Theta}; \widehat{\beta}_E) + \dfrac{1}{2}(\gamma^{\star}-\bar{P}\beta_E -\bar{q})^{\intercal} (\bar{\Sigma})^{-1} & (\gamma^{\star} -\bar{P}\beta_E -\bar{q})  \\
&+ \barr (\gamma^{\star})- \log J_{\phi}(\gamma^{\star}; \mathcal{U}),
\end{aligned}
\end{equation*}
ignoring additive constants.
To compute the gradient of the (log) surrogate posterior, define
$\zeta^{\star}(\cdot)$ to be the convex conjugate for the function
\begin{equation*} \label{zeta_func}
\zeta(\gamma) = \dfrac{1}{2}\gamma^{\intercal}(\bar{\Sigma})^{-1}\gamma + \barr (\gamma).
\end{equation*}
This allows us to write \eqref{log:posterior} in the below form
\begin{equation} \label{log_posterior_conjugate}
\begin{aligned}
&\log \pi_E(\beta_E) - \dfrac{1}{2}(\widehat{\beta}_E -\bar{R}\beta_E- \bar{s})^{\intercal}(\bar{\Theta})^{-1}(\widehat{\beta}_E -\bar{R}\beta_E- \bar{s}) \\
&\;\;\;\;\;\;\;\;\;\;\;\;\;\;+\dfrac{1}{2}(\bar{P}\beta_E +\bar{q})^{\intercal} (\bar{\Sigma})^{-1} (\bar{P}\beta_E +\bar{q})-\zeta^{\star}((\bar{\Sigma})^{-1} (\bar{P}\beta_E +\bar{q})) - \log J_{\phi}(\gamma^{\star}; \mathcal{U}).
\end{aligned}
\end{equation}
Denoting $L(\beta_E) = (\bar{\Sigma})^{-1} (\bar{P}\beta_E +\bar{q})$ and taking the derivative of \eqref{log_posterior_conjugate} with respect to $\beta_E$ gives us:
\begin{equation*}
\begin{aligned}
& \nabla\log \pi_E(\beta_E) +(\bar{R})^{\intercal}(\bar{\Theta})^{-1}(\widehat{\beta}_E -\bar{R}\beta_E- \bar{s}) + \bar{P}^{\intercal}(\bar{\Sigma})^{-1} (\bar{P}\beta_E +\bar{q})\\
&\Scale[0.95]{-\bar{P}^{\intercal}(\bar{\Sigma})^{-1} \nabla_{L(\beta_E)}\zeta^{\star}((\bar{\Sigma})^{-1} (\bar{P}\beta_E +\bar{q})) -\bar{P}^{\intercal}(\bar{\Sigma})^{-1} \nabla_{L(\beta_E)} \gamma^{\star}((\bar{\Sigma})^{-1} (\bar{P}\beta_E +\bar{q})) \nabla_{\gamma^{\star}} \log J_{\phi}(\gamma^{\star}; \mathcal{U})}\\
&= \nabla\log \pi_E(\beta_E) + (\bar{R})^{\intercal}(\bar{\Theta})^{-1}(\widehat{\beta}_E -\bar{R}\beta_E- \bar{s}) + \bar{P}^{\intercal}(\bar{\Sigma})^{-1} (\bar{P}\beta_E +\bar{q})\\
& -\bar{P}^{\intercal}(\bar{\Sigma})^{-1} (\nabla_{L(\beta_E)}\zeta)^{-1}(L(\beta_E)) -\bar{P}^{\intercal}(\bar{\Sigma})^{-1} \nabla_{L(\beta_E)}\gamma^{\star}(L(\beta_E))\nabla_{\gamma^{\star}} \log J_{\phi}(\gamma^{\star}; \mathcal{U})
\end{aligned}
\end{equation*}
\begin{equation*}
\begin{aligned}
&= \nabla\log \pi_E(\beta_E) + (\bar{R})^{\intercal}(\bar{\Theta})^{-1}(\widehat{\beta}_E -\bar{R}\beta_E- \bar{s})+ \bar{P}^{\intercal}(\bar{\Sigma})^{-1} (\bar{P}\beta_E +\bar{q})\;\;\;\;\;\;\;\;\;\;\;\;\;\;\;\;\;\;\;\;\;\;\;\;\;\;\;\;\;\;\;\;\;\\
&-\bar{P}^{\intercal}(\bar{\Sigma})^{-1} \gamma^{\star}(L(\beta_E)) -\bar{P}^{\intercal}(\bar{\Sigma})^{-1} \nabla_{L(\beta_E)}\gamma^{\star}(L(\beta_E))\nabla_{\gamma^{\star}} \log J_{\phi}(\gamma^{\star}; \mathcal{U}).
\end{aligned}
\end{equation*}
Note, we use the fact: $\nabla\zeta^{\star}(\cdot) = (\nabla\zeta)^{-1}(\cdot)$ in the second display and the third display follows by observing: $(\nabla\zeta)^{-1}(L(\beta_E))= \gamma^{\star}(L(\beta_E))$ where
$$\gamma^{\star} = \operatorname*{argmax}_{\gamma}\ \gamma^{\intercal}L(\beta_E)- \zeta(\gamma),
$$
further equal to the optimizer defined in \eqref{optimizer:main}.
Computing the two pieces in the final term: $\nabla_{L(\beta_E)}\gamma^{\star}(L(\beta_E))$ and $\nabla_{\gamma^{\star}} \log J_{\phi}(\gamma^{\star}; \mathcal{U})$, we first have
\[
\nabla \gamma^{\star}(\cdot) = \nabla^2 \zeta^{\star}(\cdot) = \left[ \nabla^2 \zeta (\gamma^{\star}(\cdot)) \right]^{-1}.
\]
Since $\nabla^2 \zeta(\gamma) = \bar{\Sigma}^{-1} + \nabla^2 \barr(\gamma)$, we have
$$
\nabla_{L(\beta_E)}\gamma^{\star}(L(\beta_E)) =  \left( \bar{\Sigma}^{-1} +  \nabla^2 \barr(\gamma^{\star}(L(\beta_E)) \right)^{-1}. 
$$
As for $\nabla \log J_{\phi}(\gamma^{\star}; \mathcal{U})$, recall from Theorem \ref{thm1}
\begin{equation*}
J_{\phi}(\gamma;\mathcal{U}) = \operatorname{det}( \Gamma + \bar{U}^{\intercal}(X_E^{\intercal}X_E)^{-1}\Lambda \bar{U} ).
\end{equation*}
We have the following matrix derivative identities for a square matrix $X$ \citep{petersen2008matrix}:
\begin{equation} \label{partial_first}
  \frac{\partial \logdet(X)}{\partial X_{ij}} = (X^{-1})_{ji},   \frac{\partial (X^{-1})_{kl}}{\partial X_{ij}} = - (X^{-1})_{ki}(X^{-1})_{jl}.
\end{equation}
Using the chain rule:
\[
  \gamma \mappy{J_1} (\Gamma +  \bar{U}^{\intercal}(X_E^{\intercal}X_E)^{-1}\Lambda \bar{U}) \mappy{J_2} \logdet(\Gamma +  \bar{U}^{\intercal}(X_E^{\intercal}X_E)^{-1}\Lambda \bar{U}),
\]
we partition the indices into $\{M_g\}_{g \in E}$ such that $M_g$ is the set of $|g|-1$ indices along the diagonal of $\Gamma$ corresponding to group $g$.
Then
\begin{equation*} \label{partial_j1}
  \left[ \frac{\partial J_1}{\partial \gamma_g} \right]_{ij} = \begin{cases} 1, \tabby i=j, i \in M_g, \\
    0, \tabby \text{otherwise.} \end{cases}
\end{equation*}
By (\ref{partial_first}), the partial derivatives of $J_2$ are the entries of $(\Gamma + \bar{U}^{\intercal}(X_E^{\intercal}X_E)^{-1}\Lambda \bar{U})^{-1}$.
Putting these together,
\begin{equation*} \label{first_deriv}
  \frac{\partial \log J_{\phi}(\cdot; \mathcal{U})}{\partial \gamma_g} = \sum_{i \in M_g} [(\Gamma +  \bar{U}^{\intercal}(X_E^{\intercal}X_E)^{-1}\Lambda \bar{U})^{-1}]_{ii},
\end{equation*}
which gives us the expression for $\nabla_{\gamma^{\star}} \log J_{\phi}(\gamma^{\star}; \mathcal{U})$.
\end{proof}

We conclude with a remark highlighting the distinction from the usual Laplace-type approximation, which we adopt for tractable calculations of the adjustment factor. An alternate approximation for \eqref{sel:post:gen:Laplace} is given by
\begin{equation*}
\begin{aligned}
& C\exp\Big(-\dfrac{1}{2}(\beta^{\star}_E-\beta_E)^{\intercal} \Sigma_E^{-1} (\beta^{\star}_E-\beta_E) - \dfrac{1}{2} (\gamma^{\star}-A^{\star}\beta^{\star}_E -b^{\star})^{\intercal} (\Sigma^{\star})^{-1} (\gamma^{\star}-A^{\star}\beta^{\star}_E -b^{\star})\\
&\;\;\;\;\;\;\;\;\;\;\;\;\;\;\;\;\;\;\;\;\;\;\;\;\;\;\;\;\;\;\;\;\;\;\;\;\;\;\;\;\;\;\;\;\;\;\;\;\;\;\;\;\;\;\;\;\;\;\;\;\;\;\;\;\;\;\;\;\;\;\;\; - \barr (\gamma^{\star})+ \log J_{\phi}(\gamma^{\star}; \mathcal{U})\Big),
\end{aligned}
\end{equation*}
the usual Laplace approximation
where $\gamma^{\star}$ and $\beta^{\star}_E$ are obtained by solving
\begin{equation}\label{alt:opt:problem}
\begin{aligned}
&\text{minimize}_{\widetilde{\beta}_E, \widetilde{\gamma}}\ \Big\{\dfrac{1}{2}(\widetilde{\beta}_E-\beta_E)^{\intercal} \Sigma_E^{-1} (\widetilde{\beta}_E-\beta_E) + \dfrac{1}{2} (\widetilde{\gamma}-A^{\star}\widetilde{\beta}_E -b^{\star})^{\intercal} (\Sigma^{\star})^{-1} (\widetilde{\gamma}-A^{\star}\widetilde{\beta}_E -b^{\star}) \\
&\;\;\;\;\;\;\;\;\;\;\;\;\;\;\;\;\;\;\;\;\;\;\;\;\;\;\;\;\;\;\;\;\;\;\;\;\;\;\;\;\;\;\;\;\;\;\;\;\;\;\;\;\;\;\;\;\;\;\;\;\;\;\;\;\;\;\;  + \barr (\widetilde\gamma)- \log J_{\phi}(\widetilde{\gamma}; \mathcal{U})\Big\}.
\end{aligned}
\end{equation}
Problem \eqref{alt:opt:problem} deviates from our current formulation \eqref{optimizer:main} in terms of the part the (log) Jacobian term plays in determining the mode of the optimization. We opt specifically for a generalized formulation of the Laplace approximation to compute the Jacobian only once at the mode of \eqref{optimizer:main} for increased computational efficiency, obtaining our selection-informed posterior and the gradient associated with it.

\begin{proof}[Proof of Proposition \ref{adj:oglasso}.]
The proof of this Proposition follows by applying the change of variables map:
$$\omega^*\to (\widehat{\gamma}^*, \widehat{\mathcal{U}}^*, \widehat{\mathcal{Z}}^*) \ \text{ where } \  (\widehat{\gamma}^*, \widehat{\mathcal{U}}^*, \widehat{\mathcal{Z}}^*) =(\phi^*)^{-1}(\omega^*),$$
and conditioning upon: $\widehat{\mathcal{U}}^*= \mathcal{U}^*$ and $\widehat{\mathcal{Z}}^*= \mathcal{Z}^*$.
This leads to the below adjustment factor, the probability of the selection event under consideration,
        \begin{equation} 
        \begin{aligned}
       & \mathbb{P}(\mathcal{A}_{E^*} \  | \  \beta_E) =  \int \int J_{\phi^*}(\widehat\gamma^*; \mathcal{U}^*) \cdot \mathrm{p}(\beta_E, \Sigma_E ; \widehat{\beta}_E) \\
        &\times\exp \Big\{ - \frac{1}{2} (A\widehat{\beta}_E + B(\mathcal{U})\widehat{\gamma}^* +c(\mathcal{U}, \mathcal{Z}))^{\intercal}\Omega^{-1} (A\widehat{\beta}_E + B(\mathcal{U})\widehat{\gamma}^* +c(\mathcal{U}, \mathcal{Z})) \Big\}\cdot \mathbf{1}(\widehat{\gamma}^* > 0) \ d\widehat{\gamma}^*  d\widehat{\beta}_E;
        \end{aligned}
        \end{equation}
        $J_{\phi^*}(\widehat{\gamma}^*  ; \mathcal{U}^* )$ is the Jacobian associated with the change of variables derived from $\phi^*(\cdot)$.
        To complete the proof, we note that the value for $J_{\phi^*}(\widehat{\gamma}^*  ; \mathcal{U}^* )$ is obtained from \eqref{alt:exp:det} in the derivation of the adjustment factor when there are no overlaps in the groups, where we simply replace the original design with the augmented version.
\end{proof}

\begin{proof}[Proof of Proposition \ref{adj:stdglasso}.]
The proof is direct from using the change of variables map from inverting the stationary mapping we identify for the solver \eqref{std:glasso}.
Based upon the matrices we identify in \eqref{A:B:c:stdglasso}, the argument follows similar lines as Theorem \ref{thm1} yielding us the expression for the adjustment factor.
\end{proof}

\begin{proof}[Proof of Proposition \ref{adj:spglasso}.]
Modifying the proof of Theorem \ref{thm1} by replacing the stationary mapping with $\breve{\phi}(\cdot)$ defined in \eqref{kkt_sgl} results in the claim in this Proposition.
We thus omit further details of the proof here.
\end{proof}

\section{Proofs for large sample theory (Section \ref{sec:large-sample-theory}) }
\label{appendix:large-sample-theory}

We provide in this section the proofs of the large sample claims for our selection-informed posterior in Section \ref{sec:large-sample-theory}.
Supporting results for this theory, Lemma \ref{lemma_jacobian_derivs} and \ref{lemma_jacobian_contribution}, are included in Section \ref{appendix:supporting-large-sample-theory}.

\begin{proof}[Proof of Proposition~\ref{gen_laplace_convergence}.]
First, observe that we write our probability of selection as follows:
\begin{equation*}
	        \begin{aligned}
(b_n)^{-2} \log \mathbb{P} \left( 0< \sqrt{n} \gamma_n < b_n \bar{Q} \cdot 1_{\lvert E \rvert} + \bar{q}\right) &= (b_n)^{-2} \log \mathbb{E}\Big[\exp(\log J_{\phi} (\sqrt{n} \bar{Z}_n +b_n \bar{P} \bar{\beta}_E +  \bar{q};\mathcal{U})) \\
& \;\;\times \mathbf{1}( - b_n\bar{P} \bar{\beta}_E-\bar{q} < \sqrt{n} \bar{Z}_n<  b_n \bar{Q} \cdot 1_{\lvert E \rvert}- b_n\bar{P} \bar{\beta}_E )  \Big],
\end{aligned}
	\end{equation*}
(up to an additive constant), where $\sqrt{n} \bar{Z}_n$ is a centered gaussian random variable with covariance $\bar{\Sigma}$.
Define $\mathcal{C}_0 = \{ z: - \bar{P} \bar{\beta}_E < z < \bar{Q} \cdot 1_{\lvert E \rvert}  - \bar{P} \bar{\beta}_E\}$.
Using assumption \eqref{assump2}, we deduce
\begin{equation*}
	        \begin{aligned}
	        &\limsup\limits_{n\rightarrow\infty} \; (b_n)^{-2} \log \mathbb{P} \left( 0< \sqrt{n} \gamma_n < b_n \bar{Q} \cdot 1_{\lvert E \rvert} + \bar{q}\right)\\
	        &= \limsup\limits_{n\rightarrow\infty} \;  (b_n)^{-2} \log \mathbb{E}\Big[\exp(\log J_{\phi} (\sqrt{n} \bar{Z}_n +b_n \bar{P} \bar{\beta}_E +  \bar{q};\mathcal{U})) \cdot 1_{\mathcal{C}_0} (\sqrt{n} \bar{Z}_n/b_n)  \Big]\\
                &\leq \lim_{n\rightarrow\infty} \sup\textstyle_{z \in \mathcal{C}_0}\; (b_n)^{-2} |\log J_{\phi} (b_n z +b_n \bar{P} \bar{\beta}_E +  \bar{q};\mathcal{U}))| + \lim_{n \rightarrow \infty} (b_n)^{-2} \log \mathbb{P} \Big(\sqrt{n} \bar{Z}_n/b_n\in \mathcal{C}_0\Big)\\
                &= \lim_{n \rightarrow \infty} (b_n)^{-2} \log \mathbb{P} \Big( - \bar{P} \bar{\beta}_E <\sqrt{n} \bar{Z}_n/b_n< \bar{Q} \cdot 1_{\lvert E \rvert}  - \bar{P} \bar{\beta}_E\Big). \\
                	\end{aligned}
\end{equation*}

To justify that the limit of the term involving the Jacobian vanishes, note that for all $z \in \mathcal{C}_0$, we have
\begin{equation} \nonumber
	\bar{q} < b_n z +b_n \bar{P} \bar{\beta}_E +  \bar{q} < b_n \bar{Q}1_{\lvert E \rvert} + \bar{q}.
\end{equation}
Thus, $J_{\phi}(b_n z +b_n \bar{P} \bar{\beta}_E +  \bar{q};\mathcal{U})$ is the determinant of a matrix with entries uniformly bounded by $2 b_n \bar{Q} > 0$ for sufficiently large $n$. Then
\begin{equation} \nonumber
	\sup\textstyle_{z \in \mathcal{C}_0} \lvert J_{\phi}(b_n z +b_n \bar{P} \bar{\beta}_E +  \bar{q};\mathcal{U}) \rvert \leq \lvert E \rvert ! \left( 2 b_n \bar{Q} \right)^{\lvert E \rvert}.
\end{equation}
The Jacobian is also bounded away from zero, thus it follows that
\begin{equation} \nonumber
	\sup\textstyle_{z \in \mathcal{C}_0}\; (b_n)^{-2} |\log J_{\phi} (b_n z +b_n \bar{P} \bar{\beta}_E +  \bar{q};\mathcal{U}))| \leq (b_n)^{-2} \log \left( \lvert E \rvert ! \left( 2 b_n \bar{Q} \right)^{\lvert E \rvert} \right)
\end{equation}
which goes to 0 as $n \rightarrow \infty$. Using an argument along the same line,
\begin{equation*}
	        \begin{aligned}
	        &\liminf\limits_{n\rightarrow\infty} \;(b_n)^{-2} \log \mathbb{P} \left( 0< \sqrt{n} \gamma_n < b_n \bar{Q} \cdot 1_{\lvert E \rvert} + \bar{q}\right)\\
	        &\geq -\lim_{n\rightarrow\infty} \sup\textstyle_{z \in \mathcal{C}_0}\; (b_n)^{-2} |\log J_{\phi} (b_n z +b_n \bar{P} \bar{\beta}_E +  \bar{q};\mathcal{U}))| + \lim_{n \rightarrow \infty} (b_n)^{-2}  \log \mathbb{P} \Big(\sqrt{n} \bar{Z}_n/b_n\in \mathcal{C}_0\Big).
	        \end{aligned}
\end{equation*}
From the above limits, we have
$$ \lim_{n\rightarrow\infty}   (b_n)^{-2}\Big(  \log \mathbb{P} \left( 0< \sqrt{n} \gamma_n < b_n \bar{Q} \cdot 1_{\lvert E \rvert} + \bar{q}\right)-   \log \mathbb{P} \left(\sqrt{n} \bar{Z}_n/b_n\in \mathcal{C}_0\right)\Big)=0.$$
Using a moderate (large)-deviation type result \citep{de1992moderate} for the limiting value of the probability in the second term,
we have
\begin{equation*}
	        \begin{aligned}
	        &  \lim_{n\rightarrow\infty}  \; (b_n)^{-2} \log \mathbb{P} \left( 0< \sqrt{n} \gamma_n < b_n \bar{Q} \cdot 1_{\lvert E \rvert} + \bar{q}\right) +\textstyle\inf_{z\in \mathcal{C}_0 } z^{\tp} \bar{\Sigma}^{-1}z/2 =0.
	        	\end{aligned}
\end{equation*}
Lastly, we let $z+ \bar{P} \bar{\beta}_E + (b_n)^{-1} \bar{q} = \bar\gamma$.
Using the observation that the optimization $\inf_{z\in \mathcal{C}_0 } z^{\tp} \bar{\Sigma}^{-1}z$ has a unique minimum, and relying on the convexity of the objectives in the sequence of optimization problems defined below
$$\textstyle\inf_{\bar\gamma < \bar{Q} \cdot 1_{\lvert E \rvert} }  \Big\{\dfrac{1}{2} (\bar\gamma - \bar{P}\bar{\beta}_E - (b_n)^{-1}\bar{q})^{\tp} \bar{\Sigma}^{-1} (\bar\gamma - \bar{P}\bar{\beta}_E - (b_n)^{-1}\bar{q}) + (b_n)^{-2} \barr (b_n \bar\gamma)\Big\},$$
our claim in the Proposition is complete.
\end{proof}

Notice, we work with the below approximation for the adjustment factor in Theorem \ref{thm1}:
\begin{equation*}
\begin{aligned}
& \exp\Big(-\dfrac{b_n^2}{2} (\bar\gamma_n^{\star} - \bar{P}\bar{\beta}_E - (b_n)^{-1}\bar{q})^{\tp} \bar{\Sigma}^{-1} (\bar\gamma_n^{\star} - \bar{P}\bar{\beta}_E - (b_n)^{-1}\bar{q})\\
&\;\;\;\;\;\;\;\;\;\;\;\;\;\;\;\;\;\;\;\;\;\;\;\;\;\;\;\;\;\;\;\;\;\;\;\;\;\;\;\;\;\;- \barr (b_n \bar\gamma^{\star}) + \log J_{\phi} (b_n \bar\gamma^{\star}_n ; \mathcal{U}) \Big),
\end{aligned}
\end{equation*}
motivated by Proposition \ref{gen_laplace_convergence} where
\begin{equation}
	\label{optimizer:scaled}
	 \bar\gamma_n^{\star} = \operatorname{argmin} \frac{1}{2} (\bar\gamma - \bar{P}\bar{\beta}_E - (b_n)^{-1}\bar{q})^{\tp} \bar{\Sigma}^{-1} (\bar\gamma - \bar{P}\bar{\beta}_E - (b_n)^{-1}\bar{q}) + (b_n)^{-2} \barr (b_n \bar\gamma).
	\end{equation}

\begin{proof}[Proof of Proposition~\ref{bounds:lik}.]
Set $\sqrt{n} z_{n} = b_n \bar{z}$ and let $\bar\gamma^{\star}_n$ be the optimizer defined in \eqref{optimizer:scaled}. Then, the surrogate selection-informed (log) likelihood assumes the below form
\begin{equation}
\label{scaled:form:loglik}
\ell_{n,E}(z_{n} ;  \widehat{\beta}_{n, E} \ \lvert \ N_{n, E})= (\sqrt{n} \widehat{\beta}_{n, E})^{\intercal} \bar{\Theta}^{-1} \bar{R}( b_n\bar{z}) -b_n^2 C_n(\bar{z}).
\end{equation}
In the above representation, $C_n(\bar{z})$ equals
\begin{equation*}
\begin{aligned}
 &(\bar\gamma_n^{\star})^{\tp} (\bar{\Sigma})^{-1}(\bar{P}\bar{z} + (b_n)^{-1}\bar{q})-\dfrac{1}{2}(\bar\gamma_n^{\star})^{\tp}(\bar{\Sigma})^{-1} \bar\gamma_n^{\star} -(b_n)^{-2}\barr (b_n \bar\gamma_n^{\star}) \\
 &+ (b_n)^{-2} \log J_{\phi} (b_n \bar\gamma^{\star}_n ; \mathcal{U})+ \dfrac{1}{2}\bar{z}^{\tp} \bar{R}^{\intercal}(\bar{\Theta} + \bar{A}^{\intercal}\bar{\Omega}^{-1} \bar{A})^{-1} \bar{R} \bar{z} - (b_n)^{-1}\bar{z}^{\tp} \bar{P}^{\tp}\bar{\Sigma}^{-1} \bar{q} - \frac{1}{2}(b_n)^{-2} \bar{q}^{\intercal}\bar{\Sigma}^{-1}  \bar{q} ,
\end{aligned}
\end{equation*}
which we derive after plugging in the associated (log) approximation.

Next, we define the below constants: $C_1$ is the largest eigenvalue of $ \bar{R}^{\intercal}\bar{\Theta}^{-1} \bar{R}$ and $C_0$ is the smallest eigenvalue of $ \bar{R}^{\intercal}(\bar{\Theta} + \bar{A}^{\intercal}\bar{\Omega}^{-1} \bar{A})^{-1} \bar{R}$. Consistent with our parameterization, we denote $b_n \bar{\beta}^{\;\text{max}}_E= \sqrt{n}  \widehat{\beta}_{n, E}^{\;\text{max}}$.
It follows then from a Taylor series expansion of $C_n(\bar{z})$ around $\bar{\beta}^{\;\text{max}}_E$ that the difference of log-likelihoods
$$\ell_{n,E}(z_n ;  \widehat{\beta}_{n, E} \ \lvert \ N_{n, E})- \ell_{n,E}(\widehat{\beta}_{n, E}^{\;\text{max}} ;  \widehat{\beta}_{n, E} \ \lvert \ N_{n, E})$$
equals
\begin{equation*}
\begin{aligned}
& \sqrt{n} (\widehat{\beta}_{n, E})^{\tp} \bar{\Theta}^{-1} \bar{R} b_n(\bar{z}- \bar{\beta}^{\;\text{max}}_E)- b_n^2 \left\{C_n(\bar{z})- C_n(\bar{\beta}^{\;\text{max}}_E)\right\}\\
&= b_n (\bar{z}- \bar{\beta}^{\;\text{max}}_E)^{\tp} \bar{R}^{\intercal}\bar{\Theta}^{-1} \sqrt{n} \widehat{\beta}_{n, E}- b_n^2(\bar{z}- \bar{\beta}^{\;\text{max}}_E)^{\tp} \nabla  C_n(\bar{\beta}^{\;\text{max}}_E)\\
&\;\;\;\;\;\;\;\;\;\;\;\;\;\; - \frac{b_n^2}{2}(\bar{z}- \bar{\beta}^{\;\text{max}}_E)^{\tp} \nabla^2  C_n(R(\bar{\beta}^{\;\text{max}}_E; \bar{z})) (\bar{z}- \bar{\beta}^{\;\text{max}}_E)\\
&= - \frac{n}{2} (z_{n} - \widehat{\beta}_{n, E}^{\;\text{max}})^{\tp} \nabla^2  C_n(R(\bar{\beta}^{\;\text{max}}_E; \bar{z})) (z_{n} - \widehat{\beta}_{n, E}^{\;\text{max}}).
\end{aligned}
\end{equation*}
\smallskip

By Lemma~\ref{lemma_jacobian_contribution}, there exists $N \in \mathbb{N}$ such that the contribution of the Jacobian term towards the Hessian ($\nabla^2  C_n(\cdot)$),
$$\textstyle\sup_{\bar{z} \in \mathcal{C}} \lVert \nabla_{\bar{z}}^2 (b_n)^{-2} \log J_{\phi} (b_n \bar\gamma^*_n(\bar{z}) ; \mathcal{U}) \rVert_{\text{op}}$$
 is uniformly bounded in operator norm by $\epsilon_0$ for all $n \geq N$.
 Together with the observation that
$$(\bar\gamma_n^{\star})^{\tp} (\bar{\Sigma})^{-1}(\bar{P}\bar{z} + (b_n)^{-1}\bar{q})-\dfrac{1}{2}(\bar\gamma_n^{\star})^{\tp}(\bar{\Sigma})^{-1} \bar\gamma_n^{\star} -(b_n)^{-2}\barr (b_n \bar\gamma_n^{\star})$$
is a convex conjugate of the function $\frac{1}{2}(\bar\gamma)^{\tp}(\bar{\Sigma})^{-1} \bar\gamma + (b_n)^{-2}\barr (b_n \bar\gamma)$ evaluated at $\bar{\Sigma}^{-1} (\bar{P}\bar{z} + (b_n)^{-1}\bar{q})$,
we conclude
$$(C_0 - \epsilon_0) \cdot I \preccurlyeq \nabla^2C_n(\bar{z}) \preccurlyeq (C_1 + \epsilon_0) \cdot I$$
for all $\bar{z} \in \mathcal{C}$, where $I$ is the identity matrix of appropriate dimensions. This directly leads to our claim in the Proposition.
\end{proof}

\begin{proof}[Proof of Theorem~\ref{thm3}.]
Fix $0<a<1$ such that
$$ 4a^2\cdot (C_1+C_0/2) -(1-a)^2\cdot C_0/2<0,$$
where $C_0$ and $C_1$ are defined in Proposition~\ref{bounds:lik}. This follows by noting that the quadratic expression on the left-hand side has a root between $(0,1)$.
Denoting $\mathcal{C} \cap \mathcal{B}^c(\beta_{n,E},\delta_n) = \mathcal{B'}^c(\beta_{n,E},\delta_n)$, we observe that there exists $N$ such that for all $n\geq N$ such that
\begin{equation*}
\begin{aligned}
& \mathbb{P}_{n,E} \left(  \Pi_{n,E} \left( \mathcal{B}^c(\beta_{n,E},\delta_n) \ | \ \widehat{\beta}_{n,E} ; N_{n,E} \right) \leq \epsilon \right)\\
&\Scale[0.97]{=  \mathbb{P}_{n,E} \left( \int_{\mathcal{B'}^c(\beta_{n,E},\delta_n)} \pi_E(z_n) \cdot \exp(\ell_{n,E}(z_n ;  \widehat{\beta}_{n, E} \lvert N_{n, E}))\ dz_n \leq \epsilon \int \pi_E(z_n) \cdot \exp(\ell_{n,E}(z_n;  \widehat{\beta}_{n, E} \lvert N_{n, E})) \ dz_n\right)}\\
&\geq \mathbb{P}_{n,E} \Bigg( \int_{\mathcal{B'}^c(\beta_{n,E},\delta_n)} \pi_E(z_n) \cdot \exp(\ell_{n,E}(z_n ;  \widehat{\beta}_{n, E} \lvert N_{n, E})- \ell_{n,E}(\widehat{\beta}_{n, E}^{\;\text{max}} ;  \widehat{\beta}_{n, E} \lvert N_{n, E}))\ dz_n \\
& \;\;\;\;\;\;\;\;\;\;\;\;\; \leq \epsilon \int_{\mathcal{B}(\beta_{n,E},a \delta_n)} \pi_E(z_n) \cdot \exp(\ell_{n,E}(z_n;  \widehat{\beta}_{n, E} \lvert N_{n, E})- \ell_{n,E}(\widehat{\beta}_{n, E}^{\;\text{max}} ;  \widehat{\beta}_{n, E} \lvert N_{n, E})) \ dz_n\Bigg)\\
&\geq \mathbb{P}_{n,E} \Bigg( \int_{\mathcal{B'}^c(\beta_{n,E},\delta_n)} \pi_E(z_n) \cdot \exp(-nC_0\cdot   \|\widehat{\beta}_{n, E}^{\;\text{max}} -z_n\|^2/4)dz_n \\
& \;\;\;\;\;\;\;\;\;\;\;\;\; \;\;\;\;\;\;\;\;\;\;\;\;\; \;\;\;\;\;\;\;\;\;\;\;\;\;\leq \epsilon \int_{\mathcal{B}(\beta_{n,E},a \delta_n)} \pi_E(z_n) \cdot \exp(-n(C_1+C_0/2)\cdot  \|\widehat{\beta}_{n, E}^{\;\text{max}} -z_n\|^2/2)dz_n\Bigg).
\end{aligned}
\end{equation*}
The ultimate display follows by using the bounds in Proposition \ref{bounds:lik} where we set $\epsilon_0 = C_0/2$. This yields us the bound
\begin{equation*}
\begin{aligned}
& \mathbb{P}_{n,E} \left(  \Pi_{n,E} \left( \mathcal{B}^c(\beta_{n,E},\delta_n) \ | \ \widehat{\beta}_{n,E} ; N_{n,E} \right) \leq \epsilon \right)\\
&\geq \mathbb{P}_{n,E} \Bigg( \int_{\mathcal{B'}^c(\beta_{n,E},\delta_n)} \pi_E(z_n) \cdot \exp(-nC_0\cdot  \|\widehat{\beta}_{n, E}^{\;\text{max}} -z_n\|^2/4)dz_n \\
& \;\;\;\;\;\;\;\;\;\;\;\;\;\;\;\; \;\;\;\;\;\;\;\;\;\;\;\;  \leq \epsilon \int_{\mathcal{B}(\beta_{n,E},a \delta_n)} \pi_E(z_n) \cdot \exp(-n(C_1+ C_0/2)\cdot  \|\widehat{\beta}_{n, E}^{\;\text{max}} -z_n\|^2/2)dz_n,\\
&  \;\;\;\;\;\;\;\;\;\;\;\;\;\;\;\; \;\;\;\;\;\;\;\;\;\;\;\;  \|\widehat{\beta}_{n, E}^{\;\text{max}} -z_n\|\geq (1-a) \delta_n \text{ for all } z_n \in \mathcal{B'}^c(\beta_{n,E},\delta_n),  \\
& \;\;\;\;\;\;\;\;\;\;\;\;\;\;\;\; \;\;\;\;\;\;\;\;\;\;\;\;  \|\widehat{\beta}_{n, E}^{\;\text{max}} -z_n\|\leq 2a\delta_n \text{ for all } z_n \in \mathcal{B}(\beta_{n,E},a\delta_n)\Bigg)
\end{aligned}
\end{equation*}
\begin{equation*}
\begin{aligned}
&\geq \mathbb{P}_{n,E} \Bigg(  \exp(-C_0\cdot (1-a)^2 n\delta_n^2/4 ) \leq \epsilon\cdot  \Pi_{n,E}(\mathcal{B}(\beta_{n,E},a\delta_n)) \exp(-(C_1+C_0/2)\cdot 4a^2 \delta_n^2/2)\\
&  \;\;\;\;\;\;\;\;\;\;\;\;\;\;\;\; \;\;\;\;\;\;\;\;\;\;\;\;  \|\widehat{\beta}_{n, E}^{\;\text{max}} -z_n\|\geq (1-a) \delta_n \text{ for all } z_n \in \mathcal{B'}^c(\beta_{n,E},\delta_n),  \\
& \;\;\;\;\;\;\;\;\;\;\;\;\;\;\;\; \;\;\;\;\;\;\;\;\;\;\;\;  \|\widehat{\beta}_{n, E}^{\;\text{max}} -z_n\|\leq 2a\delta_n \text{ for all } z_n \in \mathcal{B}(\beta_{n,E},a\delta_n)\Bigg)\\
&\geq \mathbb{P}_{n,E} ( \|\widehat{\beta}_{n, E}^{\;\text{max}} - \beta_{n, E}\| \leq a\delta_n).
\end{aligned}
\end{equation*}
\smallskip

The argument in the last display follows from our assumptions on the selection-informed prior for sufficiently large $n$, coupled with the choice of $a\in (0,1)$.
We complete our proof by showing
$$\textstyle\lim_{n\to \infty} \mathbb{P}_{n,E} ( \|\widehat{\beta}_{n, E}^{\;\text{max}} - \beta_{n, E}\| > a\delta_n) = 0.$$
To this end, we note that the MLE estimating equation is given by:
$$\sqrt{n} \bar{R}^{\intercal}\bar\Theta^{-1} \widehat{\beta}_{n, E}= b_n \nabla C_n (\bar{\beta}^{\;\text{max}}_E)$$
from the surrogate selection-informed (log) likelihood in \eqref{scaled:form:loglik} (Proposition \ref{bounds:lik}) under the assumed parameters.
Further, observing that $C_n(\cdot)$ is strongly convex for sufficiently large $n$, we have
$$(L)^{-2}  \|\sqrt{n}\Sigma_E^{-1}\widehat{\beta}_{n, E}-b_n \nabla {C}_n(\bar{\beta}_E)\|^2 \geq \|\sqrt{n}\widehat{\beta}_{n, E}^{\;\text{max}} - \sqrt{n}\beta_{n, E}\|^2;$$
$L= C_0/2$.
Denoting the exact counterpart of $C_n (\cdot)$ (obtained upon using the exact probability of selection) by $\bar{C}_n(\cdot)$,
we conclude
\begin{equation*}
\begin{aligned}
\mathbb{P}_{n,E} ((b_n)^{-1} \sqrt{n}\|\widehat{\beta}_{n, E}^{\;\text{max}} - \beta_{n, E}\| > a\delta) &\leq (b_n a\delta L)^{-2}\cdot\mathbb{E}_{n,E}(\|\sqrt{n}\bar{R}^{\intercal}\bar\Theta^{-1}\widehat{\beta}_{n, E}-b_n \nabla C_n(\bar{\beta}_E)\|^2)\\
&\leq (b_n a\delta L )^{-2}\cdot\mathbb{E}_{n,E}(\|\sqrt{n}\bar{R}^{\intercal}\bar\Theta^{-1}\widehat{\beta}_{n, E}-b_n \nabla \bar{C}_n(\bar{\beta}_E)\|^2)\\
&+ (a\delta L)^{-2}\cdot \|\nabla C_n(\bar{\beta}_E) - \nabla \bar{C}_n(\bar{\beta}_E)\|^2.
\end{aligned}
\end{equation*}
The first term in the final display clearly converges to $0$ as $n \to \infty$.
The second term converges to $0$, using the result in Proposition \ref{gen_laplace_convergence} combined with the convexity and smoothness of the sequence $C_n(\cdot)$ for large enough $n$.
\end{proof}

\section{Supporting theory (Section \ref{sec:large-sample-theory})}
\label{appendix:supporting-large-sample-theory}

Below, we prove a result on the asymptotic orders of the gradient and Hessian of the (log) Jacobian; this in turn allows us to bound the contribution of the Jacobian term in the Hessian of the (log) likelihood in Proposition \ref{bounds:lik}.

\begin{lemma} \label{lemma_jacobian_derivs}
	For  $\eta > 0$, denote
	\begin{equation}
		\mathcal{K}_{\eta} = \{ x \in \mathbb{R}^{\lvert E \rvert} : \min_j x_j > \eta \}.
	\end{equation}
	We have then the following uniform bounds on the derivatives of the (log)Jacobian:
	\begin{equation*}
	\begin{aligned}
		& \textstyle\sup_{x \in \mathcal{K}_{\eta}} \lVert \nabla_{\gamma} \log J_{\phi}(\gamma ; \mathcal{U}) \big\vert_{b_n x} \rVert_{\infty} &= O(b_n^{-1}), \\
		&\textstyle\sup_{x \in \mathcal{K}_{\eta}} \lVert \nabla^2_{\gamma} \log J_{\phi}(\gamma ; \mathcal{U}) \big\vert_{b_n x} \rVert_{\text{op}} &= O(b_n^{-2}).
	\end{aligned}
	\end{equation*}
\end{lemma}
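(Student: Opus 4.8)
The plan is to work directly from the closed-form expressions for the Jacobian and its gradient already obtained in the proof of Theorem~\ref{thm2}. Write $M(\gamma) = \Gamma + R$, where $R = \bar{U}^{\intercal}(X_E^{\intercal}X_E)^{-1}\Lambda\bar{U}$ is a fixed matrix (independent of $n$ and of $\gamma$) and $\Gamma = \operatorname{diag}\big( (\gamma_g I_{\lvert g\rvert - 1})_{g\in\mathcal{G}_E}\big)$ depends linearly on $\gamma$, so that $J_{\phi}(\gamma;\mathcal{U}) = \det M(\gamma)$. The first and only substantive step is a uniform spectral estimate: for $x\in\mathcal{K}_{\eta}$ and $\gamma = b_n x$ one has $\Gamma \succeq b_n\eta\, I$, hence for any unit vector $v$, $\lVert M(b_n x)v\rVert_2 \ge \lVert \Gamma v\rVert_2 - \lVert R v\rVert_2 \ge b_n\eta - c$, where $c := \lVert R\rVert_{\text{op}}$ is a constant. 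Thus for all $n$ large enough that $b_n\eta > c$, the matrix $M(b_n x)$ is invertible with $\lVert M(b_n x)^{-1}\rVert_{\text{op}} \le (b_n\eta - c)^{-1} = O(b_n^{-1})$, and this bound is uniform in $x\in\mathcal{K}_{\eta}$ because it depends only on $\eta$ (and on the fixed $c$). Note $M$ need not be symmetric, which is why I would phrase this through the smallest singular value rather than the smallest eigenvalue.

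Next I would feed this into the derivative formulas. Recall from the proof of Theorem~\ref{thm2} that $\partial \log J_{\phi}(\cdot;\mathcal{U})/\partial\gamma_g = \sum_{i\in M_g}[M^{-1}]_{ii}$, where $M_g$ is the set of $\lvert g\rvert - 1$ diagonal indices attached to group $g$. Each entry of $M^{-1}$ is bounded in absolute value by $\lVert M^{-1}\rVert_{\text{op}}$, and the sum has a fixed number $\lvert g\rvert - 1$ of terms, so $\sup_{x\in\mathcal{K}_{\eta}} \lVert \nabla_{\gamma}\log J_{\phi}(\gamma;\mathcal{U})\big\vert_{b_n x}\rVert_{\infty} = O(b_n^{-1})$, which is the first claim. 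For the Hessian, writing $P_g = \operatorname{diag}(\mathbf{1}_{M_g})$ so that $\partial M/\partial\gamma_g = P_g$, and differentiating once more using $\partial (X^{-1})_{kl}/\partial X_{ij} = -(X^{-1})_{ki}(X^{-1})_{jl}$, one obtains $\partial^2 \log J_{\phi}/\partial\gamma_h\partial\gamma_g = -\operatorname{tr}(M^{-1}P_h M^{-1}P_g) = -\sum_{i\in M_g}\sum_{j\in M_h}[M^{-1}]_{ij}[M^{-1}]_{ji}$. Every summand is $O(b_n^{-2})$ uniformly over $\mathcal{K}_{\eta}$, there are finitely many summands, and the Hessian has fixed dimension, so its operator norm is $O(b_n^{-2})$ uniformly, giving the second claim. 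This control of the gradient and Hessian is exactly what is used in Proposition~\ref{bounds:lik} to show the Jacobian contributes negligibly to the curvature of the surrogate log-likelihood.

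The one point that requires a little care is genuinely minor: the estimate $\lVert M(b_n x)^{-1}\rVert_{\text{op}} = O(b_n^{-1})$ must hold uniformly over the \emph{unbounded} set $\mathcal{K}_{\eta}$, which it does because the lower bound $b_n\eta$ on the smallest singular value of $\Gamma$ depends only on $\eta$ and not on where $x$ sits, combined with the fact that $R$ — and hence $c$ — does not depend on $n$. Everything else is a routine repackaging of the matrix-calculus identities already invoked in the proof of Theorem~\ref{thm2}.
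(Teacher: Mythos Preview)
Your proposal is correct and follows essentially the same route as the paper: derive the Hessian via the matrix-derivative identity $\partial (X^{-1})_{kl}/\partial X_{ij} = -(X^{-1})_{ki}(X^{-1})_{jl}$, bound every entry of $M(b_n x)^{-1}$ by its largest singular value, show this is $O(b_n^{-1})$ uniformly over $\mathcal{K}_{\eta}$, and then plug into the explicit gradient and Hessian sums. The only minor difference is that you obtain the singular-value lower bound via the triangle inequality $\lVert M v\rVert_2 \ge b_n\eta - \lVert R\rVert_{\text{op}}$, whereas the paper expands $\lVert (\Gamma+C)v\rVert_2^2$ directly; your version is arguably cleaner and sidesteps any concern about the cross term, but both yield the same $O(b_n^{-1})$ conclusion.
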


\begin{proof}
	We begin by deriving an expression for the Hessian $\nabla^2_{\gamma} \log J_{\phi}(\gamma ; \mathcal{U})$. Recall from Theorem~\ref{thm2}, for $g=1, \ldots ,|E|$,
	\begin{equation} \label{j_first}
		\frac{\partial}{\partial \gamma_g} \log J_{\phi}(\gamma ; \mathcal{U}) = \sum_{i \in M_g} \left[ ( \Gamma + C )^{-1} \right]_{ii},
	\end{equation}
	where for simplicity we denote $C = \bar{U}^{\intercal}(X_E^{\intercal}X_E)^{-1}\Lambda \bar{U}$, $\Gamma = \operatorname{diag} ((\gamma_g I_{|g|-1})_{g \in \mathcal{G}_E})$, and $M_g$ denotes the set of indices along the diagonal of $\Gamma$ corresponding to group $g$. Using matrix derivative identities similar to those applied in the proof of Theorem~\ref{thm2}, we derive for $g,h=1, \ldots ,|E|$,
	\begin{equation} \label{j_second}
		\frac{\partial^2}{\partial\gamma_g \partial\gamma_h} \log J_{\phi}(\gamma ; \mathcal{U}) = - \sum_{i \in M_g} \sum_{j \in M_h} [ (\Gamma + C)^{-1} ]_{ij} \cdot [ \{(\Gamma + C)^{-1}\}^{\tp} ]_{ij}
	\end{equation}

	Both our claims rely on a uniform bound on the entries of $(\Gamma + C)^{-1}$.
	Let the operators $s_{\max}$, $\lambda_{\max}$ and $\lambda_{\min}$ denote the largest singular value, largest eigenvalue, and smallest eigenvalue of a matrix respectively.
	Fix $x \in \mathcal{K}_{\eta}$; let $\Gamma(b_n x) = \operatorname{diag} ((b_n x_g I_{|g|-1})_{g \in \mathcal{G}_E})$.
	Denote its dimension by $q = \sum_{g \in E} (\lvert g \rvert - 1)$.
	Then
	\begin{equation*}
	\begin{aligned}
		\max_{1 \leq i,j \leq q} \lvert \left[ (\Gamma(b_n x) + C)^{-1} \right]_{ij} \rvert &\leq s_{\max}\left( (\Gamma(b_n x) + C)^{-1} \right) \\
		&= \lambda_{\max}^{1/2} \left( (\Gamma(b_n x) + C)^{-1} \left[(\Gamma(b_n x) + C)^{-1} \right]^{\tp} \right) \\
		&= \lambda_{\min}^{-1/2} \left( (\Gamma(b_n x) + C)^{\tp} (\Gamma(b_n x) + C) \right). \\
	\end{aligned}
	\end{equation*}
	Now note the following:
	\begin{align*}
		\lambda_{\min} \left( (\Gamma(b_n x) + C)^{\tp} (\Gamma(b_n x) + C) \right) &= \inf_{\lVert v \rVert_2=1} v^{\tp} \left( (\Gamma(b_n x) + C)^{\tp} (\Gamma(b_n x) + C) \right) v \\
		&\geq b_n^2 \min_{1 \leq j \leq \lvert E \rvert} x_j^2 + \lambda_{\min}(C^{\tp} C ) \\
		&\geq (b_n \eta)^2.
	\end{align*}
	Combining the previous two displays uniformly over $\mathcal{K}_{\eta}$, we have that
	\begin{equation}
		\sup_{x \in \mathcal{K}_{\eta}} \left( \max_{1 \leq i,j \leq q} \lvert \left[ (\Gamma(b_n x) + C)^{-1} \right]_{ij} \rvert \right) \leq \frac{1}{b_n \eta} = O(b_n^{-1}).
	\end{equation}
	By \eqref{j_first}, each entry of the gradient is the sum of up to $p-1$ entries of $(\Gamma(b_nx) + C)^{-1}$, and by \eqref{j_second}, each entry of the Hessian is a sum of up to $p^2$ products of entries of the same matrix.
Lastly, a bound on the $\ell_{\infty}$ norm of the gradient follows directly from this element-wise bound.
Further, a bound for the operator norm of the Hessian follows after noting that for an $r \times r$ square matrix $M$, $\lVert M \rVert_2 \leq r \max_{ij} \lvert [M]_{ij} \rvert$.
\end{proof}


With the previous result in hand, we prove the next Lemma used in Proposition~\ref{bounds:lik}.

\begin{lemma} \label{lemma_jacobian_contribution}
	Under the assumptions of Proposition~\ref{bounds:lik}, we have
	\begin{equation*} \label{result_jacobian_contribution}
		\lim_{n \rightarrow \infty} \left\{ \textstyle\sup_{\bar{z} \in \mathcal{C}} \lVert \nabla_{\bar{z}}^2 (b_n)^{-2} \log J_{\phi} (b_n \bar\gamma^{\star}_n(\bar{z}) ; \mathcal{U}) \rVert_{\text{op}} \right\} = 0.
	\end{equation*}
\end{lemma}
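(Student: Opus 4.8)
The plan is to compute $\nabla^2_{\bar z}\,\psi_n(\bar z)$ for $\psi_n(\bar z):=(b_n)^{-2}\log J_{\phi}\bigl(b_n\bar\gamma^{\star}_n(\bar z);\mathcal{U}\bigr)$ by the chain rule and then bound each factor uniformly over $\mathcal{C}$. Writing $h:=\log J_{\phi}(\cdot\,;\mathcal{U})$, differentiating twice in $\bar z$ gives
\begin{equation*}
\nabla^2_{\bar z}\,\psi_n(\bar z)=\bigl(D\bar\gamma^{\star}_n\bigr)^{\intercal}\bigl[\nabla^2 h(b_n\bar\gamma^{\star}_n)\bigr]\bigl(D\bar\gamma^{\star}_n\bigr)+(b_n)^{-1}\sum_{a}\bigl[\partial_a h(b_n\bar\gamma^{\star}_n)\bigr]\,\nabla^2[\bar\gamma^{\star}_n]_a ,
\end{equation*}
where $D\bar\gamma^{\star}_n$ and $\nabla^2[\bar\gamma^{\star}_n]_a$ denote the first and second $\bar z$-derivatives of the optimizer map $\bar z\mapsto\bar\gamma^{\star}_n(\bar z)$ from \eqref{optimizer:scaled} with $\bar\beta_E$ replaced by $\bar z$. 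It thus suffices to control, uniformly over $\bar z\in\mathcal{C}$ and $n$ large: (i) the derivatives of $h$ evaluated at $b_n\bar\gamma^{\star}_n(\bar z)$, and (ii) the first two $\bar z$-derivatives of the optimizer map.

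For (i) I would invoke Lemma~\ref{lemma_jacobian_derivs}, which gives $\lVert\nabla h(b_n x)\rVert_{\infty}=O(b_n^{-1})$ and $\lVert\nabla^2 h(b_n x)\rVert_{\text{op}}=O(b_n^{-2})$ as long as all coordinates of $x$ exceed a fixed $\eta>0$. The real content here is therefore a uniform cone-containment statement: there is $\eta>0$ with $\bar\gamma^{\star}_n(\bar z)\in\mathcal{K}_{\eta}$ for all $\bar z\in\mathcal{C}$ and all large $n$. This I would read off the stationarity condition of the (strongly convex) objective in \eqref{optimizer:scaled}, which, after absorbing the rescaled barrier, reads $\bar\gamma^{\star}_n(\bar z)=\bar P\bar z+(b_n)^{-1}\bar q-(b_n)^{-1}\bar\Sigma\,\nabla\barr\bigl(b_n\bar\gamma^{\star}_n(\bar z)\bigr)$: the barrier enters with weight $(b_n)^{-2}\to 0$, and $\bar P\mathcal{C}$ is a compact subset of the open positive orthant (the selected groups remain active throughout the prior support), so a short bootstrap gives $\bar\gamma^{\star}_n(\bar z)\to\bar P\bar z$ uniformly on $\mathcal{C}$, whence $\bar\gamma^{\star}_n(\bar z)\in\mathcal{K}_{\eta}$ for $n$ large with $\eta=\tfrac12\inf_{\bar z\in\mathcal{C}}\min_j(\bar P\bar z)_j>0$.

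For (ii) I would apply the implicit function theorem to the stationarity map. Its $\bar\gamma$-Jacobian is $\bar\Sigma^{-1}+\nabla^2\barr(b_n\bar\gamma^{\star}_n)\succcurlyeq\bar\Sigma^{-1}$, hence invertible with the operator norm of its inverse bounded by $\lambda_{\min}(\bar\Sigma^{-1})^{-1}$ uniformly in $n$ and $\bar z$; the cross-derivative in $\bar z$ is the fixed matrix $\bar\Sigma^{-1}\bar P$. This yields $\lVert D\bar\gamma^{\star}_n\rVert_{\text{op}}=O(1)$ uniformly. Differentiating once more, $\nabla^2[\bar\gamma^{\star}_n]_a$ picks up $\nabla^3\barr(b_n\bar\gamma^{\star}_n)$ contracted against $b_n\,D\bar\gamma^{\star}_n$; since $\nabla^3\barr$ decays like the cube of the reciprocal of the smallest coordinate, on $\mathcal{K}_{\eta}$ this contribution is $O(b_n\cdot b_n^{-3})=O(b_n^{-2})$, so $\lVert\nabla^2[\bar\gamma^{\star}_n]_a\rVert_{\text{op}}=O(1)$ (indeed $o(1)$) uniformly.

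Assembling the pieces: the first term of the chain-rule identity is $O(1)\cdot O(b_n^{-2})=O(b_n^{-2})$ and the second is $(b_n)^{-1}\cdot|\mathcal{G}_E|\cdot O(b_n^{-1})\cdot O(1)=O(b_n^{-2})$, both uniformly in $\bar z\in\mathcal{C}$; since $b_n\to\infty$, $\sup_{\bar z\in\mathcal{C}}\lVert\nabla^2_{\bar z}\psi_n(\bar z)\rVert_{\text{op}}\to 0$, which is the claim. The main obstacle is the uniform cone-containment in step (i): ruling out coordinates of $\bar\gamma^{\star}_n(\bar z)$ drifting to the boundary of the positive orthant as the barrier weight vanishes is exactly where the structural fact that the selected groups stay active on the prior's support is used; the remaining steps are routine implicit-function-theorem and chain-rule bookkeeping together with Lemma~\ref{lemma_jacobian_derivs}.
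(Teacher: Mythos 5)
Your proposal is correct and follows essentially the same route as the paper: the same chain-rule/implicit-function-theorem decomposition through the optimizer map (the paper writes $\nabla_{\bar z}\bar\gamma^{\star}_n=\bar P^{\intercal}\bar\Sigma^{-1}\mathcal{H}_n$ with $\mathcal{H}_n=(\bar\Sigma^{-1}+\nabla^2\barr(b_n\bar\gamma^{\star}_n))^{-1}$ and collects your second chain-rule term into a diagonal matrix $\mathcal{D}_n$ built from $\nabla^3\barr$), the same uniform bound $\lVert\mathcal{H}_n\rVert\leq\lVert\bar\Sigma\rVert$, the same appeal to Lemma~\ref{lemma_jacobian_derivs}, and the same uniform cone-containment argument via convergence of $\bar\gamma^{\star}_n$ to a limiting optimizer whose image of $\mathcal{C}$ is a compact subset of the positive orthant. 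The only cosmetic difference is that the paper names the limit as the constrained minimizer $\bar\gamma^{\star}_{\infty}(\bar z)$ rather than $\bar P\bar z$ directly; under the positivity both arguments implicitly assume, these coincide.
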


\begin{proof}
To proceed with the proof, we derive an expression for the Hessian of the (log) Jacobian.
Recall, $\bar\gamma^{\star}_n(\bar{z})$ is the optimizer of the convex conjugate of the function
$$(\bar\gamma)^{\tp}(\bar{\Sigma})^{-1} \bar\gamma/2 + (b_n)^{-2}\barr (b_n \bar\gamma)$$
evaluated at $\bar{\Sigma}^{-1}(\bar{P}\bar{z} + (b_n)^{-1}\bar{q})$.
By properties of the convex conjugate function and the chain rule,
\begin{equation} \nonumber
	\nabla_{\bar{z}} \bar\gamma^{\star}_n(\bar{z}) = \bar{P}^{\tp} \bar{\Sigma}^{-1} \mathcal{H}_n(\bar{z}),
\end{equation}
where $\mathcal{H}_n(\bar{z})$ denotes the inverse Hessian matrix $\left( \bar{\Sigma}^{-1} + \nabla^2 \barr (b_n \bar\gamma^{\star}_n(\bar{z})) \right)^{-1}$.
Here, we note that the (diagonal) Hessian of the barrier function is positive definite for all $\bar{z}$, which implies $\lVert \mathcal{H}_n(\bar{z}) \rVert_2 \leq \lVert \bar{\Sigma} \rVert_2$ uniformly over all $n$ and $\bar{z} \in \mathcal{C}$.
That is,
\begin{equation*}
        \begin{aligned}
	&  \nabla_{\bar{z}} (b_n)^{-2} \log J_{\phi} (b_n \bar\gamma^{\star}_n(\bar{z}) ; \mathcal{U})
	= (b_n)^{-1} \bar{P}^{\tp} \bar{\Sigma}^{-1} \mathcal{H}_n(\bar{z}) \left( \nabla_x \log J_{\phi}(x ; \mathcal{U}) \big\vert_{b_n \bar\gamma^{\star}_n(\bar{z})} \right).
	\end{aligned}
\end{equation*}
To compute the Hessian, we will use the following identity
\begin{equation} \label{mat_vec_diff}
\pdev{A(x)b(x)}{x} = \pdev{A(x)}{x} \times_2 b(x)^{\tp} + A(x) \pdev{b(x)}{x},
\end{equation}
where $A(x)b(x)$ denotes a matrix-vector product; the 3-dimensional tensor $\pdev{A(x)}{x}$ is summed across its second dimension in the first term of \eqref{mat_vec_diff}.
Observe that the element-wise derivatives of $\mathcal{H}_n$ with respect to the entries of $\bar\gamma^{\star}_n(\bar{z})$ are given by
\begin{equation}
	\frac{\partial}{\partial (\bar\gamma^{\star}_n)_{\ell}} \left[ \mathcal{H}_n(\bar{z}) \right]_{ij} = - b_n \nabla^3_{\ell\ell\ell} \barr(b_n \bar\gamma^{\star}_n(\bar{z})) \left[ \mathcal{H}_n(\bar{z}) \right]_{i\ell} \left[ \mathcal{H}_n(\bar{z}) \right]_{\ell j},
\end{equation}
involving the third derivatives of the barrier function. Then, plugging into the first term of \eqref{mat_vec_diff}, we get
\begin{align*}
    & - b_n \sum_j \nabla^3_{\ell\ell\ell} \barr(b_n \bar\gamma^{\star}_n(\bar{z})) \left[ \mathcal{H}_n(\bar{z}) \right]_{i\ell} \left[ \mathcal{H}_n(\bar{z}) \right]_{\ell j} \left[ \nabla J_{\phi}(b_n \bar\gamma^{\star}_n(\bar{z}); \mathcal{U}) \right]_j \\
    =& - b_n \nabla^3_{\ell\ell\ell} \barr(b_n \bar\gamma^{\star}_n(\bar{z})) \left[ \mathcal{H}_n(\bar{z}) \right]_{i\ell} \sum_j \left( \left[ \mathcal{H}_n(\bar{z}) \right]_{\ell j} \left[ \nabla J_{\phi}(b_n \bar\gamma^{\star}_n(\bar{z}); \mathcal{U}) \right]_j \right).
\end{align*}
Elementwise (row $i$ and column $\ell$), this matrix has the same entries as $\mathcal{H}_n(\bar{z})$, but with the $\ell$th column scaled by
$$
	- b_n \nabla^3_{\ell\ell\ell} \barr(b_n \bar\gamma^{\star}_n(\bar{z})) \left[ \mathcal{H}_n(\bar{z}) \nabla J_{\phi}(b_n \bar\gamma^{\star}_n(\bar{z}); \mathcal{U}) \right]_{\ell}.
$$
Let $b_n \mathcal{D}_n(\bar{z})$ be a diagonal matrix with these entries on its main diagonal.
Then the matrix in the first term of \eqref{mat_vec_diff} is given by $\mathcal{H}_n(\bar{z}) \mathcal{D}_n(\bar{z})$.
The second term of \eqref{mat_vec_diff} equals
\begin{equation*}
	b_n \mathcal{H}_n(\bar{z}) \left( \nabla^2_x \log J_{\phi}(x ; \mathcal{U}) \big\vert_{b_n \bar\gamma^{\star}_n(\bar{z})} \right).
\end{equation*}
Thus, $\nabla_{\bar{z}} (b_n)^{-2} \log J_{\phi} (b_n \bar\gamma^{\star}_n(\bar{z}) ; \mathcal{U})$ equals
\begin{equation*}
	\bar{P}^{\tp} \bar{\Sigma}^{-1} \mathcal{H}_n(\bar{z}) \left( \mathcal{D}_n(\bar{z}) + \nabla^2_x \log J_{\phi}(x ; \mathcal{U}) \big\vert_{b_n \bar\gamma^{\star}_n(\bar{z})} \right) \mathcal{H}_n(\bar{z}) \bar{\Sigma}^{-1} \bar{P}. \label{exp_jacobian_contribution}
\end{equation*}
Noting $\bar{P}^{\tp} \bar{\Sigma}^{-1} \mathcal{H}_n(\bar{z})=O(1)$ uniformly over $n$ and $\bar{z}$, bounding \eqref{exp_jacobian_contribution} in operator norm follows by uniformly bounding the largest diagonal element of $\mathcal{D}_n(\bar{z})$,
and the operator norm of $\nabla^2_x \log J_{\phi}(x ; \mathcal{U})\big\vert_{b_n \bar\gamma^{\star}_n(\bar{z})}$.

Bounds for both terms follow from an application of Lemma~\ref{lemma_jacobian_derivs}, along with the use of the observation that the third derivatives of the barrier function are decreasing.
To complete the proof, it therefore suffices to show uniformly over $\bar{z} \in \mathcal{C}$, and for sufficiently large $n$, all the entries of $\bar\gamma^{\star}_n(\bar{z})$ are bounded below by a constant $\eta$.
To this end, we define the limit of $\bar\gamma^{\star}_n(\bar{z})$ as $n \rightarrow \infty$:
\begin{equation}
	\bar\gamma^{\star}_{\infty}(\bar{z}) = \operatorname{argmin}_{\gamma > 0} \left\{ \frac{1}{2} \gamma^{\tp} \bar{\Sigma}^{-1} \gamma - \gamma^{\tp} \bar{\Sigma}^{-1} \bar{P} \bar{z} \right\}.
\end{equation}
The image of a compact set $\mathcal{C}$ under the continuous map $\bar\gamma^{\star}_{\infty}(\cdot)$ is compact and a subset of the positive orthant, which we call $\mathcal{C}'$.
Define
\begin{equation*}
	\eta = \frac{1}{2} \min \{ |x_j| : x \in \mathcal{C}' \} > 0.
\end{equation*}
Uniform convergence of $\bar\gamma^{\star}_n(\bar{z})$ to $\bar\gamma^{\star}_{\infty}(\bar{z})$ on a compact domain leads us to conclude
$$
\min_j | \left[ \bar\gamma^{\star}_n(\bar{z}) \right]_j | > \eta > 0,
$$
for all $\bar{z} \in \mathcal{C}$ and sufficiently large $n$.
\end{proof}

\section{Supplementary details (Section \ref{sec:empirical-analysis})}
\label{appendix:empirical-analysis}

We outline additional details involving the parameters in our numerical experiments below.
For the simulation instances we generate in the atomic and balanced case analyses, each active coefficient has a random sign with magnitude $\sqrt{2n^{-1} t  \log{p}}$.
We let $t = 0.2$ for the low SNR setting, $t = 0.5$ for the moderate SNR setting, and $t = 1.5$ for the high SNR setting.
In the heterogeneous scenario, the first predictor in the smallest active group has magnitude $\sqrt{2n^{-1} t\log{p}}/|T|$ and the last predictor in the largest group has magnitude $\sqrt{2n^{-1} t \log{p}}$ with magnitudes linearly interpolated for intermediate active coefficients; $T$ is the number of signal variables in the instance.
Each coefficient assumes a random sign and we set $t$ for the low, medium, and high SNR as our previous cases.

For the selection step, we set the grouped penalty weights:
$$\lambda_g = \lambda \rho \sigma \sqrt{2 \log{p} \dfrac{ \lvert g \rvert}{ \bar{g} }}$$
for solving the Group LASSO in both the randomized (``Selection-informed") and non-randomized formulations (``Naive" and "Split");
$\lvert g \rvert$ is the number of features in group $g$, and $\bar{g}$ is the floor of the average group size.
In the choice of the penalty weights, $\rho=r$ the proportion of data used for the query for ``Split" and takes the value $1$ for ``Selection-informed" and ``Naive" when we solve the Group LASSO and the overlapping Group LASSO and $\rho=\sqrt{r}$ for ``Split" and $1$ for the other methods when we solve the standardized Group LASSO.
Clearly, in our balanced settings, we impose a uniform penalty across all groups, while the penalties for the heterogeneous settings scale with the size of our groups.

Addressing selection-informed inference post the Group LASSO, the barrier function used in our optimization problem (see Theorem \ref{thm2}) is given by
$$\barr (\gamma) = \textstyle\sum_{g\in \mathcal{G}_{E}} \log(1+ (\gamma_g)^{-1}).$$
Observe, this choice of penalty assigns higher preference to optimizing variables away from the boundary of the selection region $[0,\infty)^{\mathbb{R}^{|\mathcal{G}_E|}}$.
For executing the sampler, we set the initial draw as follows
$$\beta^{(0)} = \widehat{\beta}_E, 
$$
the refitted least squares estimate in our setup.
Completing our specifications, the inferential results we report in Section \ref{sec:empirical-analysis} are based upon $1500$ draws of the Langevin sampler. We discard the first $100$ samples as burn-in retaining the remainder for uncertainty estimation.
The code for our experiments in the paper is available here: \url{https://github.com/snigdhagit/selective-inference/tree/group_LASSO/selection/randomized}.

\subsection{Supplementary details for HCP analysis}
\label{appendix:HCP}
The ``preprocessed'' version of the dataset used in our analysis, which had undergone the processing stream described in \citet{glasserMinimalPreprocessingPipelines2013},  was downloaded from the HCP's ConnectomeDB platform \citep{marcusInformaticsDataMining2011}.
The fMRI data comprises time courses at many ``voxels'' throughout the brain that are typically each a few millimeters cubed in volume.
This data was preprocessed as described in \citet{sripadaBasicUnitsInterIndividual2019}, excluding the steps that are specific to resting state processing.
While the HCP data includes a variety of imaging modalities, we utilize both behavioral and functional magnetic resonance imaging (fMRI) measurements recorded from a cognitive task, namely the ``N-back'' task \citep{barchFunctionHumanConnectome2013}.

In the the ``N-back'' task, participants are presented with a sequence of pictures about which they make judgments, and their accuracy and brain activity is recorded while they perform the task.
There are two different conditions of principle interest, each of which are presented in blocks.
In the 0-back condition, participants simply judge whether each item is the same as the item presented at the beginning of the block.
In the 2-back condition, participants judge whether each item is the same as the item presented two trials previous.
As may be intuitively clear, the 2-back condition is appreciably more demanding with respect to working memory.
A common approach for analyzing fMRI data involves the construction of ``contrasts.''
Measuring activity during the 2-back condition would likely indicate activity related to working memory, but it would also include activity indicating many other phenomena such as visual processing, motor activation in order to press buttons to indicate judgments, etc.
These phenomena are not of primary interest, so we consider a contrast formed by subtracting the activation during the 0-back condition from the activation during the 2-back condition.
This 2-back minus 0-back contrast is standard for the N-back task \citep{barchFunctionHumanConnectome2013}.
Contrasts were obtained using in-house processing scripts that use SPM12.
The standardized accuracy of each participant during this task will be our target of prediction $y$ and we will use the contrast as the predictor $X$.
As a preprocessing step, columns of the design matrix $X$ are adjusted to have mean $0$ and unit norm.

Using the contrast value from each voxel results in very high dimensional data, and analysis is sometimes instead performed at the level of ``regions of interest'' (ROIs).
This provides a means of effectively downsampling the data by aggregating information at each ROI, which is a spatially contiguous group of voxels.
These ROIs can be defined \emph{a priori} according to one of a variety of atlases, and this aids interpretability and enables comparisons of findings across studies that use the same atlas.
We use ROIs as defined by the ``Power Parcellation'' \citep{powerFunctionalNetworkOrganization2011}.
In addition to being a broadly popular atlas, the Power Parcellation is also noteworthy in that it assigns each of its $264$ ROIs to a ``brain system.''
The spatial coordinates of the ROIs, as well as their assignment to brain systems, are described in \citet{powerFunctionalNetworkOrganization2011}.
The MarsBar utility \citep{marsbar} was used to extract contrast values for each of these ROIs.
Of the $264$ ROIs, $236$ are assigned to one of 13 distinct, named brain systems while the remainder are simply labeled ``unknown'' and in our analysis we use only these $236$ positively labeled ROIs as predictors in our regression.
Because each of these brain systems is putatively believed to underlie a discrete set of functions (e.g., because they typically coactivate for a given type of task), we partition our predictors into groups by brain system label, and then use the Group LASSO to predict accuracy on the N-back task using data from these $236$ ROIs.
While inference may be performed at the level of individual ROIs, it is also useful to interrogate effects at a system-wide level.
Further averaging all of the ROIs within a single system may be too coarse and obscure useful signal, so the Group LASSO provides a means of allowing each ROI to make a distinct predictive contribution while still performing selection at the interpretable level of entire brain systems.
Because in this application $n > p$, we estimate $\hat{\sigma}^2 = \left( n - p \right)^{-1} \left\lVert y - X \left( X^{\intercal} X \right)^{-1} X^{\intercal} y \right\rVert_2^2$.
We use the same value for $\hat{\sigma}^2$ for the intervals obtained via data splitting.
We set $\lambda_g$ for each group as described in \ref{appendix:empirical-analysis} and set the randomization level $\tau$ to satisfy \eqref{rand:level} at varying levels of $r$.
Choosing $\lambda = 1$ (as we did for the simulation studies) yields a fully dense model, so we increase to $\lambda = 10$ which selects just a single group.

\subsection{Supplementary numerical comparison}
\label{appendix:new:num:comparison}

We conduct an additional numerical experiment to compare the methods of \citet{yang2016selective} and also \citet{loftus2015selective}.
    Specifically, we consider one instance of the simulation settings considered in \citet{yang2016selective} where we draw $X \in \mathbb{R}^{500 \times 500}$ with entries independently and identically distributed as $\mathcal{N} \left(0, \frac{1}{500}  \right)$. 
    The $p = 500$ features are arranged into 50 contiguous groups of 10 features each.
    The first 10 groups (i.e., first 50 features) are all active with associated coefficient $1.5$ and the remainder are inactive with associated coefficients $0$, i.e., $\beta = \begin{bmatrix} 1.5 \cdot 1_{50}^{\intercal} & 0_{450}^{\intercal} \end{bmatrix}^{\intercal}$.
    The response $y$ is then generated as $\mathcal{N} \left( \mu, 1 \right)$, where $\mu = X \beta$.
     We generate a single realization of the data in this setting and then apply the methods of \citet{yang2016selective}, \citet{loftus2015selective}, and our method conducted with $5000$ posterior samples (with $100$ samples discarded as burn-in).
   Findings in this instance gives us an opportunity to note the extent of agreement between all the three methods.

    For all methods, the first stage is automatically selecting groups using: (i) the Group LASSO (for \citet{yang2016selective}), (ii) the randomized Group LASSO (for our method) in \eqref{glasso}, or (iii) forward stage-wise selection (for \citet{loftus2015selective}).
    We use $\lambda = 4$ for the approach of \citet{yang2016selective} which yields the selection of 11 active groups (i.e., 110 features), and we then tune parameters for the other two methods to select the same number of active features.
    Once the model has been selected, we proceed to inference with $\alpha = 0.1$.
    
    For the methods by \citet{yang2016selective} and \citet{loftus2015selective}, the inferential target for each selected group $g$ in $E$ is an overall group effect $\mu_g$, which we review in more detail under Section \ref{sec:overall-effects}.
    We apply our methods to construct credible intervals for the individual components of the coefficient vector for each group; inference for individual effects in the selected groups is not addressed by the previous two methods.
    As described in Section \ref{sec:empirical-analysis}, sampling from the selection-informed posterior with a diffuse (non-informative) prior yields credible intervals for the individual effects with ``good" frequentist properties. 
    Note, we do not pursue inference for the overall group effect---a (non-linear) function of the selection-informed parameters $\beta_E$---using our Bayesian methods.
    This is because our focus is on the extent of agreement between all three methods in terms of their frequentist properties.
     Specially, ``good frequentist properties" for $\mu_g$ will also depend on a choice of prior for this parameter; in this case, a non-informative prior for $\beta_E$ might not be non-informative for $\mu_g$ for $g\in \mathcal{G}_E$.

    We summarize our results in Table \ref{tab:method-comparison}.
    Each of the three methods selects all of the 5 active groups and 6 additional inactive groups, although the identities of the selected inactive groups differ slightly across the methods due to differences in the query (i.e., Group LASSO vs randomized Group LASSO vs forward stage-wise selection).
The method of \citet{loftus2015selective} yields no significant p-values at $\alpha = 0.10$: it makes no Type I errors, but $5$ Type II errors.
The method of \citet{yang2016selective} correctly rejects the null for 4 of 5 active groups and only makes a Type I error for 1 of 6 six inactive groups.
For our method, we report the component-wise coverage of our marginal credible intervals in each group. 
Empirically, we appear to have coverage that does not appreciably deviate from nominal (i.e., $90\%$).
Coverage for active coefficients is slightly better at $88\%$ as opposed to coverage for inactive coefficients at $80\%$, although these may just be chance fluctuations.
 In summary, the test by \citet{loftus2015selective} seems more conservative than the remaining two methods. 
 Whereas, coverage for the individual variable effects in each active group by our method seem to be consistent with the lower bounds for the overall group effect by \citet{yang2016selective}. 
    \begin{table}[h]
      \centering
      \begin{tabular}{rrrrrr}
        Group \# & \(\mu_g\) & Loftus \$p\$-value & Yang LCB & Yang \$p\$-value & Ours (Coverage)\\
        \hline
        1 & 4.49 & 0.34 & 2.39 & 0.00 & 1.0\\
        2 & 4.01 & 0.14 & 1.11 & 0.03 & 0.9\\
        3 & 4.19 & 0.42 & 1.01 & 0.03 & 0.9\\
        4 & 4.07 & 0.32 & -2.68 & 0.33 & 0.8\\
        5 & 4.20 & 0.23 & 3.05 & 0.00 & 0.8\\
        8 & 0 & 0.55 &  &  & \\
        11 & 0 &  &  &  & 0.7\\
        14 & 0 &  &  &  & 0.7\\
        17 & 0 & 0.77 &  &  & \\
        18 & 0 &  & -9.99 & 0.81 & \\
        20 & 0 & 0.54 & -0.95 & 0.21 & 0.9\\
        28 & 0 &  & -9.54 & 0.66 & 1.0\\
        33 & 0 & 0.78 & 1.74 & 0.02 & \\
        36 & 0 & 0.83 & -4.08 & 0.45 & 0.7\\
        39 & 0 &  & -3.05 & 0.50 & 0.8\\
        46 & 0 & 0.83 &  &  & \\
      \end{tabular}
      \caption{
        Comparison of Inferential Results on Single Realization of Synthetic Data using Methods of \citet{loftus2015selective}, \citet{yang2016selective}, and the proposed method.
        Blanks indicate that the associated method did not select the variable group depicted in the corresponding row.
        LCB signifies lower confidence bound.
        Coverage (where applicable) was assessed using $90\%$ credible intervals.
        ``Ours'' refers to the Selection-informed method discussed in the manuscript.
      }
      \label{tab:method-comparison}
    \end{table}

\end{document}